\documentclass[11pt]{article}
\usepackage{paper}

\title{Online Prediction in Sub-linear Space}
\author{Binghui Peng \\ Columbia University \\ \texttt{bp2601@columbia.edu}
\and Fred Zhang \\ UC Berkeley\footnote{Part of work done while interning at Google.} \\ \texttt{z0@berkeley.edu}}
\date{}

\makeatletter

\newcommand{\TODO}[1]{\textcolor{red}{[TODO\@ifnotempty{#1}{: #1}]}}
\newcommand{\fred}[1]{\textcolor{purple}{[F\@ifnotempty{#1}{: #1}]}}
\makeatother

\begin{document}

\maketitle
\begin{abstract}
We  provide the first sub-linear space and sub-linear regret algorithm for online learning with expert advice (against an oblivious adversary), addressing an open question raised recently by Srinivas, Woodruff, Xu and Zhou (STOC 2022). We also demonstrate a separation between oblivious and (strong) adaptive adversaries by proving a linear memory lower bound of any sub-linear regret algorithm against an adaptive adversary.

Our algorithm is based on a novel pool selection procedure that bypasses the traditional wisdom of leader selection for online learning, and a generic reduction that transforms any weakly sub-linear regret $o(T)$ algorithm to $T^{1-\alpha}$ regret algorithm, which may be of independent interest. Our lower bound utilizes the connection of no-regret learning and equilibrium computation in zero-sum games, leading to a proof of a strong lower bound against an adaptive adversary.
\end{abstract}

\newpage
\section{Introduction}\label{sec:intro}
Online prediction with expert advice is a fundamental task in sequential decision making and is the backbone of   optimization \cite{hazan2016introduction},  bandit learning \cite{lattimore2020bandit}, control theory \cite{doyle2013feedback}, among many other fields.
The problem is usually formulated as an online forecasting process that repeats for $T$ days.
 On each day, the algorithm is asked to provide a prediction of an outcome, given the advice from $n$ experts on the current day and all the previous information. 
After announcing a decision, it then receives feedback on the outcome and the loss of the expert predictions,  normalized to $[0, 1]^{n}$.
The objective in online learning is typically to compete against the best expert in hindsight, and the performance of an algorithm is measured in terms of \textit{regret}, which is defined as the additional cost of the algorithm over the best expert.

The celebrated Multiplicative Weights Update (MWU) method solves this online prediction problem with an optimal $O\left(\sqrt{T\log n}\right)$ regret \cite{arora2012multiplicative}.
Similar regret bounds have also been attained by weighted majority vote \cite{littlestone1989weighted}, follow-the-perturbed-leader (FTPL) scheme \cite{kalai2005efficient} and online mirror descent (OMD) \cite{hazan2016introduction}.
Besides the original motivation of online prediction, MWU and the follow-up variants find broader applications in algorithmic design, game theory and machine learning, with notable examples including efficient algorithms for linear programming and semi-definite programming \cite{garber2016sublinear}, approximate solution for max flow \cite{christiano2011electrical}, equilibrium computation \cite{cesa2006prediction, freund1999adaptive} and boosting \cite{freund1997decision}.


Despite a long history of research since 80's, the question of {\em space complexity} has been little explored in the online learning literature. All existing approaches explicitly track the cumulative cost of every expert and follow the advice of a (regularized or perturbed) leading expert, thus requiring a   memory of size $\Omega(n)$.
Motivated by this lack of understanding, a very recent work by Srinivas, Woodruff, Xu and Zhou \cite{stoc22} initiates the study of memory complexity of expert learning. 
They prove a lower bound of  ${\Omega}(\sqrt{nT/S})$ regret for any algorithm using $O(S)$ memory, even when the loss sequence are i.i.d distributed.
This implies that $\Omega(n)$ space is necessary to get the {\em optimal} $\sqrt{T}$ regret bound.
On the positive side, they   show that sub-linear regret is achievable in sub-linear space,   but only in random-order streams or when the best expert itself incurs sub-linear loss.
These structural assumptions  are arguably rather strong. 
This leaves open the  general question of attaining sub-linear regret in a memory-efficient fashion.

In this work, we revisit the classic online learning problem with an eye towards an better understanding its space complexity. 
On the algorithmic front, we ask, for   constant $\alpha, \beta \in (0, 1)$:
\begin{center}
  \textit{  Can we achieve sub-linear $O(T^{1-\alpha})$ regret  in online learning, using sub-linear $O(n^{\beta})$ space?}
\end{center}
We resolve this open question by designing the first sub-linear space online learning algorithm, in general worst-case settings.
The algorithm assumes an {\em oblivious adversary} that fixes the loss sequence in advance. 
To complement this,  we prove a lower bound showing that against an (strong) adaptive adversary, no sub-linear space algorithm can achieve sub-linear regret.

\subsection{Our results} 

To   introduce the results, we specify our problem setting (see \cref{sec:prelim} for a formal description). 
We consider a general setup of the expert problem  with $T$ days and $n$ experts. 
On each day $t \in [T]$, the algorithm decides to play one of the experts $i_t \in [n]$. Subsequently, nature reveals the loss $\ell_t(i)$ for all $i\in [n]$ and the algorithm incurs a loss of $\ell_t(i_t)$.  We assume that $\ell_t(i) \in [0,1]$ for all $t\in [T]$ and $i\in[n]$.  
The goal is to design an algorithm such that the (total) regret 
$\textsc{Regret}(T) = \E[\sum_{t\in [T]}\ell_t(i_t)] - \min_{i^{*} \in [n]} \sum_{t\in [T]}\ell_t(i^{*})$
is sub-linear in $T$.\footnote{We call $T^{1-\alpha}$ sub-linear regret (for constant $\alpha$), and $o(T)$ weakly sub-linear.  We use $\widetilde O(\cdot)$ to hide polylogarithmic factors in $n, T$ and $O_{n}(\cdot)$ to hide polynomial dependence on $n$. High probability refers to probability at least $1-1/\poly(T)$.}
An oblivious adversary  (randomly) chooses the loss vectors independent of the algorithm's decisions. 


Our main result is the following.

\begin{theorem}[Informal, see \autoref{thm:sublinear}]
\label{thm:intro-main}
Let $n, T \geq 1$, $\delta \in (0, 1]$. There exists an online learning algorithm that achieves a total regret of $O_n\left( T^{\frac{2}{2 +\delta}}\right)$ with high probability against an oblivious adversary, using $\widetilde O(n^{\delta} )$ memory.
\end{theorem}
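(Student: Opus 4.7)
The plan is to establish \cref{thm:intro-main} in two modular steps, aligned with the two algorithmic contributions announced in the abstract: first, design a weakly sub-linear regret algorithm using $\widetilde O(n^\delta)$ memory; second, apply a generic boosting reduction that amplifies any $o(T)$-regret algorithm into one with the target rate $T^{1-\alpha}$. The target exponent $1-\alpha = 2/(2+\delta)$ should emerge naturally when the reduction's parameters are calibrated against $\delta$.

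For the first step, I would maintain a dynamic pool $P$ of at most $\widetilde O(n^\delta)$ experts and, within blocks of rounds, run a standard regret-minimizer (say, Hedge) restricted to $P$, paying $O(\sqrt{L \log |P|})$ regret against the pool's best on a block of length $L$. Crucially, every round reveals the losses of all $n$ experts, so even outside the pool we can detect whether some unobserved expert is performing well on the current round. The pool-selection subroutine uses this side information to refresh $P$ between blocks: it evicts consistent underperformers and swaps in fresh candidates, either by random sampling or by screening on observed instantaneous losses (note the paper explicitly bypasses classical leader selection, suggesting the refresh rule avoids tracking long-term cumulative losses outside the pool). The analysis should show that only $o(T)$ rounds lie in blocks where the best expert is absent from $P$, yielding the weakly sub-linear guarantee.

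For the second step, I would use a phased explore-then-commit scheme built from the weak algorithm $A$. Run $A$ on an initial segment of length $T_0$ to identify a distribution whose best-weighted expert has loss close to the optimum on that segment, then play that expert (or a small fixed set) in an exploitation phase. Balancing exploration cost $T_0$ against exploitation cost $(T - T_0) \cdot R_A(T_0)/T_0$ gives the target rate. Because the adversary is oblivious (so the input is a fixed but unknown sequence), the leader on the initial segment may diverge from the global best; this can be mitigated either by restarting in geometrically scheduled phases or by running $A$ concurrently as a periodic ``audit'' against the committed expert.

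The main obstacle will be the weak-algorithm side: showing that the pool-selection procedure, which only ever maintains cumulative history for a vanishing fraction of experts, nonetheless attains $o(T)$ regret on worst-case oblivious sequences. A natural counter-scenario is one that hides a single near-optimal expert behind a sea of mediocre ones, so that a uniform-random pool almost never contains it. The refresh rule must provably adapt to such configurations without explicit tracking — I expect the argument to hinge on a potential function over the pool that decreases whenever the pool omits a substantially better outside expert, together with a counting argument bounding the number of ``bad'' blocks. Calibrating pool size, block length, and reduction parameters to land exactly on $T^{2/(2+\delta)}$ with hidden $\poly(n)$ factors will require careful bookkeeping but should be routine once the core potential argument is in place.
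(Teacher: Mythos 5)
Your high-level decomposition (a weak $o(T)$-regret pool algorithm plus an amplification step) matches the paper's announced plan, but both halves have genuine gaps. For the weak algorithm, the entire difficulty lives in the pool-maintenance rule, and you leave it unspecified: "evict consistent underperformers and swap in fresh candidates" is exactly the naive rule the paper shows is insufficient (a newly sampled expert can transiently outperform the eventual best expert and kick it out, and re-sampling it is too slow under the space budget). The paper's actual rule is asymmetric in age: an expert $i$ can only be evicted by an \emph{older} pool member $j$ whose average loss over $i$'s own residency interval is within $\eps$ of $i$'s, and the pool size is then bounded by $O(\eps^{-1}\log T)$ via a potential combining $\log(\text{interval length})$ and average loss (\cref{lem:loss-length}, \cref{lem:pool-size}). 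Moreover, your target claim --- that the best expert is in the pool for all but $o(T)$ rounds --- is not what is (or can be) proven; the paper instead proves a weaker covering property via a stochastic process: outside $\widetilde O(\eps^2 n)$ unlucky epochs there is \emph{some} pool member that is $\eps$-competitive with $i^*$ on the relevant interval (\cref{lem:cover}), which is all MWU-over-the-pool needs. None of this machinery appears in your sketch.

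The second half is where your route would actually fail. An explore-then-commit (or phased restart) reduction cannot amplify an $\eps T$-regret algorithm to $T^{1-\alpha}$ regret: with $R_A(T_0)=\eps T_0$, your own balance gives exploitation cost $(T-T_0)\cdot R_A(T_0)/T_0 = \Omega(\eps T)$, so the rate does not improve at all; and committing to a prefix leader is doubly problematic, since against a worst-case oblivious sequence the prefix leader can be arbitrarily bad later (coarse follow-the-leader has linear regret), and identifying even an approximately best expert from the exploration phase requires $\Omega(n)$ memory, as the paper notes. The paper's reduction is of a different nature: it is a recursive \emph{width reduction}. At level $k$ each original expert $i$ is preconditioned by merging it (via MWU) with the level-$(k-1)$ algorithm, so its truncated loss relative to that baseline lies in a window of width $O(\eps^{k-1}\operatorname{polylog})$; the baseline scheme is then rerun on these merged experts with eviction threshold $\eps^k$, and an induction over $K\approx \log T/\log n$ levels drives the average regret down geometrically to yield $\widetilde O(n^2 T^{2/(2+\delta)})$ (\cref{prop:regret-full}, \cref{thm:sublinear}). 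Without some mechanism that shrinks the effective loss scale between stages, your reduction cannot reach the claimed exponent.
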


The theorem provides a general memory-regret trade-off. To give some concrete examples:
\begin{itemize}
    \item We can get $ O_n\left(T^{4/5}\right)$   regret in $\widetilde O\left(\sqrt{n}\right)$ memory (by setting $\delta= 0.5$).  
    \item We can get $O_n (T^{0.67})$ regret in $\widetilde O\left({n}^{0.99}\right)$ memory (by setting $\delta= 0.99$).
\end{itemize}

On a conceptual level%
, our algorithm breaks the following traditional wisdom in online prediction. 
We recall that the classic regret-minimization algorithms, such as MWU and FTPL, all track the loss of {\em all} experts and follow a (regularized or perturbed) leader.
 Alternatively, one may assume an oracle that outputs the leading expert (a.k.a.\ oracle-efficient online learning \cite{hazan2016computational,dudik2020oracle}).
Therefore, the task of identifying a leading expert is usually believed to be a necessary sub-routine of regret minimization.
Indeed, with $\Omega(n)$   memory, identifying a leader and maintaining its cumulative loss is trivial, while in principle online prediction is a much harder task. 
Perhaps surprisingly, in the sub-linear space regime, our results suggest the opposite. 
It is   known that under space constraint, one cannot identify a leader, or even constant approximate its loss \cite{stoc22}. 
On the other hand, \cref{thm:intro-main} implies that achieving sub-linear regret is possible with arbitrarily small   polynomial  space. 
Therefore, conceptually our work shows that {\em online prediction is easier than and does not require  tracking   the leader, in the low space regime}.

In the case that only polylogarithmic space is allowed, we also give a weakly sub-linear regret algorithm. In particular, by setting $\eps = 1/\poly\log(nT)$ in the following theorem, the algorithm achieves $o(T)$ regret in $\widetilde O(1)$ memory (for sufficiently large $T$).
\begin{theorem}[Informal, see \cref{thm:weak-sublinear}]
\label{thm:intro-weak-sublinear}
Let $n \geq 1$, $\eps\in (1/\sqrt{n},1/2)$ and $T\geq \Omega(\eps^2 n)$. There exists an online learning algorithm that achieves a total regret of $\widetilde{O}\left(\eps T + T^{2/3}\left(\eps^2n\right)^{1/3}\right)$ with high probability against an oblivious adversary, using $\widetilde{O}(\eps^{-2})$ memory.
\end{theorem}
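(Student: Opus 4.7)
My approach partitions the $T$ rounds into $K = T/L$ epochs of length $L$ (to be chosen). At the start of each epoch $j$, draw a fresh pool $S_j \subseteq [n]$ of size $m = \widetilde{O}(\eps^{-2})$ uniformly at random; inside the epoch, run the standard Hedge/MWU subroutine over the $m$ experts in $S_j$. Storing the pool indices and their MWU weights uses $\widetilde{O}(m) = \widetilde{O}(\eps^{-2})$ memory, as desired.

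The total regret splits into two contributions:
\begin{align*}
\textsc{Regret}(T) = \underbrace{\sum_{j=1}^{K} \Big(\sum_{t \in E_j} \ell_t(i_t) - \min_{i \in S_j} \sum_{t \in E_j} \ell_t(i)\Big)}_{\text{(A) within-epoch MWU regret}} + \underbrace{\sum_{j} \min_{i \in S_j} \sum_{t \in E_j} \ell_t(i) - \sum_{t} \ell_t(i^*)}_{\text{(B) pool-approximation error}},
\end{align*}
where $E_j$ denotes the set of rounds in epoch $j$. Standard Hedge analysis bounds (A) by $O(\sqrt{L \log m})$ per epoch, so $\widetilde{O}(T/\sqrt{L})$ in total. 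Setting $L = T^{2/3}(\eps^2 n)^{-2/3}$ makes (A) exactly $\widetilde{O}(T^{2/3}(\eps^2 n)^{1/3})$, matching the second regret term; the constraint $L \ge 1$ yields $T \ge \Omega(\eps^2 n)$, matching the theorem's hypothesis. For (B), the target is $\widetilde{O}(\eps T)$. The intuition I would push is that among $\widetilde{O}(\eps^{-2})$ uniform samples, with high probability at least one expert lies in the top-$\eps$ fraction (by per-epoch loss); aggregating over the $K$ independent epochs with a concentration/union-bound argument should give the $\eps T$ bound, with the extra $1/\eps$ factor in the pool size supplying the slack needed to apply a union bound over the $K$ epochs.

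\paragraph{Main obstacle.}
The crux is bounding (B). The naive sampling argument breaks when the best expert $i^*$ is ``isolated'' --- for instance, $\ell_t(i^*) \equiv 0$ and $\ell_t(i) \equiv 1$ for every $i \neq i^*$. A uniform pool of size $m \ll n$ then misses $i^*$ with probability $1 - m/n$, incurring $\Omega(L)$ loss on each missed epoch and $\Omega(T)$ loss in total --- far worse than $\eps T$. To salvage the bound the algorithm must either (i) exploit the full-information feedback (recall that \emph{all} coordinates of $\ell_t$ are revealed on round $t$) to adaptively detect and enrol good experts into the pool, or (ii) use a more refined sampling rule that biases the pool toward experts currently performing well, while still respecting the $\widetilde{O}(\eps^{-2})$ memory budget. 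The abstract flags precisely this step --- a ``novel pool selection procedure that bypasses the traditional wisdom of leader selection'' --- so I would expect the actual algorithm to replace uniform sampling by such an adaptive rule, while keeping the epoch/MWU scaffolding and the parameter balance above. Designing this adaptive pool-selection procedure and controlling its failure mode across epochs is the step I anticipate will require the most care.
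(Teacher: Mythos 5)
Your scaffolding (epochs of length $B$, MWU over a small pool inside each epoch, and the balance $B=(T/\eps^2 n)^{2/3}$ giving the $T^{2/3}(\eps^2 n)^{1/3}$ term) matches the paper, and you have correctly located the crux in term (B). But the proposal stops exactly at the missing idea, and the guess you offer for how to fill it — adaptive or biased sampling toward currently good experts — is not what the paper does and is not needed. The paper keeps \emph{uniform} sampling of $\eps^{-2}$ fresh experts per epoch; the resolution of your ``isolated expert'' obstruction is that the pool is \emph{not} redrawn each epoch. Sampled experts persist across epochs, and an expert $i$ is evicted only by an \emph{older} pool member $j$ whose average loss over $i$'s entire lifetime in the pool is at most $\eps$ worse than $i$'s (rule \eqref{eqn:evict-rule}). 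In your hard instance ($\ell_t(i^*)\equiv 0$, all others $1$), once $i^*$ is sampled it can never be evicted, so only the epochs before that sampling are lost; a geometric argument (Lemma~\ref{lem:unluck}) shows there are at most $\widetilde O(\eps^2 n)$ such ``unlucky'' epochs with high probability, contributing the $\eps^2 n B$ regret term that your decomposition is silent about. Your fresh-pool-per-epoch variant genuinely fails on that instance, since it discards $i^*$ even after finding it, incurring $\Omega(T)$ regret.

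Two further ingredients are needed and absent from the proposal. First, with a persistent pool you must bound the pool size without any explicit cap: the paper's ``loss vs.\ length'' Lemma~\ref{lem:loss-length} plus the potential $\Phi(\tau)=2\log|\Gamma_{t,i_\tau}|+L_{i_\tau,i_\tau}$ give a pool of size $O(\eps^{-1}\log T)$, hence $\widetilde O(\eps^{-2})$ words since pairwise interval losses must be stored. Second, in the ``lucky'' epochs one must exhibit an expert in the pool competitive with $i^*$ even when $i^*$ itself was never sampled; this is the stochastic-process/cover argument (Lemma~\ref{lem:cover}), which crucially uses obliviousness of the adversary — the eviction time of a hypothetically sampled $i^*$ depends only on older pool members and the fixed loss sequence, so the older expert that would have evicted it serves as the competitive surrogate, losing at most $\eps$ per epoch. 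Without these pieces your term (B) cannot be bounded by $\widetilde O(\eps T)+\widetilde O(\eps^2 nB)$, so the proposal as written has a genuine gap at its central step.
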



Our   algorithms assume   an oblivious adversary, a standard model in online learning as well as its cousin fields like online, streaming and dynamic algorithms \cite{karp1990optimal,alon1999space}.
In a variety of applications,  however, algorithms are required to work   against  adaptive adversaries  \cite{motwani1995randomized, slivkins2019introduction,wajc2020rounding, ben2022framework, alon2021adversarial}.

To complement our algorithmic results, we consider a strong adaptive adversary model, where the costs may be chosen adversarially that depend on the algorithm's prior randomness and decisions; see \cref{def:adaptive} for a formal definition.   
In this setting, MWU still achieves $\widetilde{O}(\sqrt{T})$ regret but uses  linear memory. 
We prove that $\Omega(n)$ memory is indeed necessary to obtain any {sub-linear} regret at all:
\begin{theorem}[Informal, see \cref{thm:lower}]
\label{thm:intro-lower}
Let $0 < \eps < 1/40$. Any algorithm that achieves $\eps T$ total regret against a strong adaptive adversary requires at least $\Omega\left(\min\left\{\eps^{-1}\log_2 n, n\right\}\right)$ bits of memory.
\end{theorem}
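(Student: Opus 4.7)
The plan is to leverage the classical Freund--Schapire correspondence between no-regret learning and minimax equilibria of zero-sum games. I will build an adaptive adversary that is itself a no-regret learner (e.g.\ Multiplicative Weights, whose own memory is \emph{not} charged, since only $A$'s memory is constrained) playing the column side of a carefully chosen zero-sum game $G_X$ with a random hidden parameter $X$. The usual dual analysis then says that if $A$ achieves $\eps T$ regret, its empirical play distribution $\bar x_T$ is an $O(\eps)$-approximate minimax row strategy of $G_X$. The task therefore reduces to two subgoals: (i) design $G_X$ so that any $\eps$-approximate minimax row strategy must encode $\Omega(\eps^{-1}\log n)$ bits about $X$; and (ii) convert this into an $S$-bit memory lower bound on $A$ via an information-theoretic argument.

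For subgoal (i), I would partition $[n]$ into $K=\Theta(1/\eps)$ blocks $B_1,\dots,B_K$ of size $n/K$ each and plant a single hidden ``good'' expert $i_k^*\in B_k$ per block, drawing $X=(i_1^*,\dots,i_K^*)$ uniformly at random so that $H(X)=K\log(n/K)=\Omega(\eps^{-1}\log n)$. The column player's action set is structured --- for instance via indicator-style columns that reward the planted experts inside each block together with ``fairness'' columns that heavily penalize imbalanced play across blocks --- so that any $\eps$-approximate minimax row strategy must place mass $\Omega(1/K)$ on \emph{every} $i_k^*$, allowing $X$ to be essentially read off from $\bar x_T$.

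For subgoal (ii), I would apply Yao's principle to fix a deterministic $A$ against the random $G_X$ and then argue round-by-round: at each time $t$, $A$'s play $i_t$ is a deterministic function of the $S$-bit memory $M_{t-1}$, while the adversary's loss vector $\ell_t$ is a function of $X$ and $A$'s past plays, so the information about $X$ carried forward from one round to the next is capped by $|M_t|\le S$. A Fano-style inequality applied to the ``recovery of $X$ from $\bar x_T$'' conclusion of step (i) forces $\Omega(\eps^{-1}\log n)$ bits to flow through this bottleneck, yielding $S=\Omega(\eps^{-1}\log n)$. For the $\Omega(n)$ branch (active when $\eps^{-1}\log n \ge n$), I would use a separate needle-in-a-haystack adversary that plants a random $0$-loss expert $i^*$ among $\tfrac12$-losses across phases whose ``good'' expert changes adversarially over time; a standard INDEX-style one-way communication reduction then extracts the $\Omega(n)$ memory bound.

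The main obstacle is subgoal (i): forcing \emph{every} $\eps$-approximate minimax strategy to spread mass across all $K$ planted coordinates. A Lipton--Markakis--Mehta support bound only yields $O(\eps^{-2}\log n)$ and is off by a $1/\eps$ factor, so the column structure must rule out ``cheating'' strategies that concentrate on a constant fraction of blocks at the cost of only $O(\eps)$ in value. A clean candidate is to introduce error-correcting-code-style columns that test random subsets of blocks and heavily punish any missing $i_k^*$, so that even noisy approximate minimax strategies reveal all of $X$. A secondary subtlety is bottlenecking the information flow through a single $S$-bit state over $T$ rounds --- this I expect to handle via the standard simulation that treats $M_T$ as the output of an $S$-bit interactive protocol between $A$ and the loss oracle, so that chain-rule accounting cleanly charges the information $A$ extracts about $X$ against its persistent memory.
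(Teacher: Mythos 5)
Your high-level route (no-regret vs.\ zero-sum game duality, a planted structure of entropy $\Theta(\eps^{-1}\log n)$, and a best-responding/learning column player) is the same as the paper's, but both of your load-bearing steps have genuine gaps. For subgoal (i): requiring that \emph{every} $\eps$-approximate minimax row strategy put mass $\Omega(1/K)$ on \emph{every} planted coordinate, so that $X$ can be read off exactly, is too strong. With $K=\Theta(1/\eps)$ blocks, each planted coordinate contributes only $O(\eps)$ to the game value, so an approximate minimax strategy can always abandon a constant number (indeed a constant fraction) of blocks at total cost $O(\eps)$, and no bounded-payoff ``fairness'' or code-style columns can prevent this; this is exactly the $1/\eps$ slack you flagged, and it is not fixable in the exact-recovery form you want. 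The paper settles for partial identification and shows it suffices: in its construction (loss $4$ off a hidden support $S$ with $|S|=k=1/(2\eps)$, matching pennies on $S$), any fixed $p$ with worst-case loss $<2/k$ must have all its heavy coordinates ($p_i\ge 1/(2k)$) inside $S$ and at least $k/4$ of them, hence each fixed strategy is good for at most $\binom{n}{3k/4}$ of the $\binom{n}{k}$ games.

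For subgoal (ii), the Fano/bottleneck step fails as stated: $|M_t|\le S$ does \emph{not} cap $I(X;\bar x_T)$. Each round the loss feedback is a fresh $X$-dependent observation (in these constructions a single loss vector can reveal all of $X$), and $\bar x_T$ is a function of the entire trajectory, so the play sequence can carry far more than $S$ bits about $X$ even though no single state stores them; ``chain-rule accounting against the persistent memory'' does not go through. What small memory \emph{does} cap is the algorithm's strategy repertoire: at most $2^S$ output distributions, one per state. The paper's proof is a union bound over these $2^S$ states combined with the per-strategy count above, showing that with probability $0.99$ a random planted game is bad for every reachable state simultaneously; then the best-responding column player inflicts per-round loss at least $19/(10k)$ in \emph{every} round, contradicting the regret guarantee, which via the minimax theorem forces average loss at most $1/k+\eps\le 3/(2k)$. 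So you should weaken subgoal (i) to ``each fixed strategy is near-minimax for few games'' and replace recovery-plus-Fano by this counting-over-states argument; the adversary can then simply best respond (no MWU column player or extraction of $\bar x_T$ is needed). Finally, the $\Omega(n)$ branch needs no separate INDEX reduction: since regret lower bounds are monotone in $\eps$, the construction applied with $\eps' = \Theta(\log_2 n/n)$ already yields $\Omega(n)$ when $\eps^{-1}\log_2 n \gtrsim n$, which is how the paper handles the $\min$.
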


The theorem states that $\Omega(n)$ memory is necessary even to get any sub-linear, say, $O\left(T^{0.99}\right)$ regret, for sufficiently large $T = \Omega\left(n^{100}\right)$.
In contrast, under the oblivious adversary model, our upper bound (\cref{thm:intro-main}) can attain such  regret guarantee in $o(n)$ space.
Therefore, this exhibits a  separation between oblivious and adaptive adversary model in the low space regime.

\subsection{Technical overview} \label{sec:overview}
We provide a streamlined overview of our approach. For notational convenience, we omit polylogarithmic factors and one should think of $T = \poly(n)$.

\subsubsection{A baseline algorithm for weakly sub-linear regret}
We first present an algorithm that achieves weakly sub-linear regret $O(\eps T)$ using space $O(\eps^{-2})$, then provide a novel width reduction procedure to make it sub-linear in \cref{sec:boost}.

A   natural idea is to carefully maintain a small pool of experts and run MWU over them. 
To begin with, we divide the $T$ days into $T/B$ epochs with $B$ contiguous days in each epoch.
At the beginning of each epoch, we sample a random set of new experts and add them into the pool, then executing MWU over the pool (starting from uniform weight).
MWU guarantees that the algorithm is always competitive with the best expert(s) {\em in the pool}, and this turns online prediction into the task of maintaining a good pool of experts.
The immediate hurdle is that, due to space constraint, the pool   must be kept small and hence the best expert is likely to be outside of the pool.
Therefore, we need to maintain the pool so that it consists of good experts with respect to each particular epoch, but without knowing the performance of the experts outside the pool.

\paragraph{Maintaining the pool} 
The algorithm samples     a small number of experts into its pool at the beginning of each epoch. 
After a few epochs, the pool size would grow and exceed the memory budget.
The key algorithmic task is to design a rule of evicting  experts.

The first and natural idea is track the cumulative loss of each expert in the pool, and intuitively, only an expert with low average loss should be reserved. 
However, it is easy to come up with instances, where (1) the best expert $i^{*}$ has low error in every epoch; but (2) random new   experts have even lower error in $1/3$ epochs (though higher in the rest). 
If these $1/3$ epochs are evenly spaced, then $i^{*}$ can be easily kicked out by the new expert. 
Subsequently, it would take a long time to get back, since our sampling rate has to be low to respect the space constraint. This is clearly undesirable.

Intuitively, consider two experts in the pool with equally low average loss since their joining. The one that has stayed longer should be 
treated differently from the other, since the former is more stable against  further loss. Hence, the second idea is to keep experts that have lived a long time in the pool. In other words, an expert has no reason to be evicted if it is ``Pareto-optimal'': it has either stayed long or   achieved little loss.
Formally, we say an expert $j$ dominates another expert $i$, if $j$ joins the pool earlier than $i$ and   has achieved at most $\eps$ more average  loss than $i$ since expert $i$ joins.
At the end of every epoch, our algorithm evaluates all experts. Any expert $i$ being dominated by another is evicted. 

\paragraph{Bounding memory} 
 Notice that the algorithm does not dictate  any explicit bound on the pool size, let alone memory. This is the challenge that we now resolve.
 
For that, the key observation is  the following. Consider any surviving expert $i$ that joins the pool later than some other $j$. It  has much smaller, in fact at least $\eps$, loss  on its interval than $j$, by our eviction rule.  
Then, we claim that either (1) expert $j$ has $\eps/2$ larger average loss than expert $i$, or (2) $j$ lives $(1+\eps/2)$ longer than $i$. 
The reason is that     $j$ has loss at least $0$ everywhere, so if  (1) doesn't hold, then it takes extra length to catch up with the loss difference. 
A straightforward proof would bound the pool size by $O(1/\eps^2)$ as both events can happen for at most $O(1/\eps)$ times consecutively. 
We derive a   refined  one of    $O(1/\eps)$ via a potential function argument. 
This leads to a  memory bound of $O(1/\eps^2)$ because we need to track the performance of each expert  over every interval.

\paragraph{Bounding regret} To bound the regret, our plan is to show that there exist experts in pool that are competitive to $i^{*}$ (even if $i^{*}$ may not be in the pool), except for at most $O(n)$ {\em unlucky} epochs. 
For simplicity,   assume the algorithm only samples and adds one new expert per epoch. 
For an epoch $t$, imagine $i^{*}$ gets sampled, and it remains alive until the end of epoch $t \leq t'\leq T/B $. 
To this end, there must exist an expert $i(t)$ that already lies in the pool, stays alive during $[t, t']$ and outperforms $i^{*}$.
The reason is that the expert $i^{*}$ can only be evicted by an older expert, and the eviction time is independent of future randomness (again, we assume the loss sequence is fixed in advance). 
Hence, $i(t)$ is competitive with $i^{*}$  during the epochs $[t,t']$ (note this is independent of whether $i^{*}$ is actually sampled or not). Then we can proceed to $(t'+1)$-th epoch.
There is one exception: if $i^{*}$ is sampled and would not be evicted by the end, then $i(t)$ simply does not exist.
On one hand, with probability $1/n$, we would sample $i^{*}$ and it stays until the end. Otherwise, with probability $1-1/n$, we just lose this epoch and proceed to the next. The latter event should not happen for more than $O(n)$ times with high probability.

In summary, the baseline algorithm   is always competitive with best expert in pool, up to a total regret of $T/B \cdot O\left(\sqrt{B}\right)$, by MWU. Moreover, there is always some expert in the pool  competitive with  the best expert $i^{*}$  up to an $O(\eps T)$ regret, except in those unlucky epochs which incur $O(nB)$ regret.   To further optimize the algorithm, we  sample $1/\eps^2$ new experts instead of $1$. It turns out that allows us to bound the total regret over the unlucky epochs by $O(\eps^2 nB)$.  Putting everything together, the total regret is $O\left(\eps T + T/\sqrt{B} + \eps^2 n B\right) \approx O(\eps T)$.

\subsubsection{Bootstrap the baseline and width reduction}\label{sec:boost}
The above baseline algorithm has a total regret of $O(\eps T)$, and the bottleneck lies in the following. First, the eviction rule essentially discretizes the loss into multiples of $\eps$ and we need to perform more refined division to reduce the loss. On the other hand, one can   construct examples showing the pool size grows at least $O(1/\eps)$, since the loss takes range from $[0, 1]$. 
The idea is to bootstrap the baseline algorithm and to reduce the width of experts.\footnote{An expert with loss in $[a, b]$ is said to have width $(b - a)$, see \cite{arora2012multiplicative} for formal definition.}

\paragraph{Precondition the experts} The idea is fairly simple: we just run MWU over the original expert $i$ and the baseline algorithm, and take its output prediction as the new expert $e_{i}$.
Let $\Delta$ denote the loss of the baseline algorithm. Roughly speaking, it is guaranteed that the performance of $e_{i}$ is  at least as good as $i$ and the baseline, and since the baseline has average regret of at most $\eps$, $e_i$ takes loss in $[\Delta-\eps, \Delta]$.\footnote{For technical reasons,  we also need to truncate the loss because it is possible that $e_i$ performs much better than $i$ and the baseline.}
Therefore, the width of the loss is significantly reduced---from $[0, 1]$ to $[0, \eps]$. This allows us to discretize with an additive factor $\eps^2$ instead of $\eps$. 
The idea of preconditioning is ubiquitous and powerful in modern algorithm design, but as far as we know, it is the first time applied to online prediction.

\paragraph{Putting things together} The above bootstrapping procedure reduces the regret from $O(\eps T)$ to $O(\eps^2 T)$, up to some lower order terms.
There is no reason to stop here, and in fact we  repeatedly perform the bootstrapping for roughly $\frac{\log T}{\log n}$ times. Carefully balancing with the lower order terms yields a final regret bound of $O_n\left(T^{\frac{2}{2+\delta}}\right)$. 
At the end, the scheme would maintain a hierarchical pool of experts. The algorithm   plays a  mixed strategy over them, instead of simply applying MWU. Our approach of transforming a weakly sub-linear regret $O(\eps T)$ algorithm to a sub-linear regret $O(T^{1-\alpha})$ algorithm is   general. We believe it  could have broad applications in the area of online prediction.

\subsubsection{Lower bound via learning in games}
Our lower bound draws close connection with learning in games. 
It is well known that one can use no-regret learning algorithm to compute Nash equilibria of a zero-sum game \cite{freund1999adaptive}, via the following template: Alice follows a no-regret algorithm with each of her action as an expert, and Bob (the adversary) best responds to it.
We construct a family of zero-sum games whose equilibrium (or minmax) strategies are far apart, and any single strategy can  only achieve $\eps$-approximate minmax value for a few of them. 
The construction is  by randomly embedding a generalized matching penny game.
Via a counting argument, one can prove that algorithms using sub-linear space simply cannot achieve $\eps$-approximate minmax in all its states for most of the games in the family. This contradicts with the fact that no-regret dynamics achieve minmax value and therefore establishes the lower bound.

\subsection{Compared with Srinivas, Woodruff, Xu \& Zhou}\label{sec:compare}
Before we survey other related work, \cite{stoc22} is the most relevant to us. We discuss our findings in light of the main results therein.  

\paragraph{Lower bound}
On the hardness side, \cite{stoc22} shows that $\Omega(S)$ memory is necessary for achieving $O(\sqrt{nT/S})$ regret even for i.i.d.\ loss sequence, whereas our lower bound indicates that $\Omega(S)$ is necessary for $O(T/S)$ regret against an {\em adaptive} adversary. These two results are incomparable in general, since our lower bound is quantitatively stronger but under a much stronger adversary model. On the technical side, the lower bounds of \cite{stoc22}   leverage communication complexity techniques, whereas ours (\cref{thm:intro-lower}) exploits the connection between no-regret learning and zero-sum games.

\paragraph{Upper bound}
\cite{stoc22} gives an $O(S)$-space algorithm that obtains $\widetilde{O}(\sqrt{nT/S})$ regret assuming that the loss sequence arrives in random order. This matches their aforementioned lower bound. 
Unsurprisingly, the design and analysis of the algorithm hinges heavily upon the random order assumption and does not have any implication for the (standard) worst-case loss. 

For worst-case loss sequence, \cite{stoc22} gives an $O(\frac{n}{\delta T})$-space algorithm with $\delta T$ regret (for any $\delta T = \widetilde \Omega (\sqrt T)$), assuming that the best expert receives a total loss of at most $M =O\left(\frac{\delta^2 T}{\log^2 n}\right)$.
The assumption on the best expert is rather strong: for their result to be meaningful, one needs $\delta < 1$ and so the best expert already has sub-linear loss.
As explained earlier, our \cref{thm:intro-main} does not require any such condition.
In fact, under the assumption, the algorithmic design is conceptually simple. A na\"ive algorithm is to follow the advice of a single expert until its cumulative loss exceeds $M$, then switches to a new one, and repeats. This procedure uses $O(1)$ memory and is worse than the best expert in total loss by at most an  $O(n)$ multiplicative factor. 
Instead of tracking a single expert, \cite{stoc22} designs a more general scheme by sampling a pool of experts and running majority vote. Improving upon the na\"ive idea, their algorithm achieves a   performance of $O(\delta^{-1}\log^2 n)$ multiplicatively worse than the best expert (in terms of total loss). We stress again that our algorithm works beyond this low mistake  regime.


\subsection{Related work}
\paragraph{Identifying best expert is hard} 
We first mention that identifying even an approximately best expert requires $\Omega(n)$ memory 
in stream.  
This is matched by the trivial algorithm of tracking the cumulative loss of all experts. 
The proof is by a simple reduction from the well-studied \textsc{Set-Disjointness} problem;  see the survey \cite{sherstov2014communication}  and references therein. We refer interested reader to the prior work \cite{stoc22} for a detailed argument. 

\paragraph{Prior work on expert learning} 
Forms of the classic MWU algorithm for online learning can be dated back to 1950's \cite{brown1951iterative}. The algorithm has been analyzed in a variety of settings and shown to achieve (nearly) optimal regret \cite{littlestone1989weighted,ordentlich1998cost}. The algorithm also finds a wide range of applications in algorithm design and optimization \cite{cesa2006prediction, freund1997decision,christiano2011electrical,garber2016sublinear,klivans2017learning,hopkins2020robust,ahmadian2022robust}. See the survey \cite{arora2012multiplicative} for a complete treatment.

There are other online learning algorithm that are less computationally expensive than MWU.  
In particular, a line of work  initiated by \cite{kalai2005efficient, hazan2016computational} equips an online learner with an (offline) optimization oracle. The goal is to minimize oracle calls, a proxy for time complexity. 
Strong regret and oracle complexity guarantees can be achieved under   this framework \cite{dudik2020oracle, nika,noah,dai2022follow}.  
Moreover, under various structural  assumptions, one can improve upon the time complexity of MWU \cite{helmbold1997predicting,maass1998efficient,takimoto2001predicting,kalai2005efficient}. These lines of work, however, generally do not consider space bounds. Finally, in terms of technique, the algorithm of \cite{hazan2016computational} also uses the idea of random sub-sampling of the experts.  
 
\paragraph{Memory-efficient online learning} 
There has been a recent spate of work on memory-bounded online learning, mostly in (stochastic) multi-arm bandit settings. 
This includes the study of both regret minimization and pure exploration \cite{liau2018stochastic, chaudhuri2020regret, assadi2020exploration,jin2021optimal,maiti2021multi}. We mention that a recent work \cite{agarwal2022sharp}   considers a multi-pass setting, where the algorithm may take several passes over the data. 
They show that the regret-memory trade-off can be significantly improved when this is allowed. 
Our work is focused on single-pass algorithm.
Space usage is also considered by \cite{on2013} in the analysis of pairwise losses in online learning,  but under a restricted memory model.

A related line of work is on the memory-sample lower bounds for statistical and computational learning problems \cite{steinhardt2016memory,raz2017time,raz2018fast,garg2018extractor, sharan2019memory,gonen2020towards, garg2021memory,marsden2022efficient}, such as learning sparse parities and linear regression. Our problem is not statistical in nature, since we assume a worst-case loss sequence, and therefore does not lie in their setting. 
Other lines of work also study learning problems in data streams, including continual learning \cite{chen2022memory}, entropy estimation \cite{acharya2019estimating,aliakbarpour2022estimation}, detecting   correlations \cite{dagan2018detecting}, robust estimation \cite{diakonikolas2022streaming}, learning simple classifiers \cite{brown2022strong} and matrix rank estimation \cite{crouch2016stochastic}.

\paragraph{Adversary models} Our algorithm assumes an oblivious adversary, which is standard in the literature. 
We remark that our lower bound considers a notion of adaptivity   stronger than the typical ones in the literature, where the adversary cannot access the internal randomness of the algorithm; see, e.g., \cite{DBLP:conf/icml/DekelTA12,DBLP:conf/nips/Cesa-BianchiDS13}. Finally, several recent works show that certain lower bounds under adaptive adversary can be circumvented  by smoothed analysis \cite{NIPS2011_692f93be,noah,tim,nika}.

\subsection{Organization}
The remainder of the paper is organized as follows. \cref{sec:prelim} introduces the notations and necessary technical backgrounds. \cref{sec:baseline} describes and analyzes  our baseline algorithm, which serves as a building block for our full algorithm in \cref{sec:full}. The lower bound is proved in \cref{sec:lower}.
We conclude the paper in \cref{sec:con} by pointing out several future directions.
\section{Preliminaries} 
\label{sec:prelim}

\paragraph{Notations} Throughout the paper, we use $T$ to denote the number of rounds and $n$ to denote the number of experts. We write $[n] := \{1, \ldots, n\}$ and $[n_1, n_2]:= \{n_1, \ldots, n_2\}$. Let $\Delta_n$  denote the collection of  probability distributions over $[n]$.
Unless specified otherwise, all logarithms are base $e$.
 We refer to a word of memory as $O(\log(nT ))$ bits.

\subsection{Online learning with expert advice}
We study the classic online prediction with expert problem under the (standard) oblivious adversary model. 
In this problem, an algorithm makes prediction every round with the advice from experts and with the goal of minimizing its regret with respect to the best expert in hindsight.
Formally, 

\begin{definition}[Online learning with oblivious adversary]
\label{def:oblivious}
An algorithm is initiated with memory $M_1$ and makes prediction for $T$ rounds. 
At the $t$-th iteration ($t \in [T]$), 
\begin{itemize}
    \item The algorithm chooses an expert $i_t \in [n]$ based on its memory $M_t$
    \item The nature reveals the loss vector $\ell_{t} \in [0, 1]^{n}$ ,
    \item The algorithm receives loss $\ell_{t}(i_t) \in [0, 1]$ and updates its memory state $M_{t+1}$.
\end{itemize}

An adversary is said to be oblivious if the sequence of loss vectors $\ell_1, \ldots, \ell_T \in [0,1]^{n}$ are chosen independent of the algorithm's decision. Equivalently, the nature fixes the loss vectors in advance (possibly randomly) and they are unknown to the algorithm. An algorithm is said to use up to $M$ bits of space if $\max_{t\in [T]} |M_{t}| \leq M$.
\end{definition}

\begin{remark}
Strictly speaking, the algorithm can not store the loss vector $\ell_t$ into its memory in the second step, as it already takes $\Omega(n)$ bits. Instead, we allow the algorithm to query the entry of the loss vector $\ell_t$.
\end{remark}

The goal of the algorithm is to minimize the total {\em regret} over $T$ rounds, defined as
\begin{align}\label{eqn:def-regret}
   R(T) := \E\left[\sum_{t\in [T]}\ell_t(i_t)\right] - \min_{i^{*} \in [n]} \sum_{t\in [T]}\ell_t(i^{*}),
\end{align}
where the expectation is taken over the randomness of the algorithm.
We   also consider the average regret, defined as $R(T)/ T$.


We also consider the adaptive adversary model and prove a linear memory lower bound. 

\begin{definition}[Online learning with strong adaptive adversary]
\label{def:adaptive}
An algorithm is initiated with memory $M_1$ and makes prediction for $T$ rounds. 
At the $t$-th iteration (for $t \in [T]$), 
\begin{itemize}
    \item The algorithm commits a distribution $p_t \in \Delta_{n}$ over the experts $[n]$ based on its memory state $M_t$;
    \item The adversary reveals the loss vector $\ell_{t} \in [0, 1]^{n}$ after observing the distribution $p_t$;
    \item The algorithm receives loss $\langle p_t, \ell_t\rangle$ and updates its memory state $M_{t+1}$.
\end{itemize}
\end{definition}
Note we assume the algorithm commits a probability distribution over the experts $[n]$ (instead of a single expert) but the realization is unknown to the adversary. Equivalently, it can be seen that the adversary can choose loss vector $\ell_t$ based on the {\em decisions} as well as the {\em internal randomness} of the algorithm up to round $t-1$.

This notion of adaptivity here is stronger than the traditional one in the online learning literature, where the adversary does not have access to the algorithm's internal states. 
Rather, \cref{def:adaptive} closely resembles white-box adversary for adversarially robust streaming algorithm, proposed recently by \cite{10.1145/3517804.3526228}.
We remark that the standard implementation of MWU takes $\widetilde{O}(n)$ space and achieves $O\left(\sqrt{T\log n }\right)$ regret against strong adaptive adversary.

\subsection{Algorithmic and mathematical tools }

\paragraph{Multiplicative weights update}
\label{sec:MWU}

Our algorithm will use the classic MWU scheme as a subroutine. We state its update and decision rule, and the formal regret guarantees. See \cite{arora2012multiplicative} for a standard exposition.
\vspace*{0.6em}

\begin{algorithm}[H]
\caption{Multiplicative weight update (MWU)}\label{alg:mwu}
    \SetAlgoLined
    \DontPrintSemicolon
    \KwInput{Learning rate $\eta$, expert $[n]$}
    \For{$t$ from $1$ to $T$}{
        Compute $p_t \in \Delta_n$ over experts such that $p_t(i) \propto \exp\left(-\eta \sum_{\tau=1}^{t-1}\ell_\tau(i)\right)$\;
        Sample an expert $i_t \sim p_t$ and observe the loss vector $\ell_t \in [0, 1]^n$
    }
\end{algorithm}

\begin{lemma}[MWU regret, \cite{arora2012multiplicative}]
\label{lem:mwu}
Suppose $n, T, \eta > 0$ and the loss $\ell_t \in [0,1]^n$ ($t\in [T]$), then the multiplicative weight update algorithm satisfies
\begin{align*}
    \sum_{t=1}^{T} \langle p_t, \ell_t\rangle - \min_{i^{*}\in [n]}\ell_t(i^{*}) \leq \frac{\log n}{\eta} + \eta T,
\end{align*}
and with probability at least $1-\delta$,
\begin{align*}
    \sum_{t=1}^{T}  \ell_t(i_t) - \min_{i^{*}\in [n]}\ell_t(i^{*}) \leq \frac{\log n}{\eta} + \eta T + O\left(\sqrt{T \log (n/\delta)}\right).
\end{align*}

Taking $\eta = \sqrt{\frac{\log n}{T}}$, the MWU algorithm has a total regret of $O\left(\sqrt{T\log (nT)}\right)$ with probability at least $1- 1/\poly(T)$ and a standard implementation takes $O(n\log T)$ bits of memory.
\end{lemma}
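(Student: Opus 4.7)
The plan is the classical potential-function argument for the expected (in-expectation-against-the-randomness-of-$p_t$) bound, upgraded to a high-probability statement via a martingale concentration step. Let $L_t(i) := \sum_{\tau=1}^{t} \ell_\tau(i)$, and define weights $w_t(i) := \exp(-\eta L_{t-1}(i))$ and the potential $\Phi_t := \sum_{i\in[n]} w_t(i)$, so that $p_t(i) = w_t(i)/\Phi_t$ and $\Phi_1 = n$.

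\textbf{Deterministic bound on $\sum_t \langle p_t, \ell_t\rangle$.} Observe first that the stated bound is vacuous when $\eta \geq 1$ since $\eta T \geq T$, so we may assume $\eta \in (0,1]$. From the elementary inequality $e^{-\eta x} \leq 1 - \eta x + \eta^2 x^2 \leq 1 - \eta x + \eta^2$ valid on $x \in [0,1]$, I would derive
\[
    \Phi_{t+1} = \sum_{i\in[n]} w_t(i)\, e^{-\eta \ell_t(i)} \leq \Phi_t\bigl(1 - \eta \langle p_t, \ell_t\rangle + \eta^2\bigr).
\]
Taking logs, applying $\log(1+x) \leq x$, and telescoping over $t \in [T]$ gives $\log(\Phi_{T+1}/n) \leq -\eta \sum_t \langle p_t, \ell_t\rangle + \eta^2 T$. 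For any fixed expert $i^*$, the potential is lower-bounded by a single summand, $\Phi_{T+1} \geq w_{T+1}(i^*) = \exp(-\eta L_T(i^*))$. Combining these two inequalities and dividing by $\eta$ yields
\[
    \sum_{t=1}^T \langle p_t, \ell_t\rangle - L_T(i^*) \leq \frac{\log n}{\eta} + \eta T,
\]
which is the first claim (and we can take $i^*$ to minimize $L_T(i^*)$).

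\textbf{High-probability bound.} For the second claim I would pass from $\langle p_t, \ell_t\rangle$ to the realized loss $\ell_t(i_t)$ by concentration. Let $X_t := \ell_t(i_t) - \langle p_t, \ell_t\rangle$ and let $\mathcal{F}_{t-1}$ be the filtration generated by the algorithm's past samples $i_1, \ldots, i_{t-1}$; under the oblivious-adversary assumption, $\ell_t$ is independent of the algorithm's randomness, so $p_t$ is $\mathcal{F}_{t-1}$-measurable and $\E[X_t \mid \mathcal{F}_{t-1}] = 0$. Since $|X_t| \leq 1$, Azuma--Hoeffding gives $\sum_{t=1}^T X_t \leq O(\sqrt{T \log(1/\delta)})$ with probability at least $1-\delta$. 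Adding this to the deterministic bound above gives the stated high-probability regret.

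\textbf{Parameter choice, memory, and obstacles.} Plugging in $\eta = \sqrt{\log n / T}$ (which satisfies $\eta \leq 1$ whenever $T \geq \log n$; smaller $T$ is trivial) balances $\log n / \eta$ and $\eta T$ at $\sqrt{T \log n}$, and setting $\delta = 1/\poly(T)$ absorbs the concentration error into $O(\sqrt{T\log(nT)})$. For memory, the algorithm only needs to maintain the $n$ running cumulative losses $L_{t-1}(i)$; each is bounded by $T$ and represented to relative precision $1/\poly(T)$ in $O(\log T)$ bits, so the total is $O(n\log T)$ bits, and $p_t$ is computed and sampled on the fly. This is a classical result, so there is no serious obstacle: the only mild subtlety is ensuring the Taylor-style inequality's regime $\eta \leq 1$ (harmless under the optimal choice) and observing that the martingale step truly requires obliviousness, since otherwise $\ell_t$ could depend on $i_1,\ldots,i_{t-1}$ through the adversary and $\E[X_t \mid \mathcal{F}_{t-1}]$ need not vanish.
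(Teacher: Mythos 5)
Your proposal is correct, and it is essentially the argument the paper relies on: the paper gives no proof of this lemma, citing the standard reference of Arora--Hazan--Kale, and your potential-function bound on $\sum_t \langle p_t,\ell_t\rangle$ followed by an Azuma--Hoeffding step (valid precisely because obliviousness makes $\ell_t$ independent of the sampling randomness, so $\ell_t(i_t)-\langle p_t,\ell_t\rangle$ is a bounded martingale difference) is exactly that canonical derivation, and your parameter choice and $O(n\log T)$-bit accounting match the stated conclusion.
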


\paragraph{Probability and concentration inequalities} We state the standard concentration inequality.
\begin{lemma}[Azuma-Hoeffding bound]
Let $X_0, \ldots, X_n$ be a martingale  with respect to the filter $F_0 \subseteq F_1 \cdots \subseteq F_n$ such that for $Y_i = X_i - X_{i-1}$, $i \in [n]$, we have that $|Y_i| = |X_i - X_{i-1}| \leq c_i$. Then
\begin{align*}
    \Pr[|X_t - Y_0 | \geq t ] \leq 2\exp\left(-\frac{t^2}{2\sum_{i=1}^{n}c_i^2}\right).
\end{align*}
\end{lemma}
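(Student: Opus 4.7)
The plan is to prove the Azuma--Hoeffding bound via the standard exponential moment (Chernoff) method applied to the martingale difference sequence, followed by a symmetry argument to upgrade to the two-sided tail. Throughout, I will assume the natural reading of the statement, namely that $t$ on the right-hand side refers to the deviation parameter and we are bounding $\Pr[|X_n - X_0| \geq t]$ (the appearance of $X_t$ and $Y_0$ in the displayed inequality is a typo; the intended quantity is the total deviation of the martingale from its starting value).

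First I would reduce the two-sided bound to a one-sided bound. Since $-X_0, -X_1, \dots, -X_n$ is also a martingale with the same increment magnitudes $|{-Y_i}| \leq c_i$, it suffices to prove $\Pr[X_n - X_0 \geq t] \leq \exp(-t^2/(2\sum_i c_i^2))$ and then apply a union bound. For the one-sided bound, I would use Markov's inequality on the exponentiated deviation: for any $\lambda > 0$,
\[
\Pr[X_n - X_0 \geq t] \;=\; \Pr\!\left[e^{\lambda(X_n - X_0)} \geq e^{\lambda t}\right] \;\leq\; e^{-\lambda t}\,\E\!\left[e^{\lambda \sum_{i=1}^n Y_i}\right].
\]

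Next I would control the moment generating function by iterated conditioning on the filtration. Using the tower property,
\[
\E\!\left[e^{\lambda \sum_{i=1}^n Y_i}\right] \;=\; \E\!\left[e^{\lambda \sum_{i=1}^{n-1} Y_i} \cdot \E\!\left[e^{\lambda Y_n} \mid F_{n-1}\right]\right],
\]
and I would apply Hoeffding's lemma to each conditional expectation. Hoeffding's lemma states that if $Z$ is a random variable with $\E[Z] = 0$ and $|Z| \leq c$ almost surely, then $\E[e^{\lambda Z}] \leq e^{\lambda^2 c^2/2}$; applied conditionally on $F_{i-1}$ (using $\E[Y_i \mid F_{i-1}] = 0$ from the martingale property and $|Y_i| \leq c_i$) it gives $\E[e^{\lambda Y_i} \mid F_{i-1}] \leq e^{\lambda^2 c_i^2/2}$. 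Iterating $n$ times yields $\E[e^{\lambda(X_n - X_0)}] \leq \exp(\lambda^2 \sum_i c_i^2 / 2)$.

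Combining the two bounds gives $\Pr[X_n - X_0 \geq t] \leq \exp(\lambda^2 \sum_i c_i^2 / 2 - \lambda t)$ for every $\lambda > 0$. Optimizing over $\lambda$, the minimum is attained at $\lambda^{*} = t / \sum_i c_i^2$, producing the bound $\exp(-t^2 / (2 \sum_i c_i^2))$. Doubling for the two-sided version completes the proof. The only non-routine ingredient is Hoeffding's lemma itself, which I would justify by convexity of $e^{\lambda z}$ in $z$: writing $z \in [-c,c]$ as a convex combination of $\pm c$ bounds $e^{\lambda z}$ above by the corresponding linear interpolation, and taking expectations (using $\E Z = 0$) reduces the problem to showing $\cosh(\lambda c) \leq e^{\lambda^2 c^2/2}$, which follows from Taylor expansion term-by-term. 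No step of this plan presents a real obstacle; the one piece requiring care is the conditional application of Hoeffding's lemma, which must use the martingale property in the form $\E[Y_i \mid F_{i-1}] = 0$ rather than merely $\E[Y_i] = 0$.
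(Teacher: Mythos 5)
Your proof is correct: the reduction to a one-sided tail, the Chernoff/Markov step, the conditional application of Hoeffding's lemma using $\E[Y_i \mid F_{i-1}] = 0$, and the optimization $\lambda^{*} = t/\sum_i c_i^2$ constitute the canonical argument for Azuma--Hoeffding, and you correctly flag the typo in the statement (the intended event is $|X_n - X_0| \geq t$). The paper itself states this lemma as a standard concentration inequality without proof, so there is no in-paper argument to compare against; your write-up fills that in with the standard approach and no gaps.
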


\paragraph{Minmax theorem}
It is well known that the equilibrium value of a zero sum game is unique, and it equals to the minmax or maxmin value.
\begin{lemma}[Minmax Theorem \cite{neumann1928theorie}]
\label{lem:minmax-thm}
For any $A \in \R^{n\times n}$, the minmax theorem guarantees that
\[
\min_{x \in \Delta_n}\max_{y \in \Delta_{n}}x^{\top}Ay = \max_{x \in \Delta_{n}}\min_{y \in \Delta_{n}} x^{\top}Ay.
\]
\end{lemma}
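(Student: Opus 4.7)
The plan is to prove the minmax theorem via the no-regret dynamics machinery already developed in the paper, which has the pleasant feature of being internally consistent with the paper's themes. First, I would dispense with the easy direction (weak duality): for any fixed $x_0 \in \Delta_n$ and $y_0 \in \Delta_n$, we have $\max_y x_0^\top A y \geq x_0^\top A y_0 \geq \min_x x^\top A y_0$; minimizing the left side over $x_0$ and maximizing the right side over $y_0$ (independently) yields $\min_x \max_y x^\top A y \geq \max_y \min_x x^\top A y$. The substantive content is the reverse inequality.

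For the hard direction, I would set up two simultaneous copies of MWU, one for a minimizing ``Alice'' and one for a maximizing ``Bob,'' running for $T$ rounds. Let $M := \max_{i,j} |A_{ij}|$ (finite since $A$ is a fixed matrix) and rescale so losses lie in $[0,1]$. At round $t$, Alice plays $x_t \in \Delta_n$ and sees loss vector $A y_t$ (the $i$-th coordinate being $(Ay_t)_i$), while Bob plays $y_t$ and sees loss vector $-A^\top x_t$ (shifted and scaled into $[0,1]$). Invoking \cref{lem:mwu} twice gives deterministic regret guarantees
\begin{align*}
\sum_{t=1}^T x_t^\top A y_t - \min_{x^* \in \Delta_n} \sum_{t=1}^T (x^*)^\top A y_t &\leq R_T, \\
\max_{y^* \in \Delta_n}\sum_{t=1}^T x_t^\top A y^* - \sum_{t=1}^T x_t^\top A y_t &\leq R_T,
\end{align*}
where $R_T = O(\sqrt{T \log n})$ after optimizing $\eta$.

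Adding the two inequalities cancels the $\sum_t x_t^\top A y_t$ terms, giving $\max_{y^*} \sum_t x_t^\top A y^* - \min_{x^*} \sum_t (x^*)^\top A y_t \leq 2R_T$. Dividing by $T$ and introducing the time-averaged strategies $\bar{x} = \tfrac{1}{T}\sum_t x_t$ and $\bar{y} = \tfrac{1}{T}\sum_t y_t$ (which lie in $\Delta_n$ by convexity), the bound becomes $\max_{y^*} \bar{x}^\top A y^* - \min_{x^*} (x^*)^\top A \bar{y} \leq 2R_T/T$. Since $\max_{y^*}\bar{x}^\top A y^* \geq \min_x \max_y x^\top A y$ (because $\bar x$ is a specific feasible point for the outer minimization) and symmetrically $\min_{x^*} (x^*)^\top A \bar{y} \leq \max_y \min_x x^\top A y$, we conclude
\[
\min_x \max_y x^\top A y - \max_y \min_x x^\top A y \leq \frac{2R_T}{T} = O\!\left(\sqrt{\tfrac{\log n}{T}}\right).
\]
Letting $T \to \infty$ forces this gap to $0$, and combined with weak duality yields equality.

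The main obstacle is really bookkeeping rather than conceptual: one must check that the MWU guarantee in \cref{lem:mwu} applies on both sides even though Alice's ``losses'' $Ay_t$ and Bob's ``losses'' $-A^\top x_t$ depend on the opponent's choice at the same round $t$, not merely on past history. This is fine because each player's update rule at round $t$ uses only losses observed in rounds $1,\dots,t{-}1$, so the regret analysis of MWU goes through against any sequence of loss vectors including those generated by the other player. A minor additional step is the rescaling to enforce $[0,1]$-valued losses, which introduces a harmless multiplicative $M$ factor in $R_T$ that vanishes after dividing by $T$.
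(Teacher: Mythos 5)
Your proposal is correct. Note, though, that the paper does not prove \cref{lem:minmax-thm} at all: it is imported as a classical black box with a citation to von Neumann, and is only \emph{used} later (in the proof of \cref{thm:lower}) to assert that Alice's average loss against best responses is lower bounded via the minmax value. What you have written is essentially the Freund--Schapire proof of the minimax theorem by running two no-regret (MWU) players against each other, and it is sound: weak duality is the trivial direction; for the other direction, the expected-loss form of \cref{lem:mwu} (the first inequality, which bounds $\sum_t \langle p_t,\ell_t\rangle$ against the best fixed action) holds for an arbitrary, even adaptively generated, loss sequence, so it applies to both players even though each round's loss depends on the opponent's simultaneous play; adding the two regret bounds, passing to the averaged strategies $\bar x,\bar y$, and letting $T\to\infty$ closes the duality gap, with the rescaling by $M=\max_{i,j}|A_{ij}|$ (and any additive shift, which cancels in the regret difference) only affecting constants that vanish after dividing by $T$. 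The one point worth stating precisely is the justification you gesture at: the reason the simultaneous dependence is harmless is not merely that the update uses past losses, but that the MWU regret bound is proved by a potential argument that makes no assumption on how $\ell_t$ is produced, so it is valid even when $\ell_t$ depends on $p_t$ itself. Compared with the paper, your route is self-contained and thematically consistent with the paper's own use of no-regret dynamics in the lower bound, at the cost of invoking an asymptotic limit $T\to\infty$ rather than a purely finite statement; the paper's choice to cite the theorem is of course also legitimate.
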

\section{The building block}
\label{sec:baseline}

We first give an online learning algorithm that achieves $\widetilde O (\eps T + TB^{-1/2}+ \eps^2nB)$ total regret over $T$ days in $S=\widetilde O\left(\eps^{-2}\right)$ space, where $B \ll T$ is a parameter specified later.
This procedure apparently does not achieve our end goal, but instead it will serve as a building block for the our full algorithm in \cref{sec:full}.

\begin{theorem}
\label{thm:weak-sublinear}
Let $T, n, B$ be positive integer, $\eps\in (0,1/2)$, there exists an online learning algorithm that achieves $O\left(\eps T + TB^{-1/2}\log^{1/2} (nT) + \eps^2 n B \log T\right)$ regret with probability at least $1 - 1/\poly(T)$ and uses $O(\eps^{-2}\log^3 (nT))$ bits of memory.
\end{theorem}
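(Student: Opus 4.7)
The plan is to implement the epoch-based scheme from \cref{sec:overview}: partition the horizon into $T/B$ epochs of length $B$, maintain a pool $P$ of experts tagged with their join times $\tau_i$ and cumulative losses since joining, and at the start of each epoch (i) draw $k = \Theta(\eps^{-2})$ fresh experts uniformly at random and add them to $P$, (ii) reset and run MWU (\cref{alg:mwu}) over the current pool with learning rate $\eta = \sqrt{\log|P|/B}$, and (iii) at the end of the epoch evict every $i \in P$ that is dominated by some older $j \in P$, where $j$ dominates $i$ if $j$'s average loss over $[\tau_i, t]$ exceeds $i$'s by at most $\eps$.

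For memory, the key structural claim is that the set of \emph{surviving} experts (the pool immediately after an eviction step) has size $\widetilde O(\eps^{-1})$. List the survivors $j_1, \ldots, j_m$ in order of joining. Because $j_{\ell+1}$ is not dominated by $j_\ell$, the older $j_\ell$ has average loss on $[\tau_{\ell+1}, t]$ at least $\eps$ larger than $j_{\ell+1}$'s. A short calculation shows that either $j_\ell$'s average loss over its \emph{entire} lifetime $[\tau_\ell, t]$ is at least $\eps/3$ larger than $j_{\ell+1}$'s, or $j_\ell$'s lifetime exceeds $j_{\ell+1}$'s by a factor of $(1+\eps/2)$ (since $j_\ell$'s loss on $[\tau_\ell, \tau_{\ell+1}]$ is nonnegative, the extra window has to be short for the lifetime average to stay close). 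Using the potential $\Phi(j_\ell) = \overline{L}(j_\ell) + c \log(\mathrm{lifetime}(j_\ell))$ with $c = \Theta(1)$, $\Phi$ drops by $\Omega(\eps)$ at each step and lies in an interval of size $O(\log T)$, giving $m = O(\eps^{-1}\log T)$. During an epoch the pool contains $k = O(\eps^{-2})$ newly injected samples on top of the survivors, so $|P| = O(\eps^{-2})$ throughout; storing the index, join time, cumulative loss, and MWU weight per member costs $\widetilde O(1)$ words, totaling $\widetilde O(\eps^{-2})$ memory.

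For regret I decompose into three pieces. First, \cref{lem:mwu} gives MWU regret $\widetilde O(\sqrt{B\log(nT)})$ per epoch against the best expert in the pool; summing over the $T/B$ epochs and union-bounding yields $O(TB^{-1/2}\log^{1/2}(nT))$. Second, for each epoch $t$ I couple the real run with a hypothetical run that additionally inserts $i^*$ at epoch $t$. The crucial observation is that membership of any $i \in P_t$ at a future epoch $\tau > t$ is determined solely by the (fixed) loss sequence and by $P_t$: only \emph{older} experts can trigger eviction, and since $i^*$ is inserted at $t$ it is younger than every member of $P_t$; hence the hypothetical does not alter the survival of any member of $P_t$. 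In the hypothetical, either $i^*$ is evicted at some $t' \ge t$ by an older surrogate $i^{(t)} \in P_t$, which by the invariance above is present in the \emph{real} pool throughout $[t,t']$ and is $\eps$-competitive with $i^*$ on that interval, or $i^*$ survives to the horizon, in which case I call $t$ \emph{critical}. Chaining the non-critical intervals contributes at most $\eps T$ to the gap against $i^*$. Third, at a critical epoch the real run samples $i^*$ with probability $p = 1-(1-1/n)^k = \Omega(\eps^{-2}/n)$, after which $i^*$ is in $P$ from $t$ onward and nothing further is lost; otherwise the epoch contributes at most $B$ regret and the argument restarts at $t+1$. The number of lost critical epochs is stochastically dominated by a geometric variable with success probability $p$, so it is $O(p^{-1}\log T) = O(\eps^2 n \log T)$ with high probability, giving the $O(\eps^2 nB\log T)$ term. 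Combining the three contributions yields the claimed bound.

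The main obstacle is the $\widetilde O(\eps^{-1})$ bound on the survivor count: without it the $\widetilde O(1)$ per-expert tracking cost would blow up the memory. The two-case potential argument has to combine the average-loss decrease and the lifetime shrinkage cleanly, which is where the constants must be chosen carefully. A secondary subtlety is that the hypothetical coupling must hold \emph{simultaneously} for all choices of the critical epoch $t$ used in the chaining; this is precisely enabled by the age-asymmetry of the dominance rule, which ensures each hypothetical insertion is strictly the youngest pool element at insertion time and therefore cannot perturb the survival status of any preexisting member.
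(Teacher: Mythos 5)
Your overall scheme is the paper's: epoching with random refresh, an age-respecting dominance eviction rule, a potential-function bound on the pool, and a regret argument that exploits the fact that an expert's eviction time depends only on older pool members and the fixed loss sequence. But there is a genuine gap in the memory half. You add \emph{all} $k=\Theta(\eps^{-2})$ freshly sampled experts to the pool and rely solely on the "evicted only by a strictly older expert'' rule. Experts sampled in the same epoch can never evict one another, so your survivor-chain argument only bounds the number of \emph{distinct join times}, not the number of survivors: a whole cohort of up to $\eps^{-2}$ same-age experts can persist, and the dichotomy between $j_\ell$ and $j_{\ell+1}$ is vacuous when $\tau_\ell=\tau_{\ell+1}$. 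An oblivious adversary can exploit this, e.g.\ by giving a fresh block of $n/2$ experts loss $0$ in epoch $t$ and loss $1$ elsewhere: each epoch about $\eps^{-2}/2$ sampled experts survive their cohort, and an older cohort only catches up (within $\eps$) after roughly $1/\eps$ epochs, so the pool reaches size $\Omega(\eps^{-3})$, breaking the claimed $\widetilde O(\eps^{-2})$ memory. The paper prevents exactly this with Line 7 of \cref{alg:baseline} ("remove all except the best expert from $R$''), which guarantees at most one new expert per join epoch so that \cref{lem:pool-size} applies; you need this (or an equivalent tie-breaking/culling step) for your survivor bound $m=\widetilde O(\eps^{-1})$ to be true. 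Note that restoring it slightly complicates your regret coupling: when $i^*$ is sampled it must also win the within-$R$ comparison, which the paper handles by splitting on whether $R$ contains an expert beating $i^*$ in that epoch (its Cases 2-1 / 2-2); this is a small but necessary addition to your critical-epoch argument.

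Two smaller points. First, storing only "index, join time, cumulative loss, MWU weight per member'' does not let you run the eviction rule: the rule compares the \emph{older} expert's average loss over the \emph{younger} expert's window $[\tau_i,t]$, so you must keep, for every older--younger pair, the older expert's cumulative loss since the younger one joined (the paper's $O(S^2)$-word bookkeeping, \cref{prop:memory}); with $S=\widetilde O(\eps^{-1})$ this is still $\widetilde O(\eps^{-2})$ words, so the final bound stands, but your per-member accounting as written does not support the algorithm. Second, your fixed-threshold dichotomy ("loss gap $\geq \eps/3$ or lifetime ratio $\geq 1+\eps/2$'') is too weak for the potential step: in the ratio case the lifetime average of the older expert may drop by far more than $\eps$, and a fixed gain of $c\log(1+\eps/2)$ cannot absorb it. You need the parametrized tradeoff of \cref{lem:loss-length} --- a shortfall of $\alpha$ in the loss forces a lifetime ratio of at least $1+\frac{\alpha}{1-\alpha}$ --- so that larger loss drops are offset by proportionally larger logarithmic lifetime gains; with that (and the culling fix) your potential argument and the rest of the regret analysis go through as in the paper.
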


For the sake of simplicity, we assume $\eps^{-1}$ and $T/B$ are integers in the rest of section.

\subsection{Baseline algorithm}
\paragraph{Algorithm description} The \textsc{Baseline} (\cref{alg:baseline}) maintains a pool of experts $\mathcal{P}$ at every step. 
It divides the whole sequence into $T/B$ epochs, where each epoch consists of $B$ (contiguous) days. 
\textsc{Baseline} proceeds epoch by epoch. A random set of experts $R_t$ of size $\eps^{-2}$ is sampled uniformly without replacement from $[n]$ and enters the pool at the beginning of each epoch.
For now, assume an expert's cumulative loss within each epoch is tracked and stored, ever since it joins the pool.
Within an epoch, we maintain the same pool of experts and run the MWU algorithm only on them, starting with the uniform weights, and produces a (random) decision every round. 
Na\"ively, the pool size grows by one every epoch, which is unacceptable for large $T$.
To address the issue and bound the pool size, we evict experts at the very end of each epoch by comparing their average losses.

\paragraph{Intuition}
Intuitively, if any expert performs poorly relative to the others in the pool, it makes sense  to evict it. However, care needs to be taken when comparing a long-surviving expert with a recently joined one. 
Due to the worst-case nature of the input, a new expert may start off by receiving significantly less loss  than an old one. Nevertheless, it is  yet unclear that it will continue to excel in the long run. Therefore, we design the algorithm so that an   expert cannot be deleted simply because it is outperformed   by a newer expert. It turns out that this rule is crucial in proving our memory bounds as well. 

\paragraph{Eviction rule} 
More formally, for any epoch $t \in [T/B]$, let $\mathcal P_t \subseteq [n]$ be the pool of experts at the beginning of $t$-th epoch (after adding $R_t$). The pool remains unchanged throughout the $t$-th epoch, and the algorithm considers removing certain experts from the pool   at the end of the epoch.
For experts in $R_t$, the rule is simple and we just keep the best expert. 
For each remaining expert $i$ (including the one that survives in $R_t$), let $\alpha(t,i) \leq t$ be the epoch when expert $i$ enters the pool. 
Let $\Gamma_{t, i} := [\alpha(t,i): t]$ be the period of time from  the $\alpha(t,i)$-th epoch to the $t$-th epoch.
For simplicity we   assume that \textsc{Baseline} explicitly tracks and updates the total loss value of all experts $ i \in \mathcal{P}_t$, over each epoch since their entrance. (We will discuss later how to optimize the space usage.)

Let $\mathcal{L}_{t}(i) = \frac{1}{B}\sum_{b=1}^{B}\ell_{(t-1)B + b}(i)$ be the average loss for expert $i$ in epoch $t$.  For any time interval $\mathcal{I} \subseteq \Gamma_{t,i}$, let $\mathcal{L}_{I}(i) = \frac{1}{|I|} \sum_{t\in I}\mathcal{L}_{t}(i)$ be the average loss of expert $i$ over $I$. 
The algorithm compares $i$ with every other expert $j\in \mathcal{P}_t$. The expert $i$ is evicted at the end of epoch $t$ if and only if 
\begin{enumerate}[(i)] 
    \item The expert $j$ entered the pool $\mathcal{P}_t$ before the expert $i$; {and} 
    \item The average loss $\mathcal{L}_{\Gamma_{t,i} }(i)$ of $i$ over $\Gamma_{t, i}$ is at least that of expert $j$ up to an additive factor of  $\eps$: 
    \begin{equation}\label{eqn:evict-rule}
        \mathcal{L}_{\Gamma_{t,i} }(i) \geq \mathcal{L}_{\Gamma_{t,i} }(j) - \eps.
    \end{equation}
\end{enumerate}
Simply put, condition (i) ensures that an older expert in the pool cannot be kicked out due to a younger one. 
Notice that (ii) effectively requires   (i), since our algorithm only keeps track of the loss of the experts within the pool. 
If $j$ entered the pool later than $i$, we cannot even compute the right-hand side of \eqref{eqn:evict-rule}.

We run the comparison-based pruning procedure at the very end of each epoch and use $\widetilde{P}_t$ to denote the set of experts that survive pruning. 
We will argue that (1) the size of the pool is bounded  and (2) overall the pool contains good experts such that our algorithm is competitive against the best expert, albeit it can easily be outside the pool.

\vspace*{0.5em}

\begin{algorithm}[H]
\caption{Baseline expert learning algorithm (\textsc{Baseline})}\label{alg:baseline}
    \SetAlgoLined
    \DontPrintSemicolon
    \KwInput{Parameter $T, B$ and $\eps$, experts $[n]$}
    \For{each epoch $t = 1,2, \ldots, T/B$}{
    Sample a random set $R$ of $\eps^{-2}$ experts without replacement from $[n]$ and add them to $\mathcal{P}$\;
    Initialize MWU (with uniform weights) over experts in $\mathcal{P}$\;
    \For{each day $b = 1,2, \ldots, B$}{
    Sample and play the MWU decision over experts in $\mathcal{P}$
    }
    \tcc{Evict expert}
    Remove all except the best expert from $R$\;
    \For{every pair $\{i,j\} \in \mathcal{P}$}{
            Remove $i$ from $\mathcal{P}$ if $j$ entered $\mathcal{P}$ before $i$ and condition \eqref{eqn:evict-rule} holds
    }
    }
\end{algorithm}

\paragraph{Implementation details} 
Na\"ively, the algorithm stores the cumulative loss of each expert in the pool, over every epoch since it entered.
This may cause large memory usage.
However, observe that, to execute the eviction rule, it is only required that for each $j$, we have the data $\mathcal{L}_{\Gamma_{i,t} }(j)$ for all $i$ that entered later than $j$, in addition to its own cumulative loss. 
Therefore, the algorithm can  explicitly track these values only; and if the pool size is $S$, then this takes $O(S^2)$ words of memory. See \cref{fig:saving} for an illustration.

\begin{figure}[ht]
    \centering
    \includegraphics[scale=0.6]{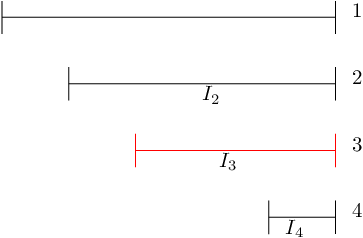}
    \caption{Pool at    the end of an epoch. If expert $3$ is removed, then we no longer store $\mathcal{L}_{I_3}(1),  \mathcal{L}_{I_3}(2)$.}
    \label{fig:saving}
\end{figure}

\subsection{Analysis of the Baseline Algorithm}
\label{sec:an}

We now provide a formal analysis of \textsc{Baseline} (\cref{alg:baseline}) and argue its memory and regret guarantees. 
\subsubsection{Memory bound}
\label{sec:memory-baseline}
The algorithm does not dictate an explicit bound on the memory, and in particular, on the size of pool.
First, we present a key technical lemma that insists a \textit{loss vs.\ length} structure on the surviving experts $\widetilde{\mathcal{P}}_t$. 
For a fixed $t$ and any $i,j \in \widetilde{\mathcal{P}}_t$,  write $i \prec j$ if $\alpha(t, i) > \alpha(t, j)$---namely, $i$ joined the pool later than $j$, so $j$ is older---and $i \succ j$ otherwise. Note that expert $i$ and $j$ must join the pool at different time due to our eviction rule. 
Roughly speaking, the lemma states that if $i$ survives the pruning (after Line 10 of \cref{alg:baseline}), then for any $j \succ i$,  one of the following must happen: 
\begin{enumerate}[(i)]
    \item expert $j$ suffers significantly more average loss over $\Gamma_{t,j}$ than $i$ suffers overs  its interval $\Gamma_{t, i}$; or 
    \item expert $j$ has resided in the pool significantly longer than $i$.
\end{enumerate} 
For notational convenience, for any  $ i,j \in \widetilde{\mathcal{P}}_t$ such that $j\succ i$, define $L_{i,i} =\mathcal{L}_{\Gamma_{t,i} }(i) $, $L_{i,j} =\mathcal{L}_{\Gamma_{t, i} }(j) $ and accordingly $L_{j,j} =\mathcal{L}_{\Gamma_{t, j} }(j) $. Then we have
\begin{lemma}[loss vs.\ length]\label{lem:loss-length}
    For any epoch $t \in [T/B]$, suppose  experts $i, j\in \widetilde{\mathcal{P}}_t$ and  $j\succ i$.  Let $\alpha \in (0, 1)$, then at least one of the following must hold:
    \begin{enumerate}[(i)]
        \item $L_{j,j} \geq  L_{i,i} + \epsilon - \alpha$;
        \item $|\Gamma_{t,j}|  \geq \left(1 + \frac{\alpha}{1 - \alpha}\right)|\Gamma_{t,i}|$.
    \end{enumerate}
\end{lemma}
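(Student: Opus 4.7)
The plan is to exploit the eviction rule applied to the pair $(i,j)$ together with the non-negativity of losses. Write $a = |\Gamma_{t,i}|$ and $b = |\Gamma_{t,j}|$; since $j\succ i$ entered the pool earlier, $b > a$ and $\Gamma_{t,i} \subseteq \Gamma_{t,j}$. Because $i$ survives the pruning step at the end of epoch $t$ and $j$ is older, the eviction condition \eqref{eqn:evict-rule} must have \emph{failed} for the pair $(i,j)$, yielding
\[
L_{i,j} \;>\; L_{i,i} + \eps.
\]
Since all losses (and hence all averages) lie in $[0,1]$, this immediately gives the side-information $L_{i,i} + \eps \le 1$, which I will invoke at the end.

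Next, I would decompose $j$'s cumulative loss over $\Gamma_{t,j}$ into the piece over the sub-interval $\Gamma_{t,i}$ and the earlier piece over $\Gamma_{t,j}\setminus \Gamma_{t,i}$. Letting $L'$ denote $j$'s average loss over the earlier window of length $b-a$, we obtain
\[
b \cdot L_{j,j} \;=\; a \cdot L_{i,j} \;+\; (b-a)\cdot L'.
\]
Since $L' \ge 0$ (losses are non-negative), this yields $L_{j,j} \ge (a/b)\, L_{i,j} > (a/b)(L_{i,i}+\eps)$.

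Finally, I split on whether conclusion (ii) already holds. If not, then $b < a/(1-\alpha)$, so $a/b > 1-\alpha$, and combining with the previous bound,
\[
L_{j,j} \;>\; (1-\alpha)(L_{i,i}+\eps) \;=\; (L_{i,i}+\eps) - \alpha(L_{i,i}+\eps) \;\ge\; L_{i,i} + \eps - \alpha,
\]
where the last step uses $L_{i,i}+\eps \le 1$. This is precisely conclusion (i), completing the case analysis. I do not foresee a genuine obstacle; the only subtlety is that the bound $L_{i,i}+\eps \le 1$ needed to absorb $\alpha(L_{i,i}+\eps)$ into $\alpha$ must be extracted from the eviction-failure inequality $L_{i,j}\le 1$ itself, rather than treated as an external assumption.
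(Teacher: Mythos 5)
Your proof is correct and follows essentially the same route as the paper's: both hinge on the failed eviction inequality $L_{i,j} > L_{i,i} + \eps$, the decomposition of $j$'s loss over $\Gamma_{t,j}$ into the $\Gamma_{t,i}$ part plus a non-negative earlier part, and the bound $L_{i,j}\le 1$ (which you repackage as $L_{i,i}+\eps\le 1$). The only difference is cosmetic: you negate (ii) and derive (i), whereas the paper negates (i) and derives (ii).
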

    
The underlying intuition is simple: Since $i$ survives the pruning and $j$ joins the pool earlier than $i$, we know that $j$ achieves $\eps$ worse average performance than $i$ over $i$'s own interval. Now consider the opposite of condition (i)---it asks $j$ to be at most $\eps - \alpha$ worse average loss overall than $i$. For that to happen, $j$ needs to have low error on the days before $i$ enters the pool. However, even if $j$ gets $0$ loss over this prior period, it still requires time to bring   the average loss down by at least $\eps$. Therefore, $j$ must have entered the pool quite earlier than $i$, which is condition (ii).

\begin{proof}[Proof of \cref{lem:loss-length}]
    By definition of the algorithm, any such expert $i$ must have survived the comparison-based pruning against $j \succ i$. Therefore, the condition \eqref{eqn:evict-rule} must fail for $i,j$, and so we have 
    \begin{equation}\label{eqn:fail-rule}
         L_{i,j} > L_{i,i} + \eps.
    \end{equation}
    Now fix $\alpha \in (0, 1)$ and let's assume that $L_{j,j} < L_{i,i} +\eps - \alpha$. It suffices to show that $|\Gamma_{t, j}|  \geq (1 + \frac{\alpha}{1 -\alpha})|\Gamma_{t, i}|$. Let $T_1 = |\Gamma_{t, j}| -|\Gamma_{t, i}| > 0$  be the number of extra days  $j$ lies in the pool than $i$, and $T_2 = |\Gamma_{t,i}|$. Also let $L_1 = \mathcal{L}_{\Gamma_{t,j} \setminus \Gamma_{t,i}}(j)$  be    the average loss of $j$ over the $T_1$ days. Simply rewriting the assumption that $L_{j,j} < L_{i,i} +\eps - \alpha$:
    \begin{equation}\label{eqn:assumption1}
        L_{j,j} = \frac{L_1 T_1 + L_{i,j} T_2}{T_1 + T_2} < L_{i,i} + \eps - \alpha.
    \end{equation}
    Since $L_1 \geq 0$, we get 
    \begin{equation}
        \frac{L_{i,j} T_2}{T_1 + T_2} < L_{i,i} + \eps -\alpha.
    \end{equation}
    Substituting the inequality $L_{i,i} < L_{i,j} -\eps$ \eqref{eqn:fail-rule} to the right-side:
    \begin{equation}
        \frac{L_{i,j} T_2}{T_1 + T_2} < L_{i,j} - \alpha
    \end{equation}
    Rearranging, we get 
    \begin{equation}
        T_1 >  \left(\frac{L_{i,j} }{L_{i,j}-\alpha} -1\right)T_2 =   \frac{ \alpha  }{L_{i,j}-\alpha} T_2 > \frac{\alpha}{1 - \alpha} T_2,
    \end{equation}
    where the last inequality follows since $L_{i,j} \leq 1$. Equivalently, we have that $ |\Gamma_{t,j}| -|\Gamma_{t,i}| > \frac{\alpha}{1 - \alpha}|\Gamma_{t,i}|  $, and   this completes the proof.
\end{proof}

Using the above lemma, we can bound the size of the pool via a potential function argument. The potential takes both situations of \cref{lem:loss-length} into account, where either length or loss gets larger. 

\begin{lemma}[pool size]
\label{lem:pool-size}
   For any epoch $t \in [T/B]$, the size of the pool $\widetilde{P}_t$ is at most $S =  \frac{4}{\eps}\log T$.
\end{lemma}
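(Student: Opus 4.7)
My plan is a potential-function argument that lets the two alternatives of \cref{lem:loss-length} both contribute to the same scalar quantity, in order to get a linear (rather than quadratic) dependence on $1/\eps$. Order the surviving experts by entry time as $j_1 \succ j_2 \succ \cdots \succ j_k$, noting that distinct experts in $\widetilde{P}_t$ have distinct entry epochs, so $|\Gamma_{t,j_m}| > |\Gamma_{t,j_{m+1}}|$. I will introduce
\[
\Phi(j) \;:=\; L_{j,j} + \log|\Gamma_{t,j}|,
\]
and aim to show $\Phi(j_m) - \Phi(j_{m+1}) \geq \eps$ for every consecutive pair. Combined with the trivial range $0 \leq \Phi(j) \leq 1 + \log T$ (the first term lies in $[0,1]$ and $|\Gamma_{t,j}| \in [1, T/B]$), telescoping then yields $(k-1)\eps \leq 1 + \log T$ and hence $k \leq (4/\eps)\log T$, as required.

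The crux is to use the free parameter $\alpha$ in \cref{lem:loss-length} optimally. Fixing a pair $(j_m, j_{m+1})$, write $r := |\Gamma_{t,j_m}|/|\Gamma_{t,j_{m+1}}| > 1$ and pick the critical value $\alpha^\ast := 1 - 1/r$, chosen so that the inequality in condition~(ii) becomes the equality $1 + \alpha^\ast/(1-\alpha^\ast) = r$. For any $\alpha$ strictly larger than $\alpha^\ast$, condition~(ii) fails strictly, and so \cref{lem:loss-length} forces condition~(i): $L_{j_m,j_m} \geq L_{j_{m+1},j_{m+1}} + \eps - \alpha$. Letting $\alpha \downarrow \alpha^\ast$ gives
\[
L_{j_m,j_m} - L_{j_{m+1},j_{m+1}} \;\geq\; \eps - 1 + \tfrac{1}{r}.
\]
Adding the elementary inequality $\log r \geq 1 - 1/r$ (valid for $r \geq 1$, which I would verify by checking equality at $r=1$ and comparing derivatives) to the above, the $1/r$ terms cancel and I obtain $\Phi(j_m) - \Phi(j_{m+1}) \geq \eps$.

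The main obstacle, which is exactly where a naive approach falters, is that handling the two alternatives of \cref{lem:loss-length} in isolation---bounding the number of ``loss-dominated'' pairs by $O(1/\eps)$ and the ``length-dominated'' pairs by $O(\log T/\eps)$---only yields $O(\log T / \eps^2)$, because $L_{j,j}$ can drop by as much as $1$ on a length-dominated pair and the loss budget is therefore not globally monotone along $j_1, \ldots, j_k$. The insight is that, read through the optimal $\alpha^\ast$, the lemma is really the single combined inequality $(L_{j_m,j_m} - L_{j_{m+1},j_{m+1}}) + (1 - |\Gamma_{t,j_{m+1}}|/|\Gamma_{t,j_m}|) \geq \eps$, and $\Phi$ is engineered so that its increment dominates this quantity via $\log r \geq 1 - 1/r$. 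Identifying this pairing is the only non-routine step; once it is in hand, telescoping and the endpoint estimates are immediate.
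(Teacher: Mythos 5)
Your proposal is correct and follows essentially the same route as the paper: a potential function combining the expert's own average loss with the logarithm of its interval length, whose consecutive increments are lower-bounded by $\eps$ via \cref{lem:loss-length} and then telescoped. The only cosmetic differences are that you optimize the free parameter $\alpha$ at the critical value $1-1/r$ determined by the length ratio and use $\log r \geq 1 - 1/r$, whereas the paper fixes $\alpha$ by the loss gap, puts a factor $2$ on the logarithmic term, and uses $\log(1+x)\geq x/2$; both give the same bound.
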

\begin{proof}
Fix $t$ and let $S$ be the size of the pool.  We sort the experts in the pool in ascending order  of their entering times: $i_1 \prec i_2  \prec \cdots\prec i_S$. Define the potential function $\Phi : [S] \rightarrow \R_{\geq 0}$, where 
\begin{align}
\label{eq:potential}
\Phi(\tau) = 2 \log |\Gamma_{t, i_{\tau}}| +  L_{i_{\tau}, i_{\tau}}, \quad \tau \in [S].
\end{align}    

We note that $\Phi(1) \geq 0$ and $\Phi(S) \leq 2\log T + 1$. The goal is to prove 
\begin{align}
\label{eq:potential-increase}
    \Phi(\tau + 1) - \Phi(\tau) \geq \eps, \quad \forall \tau \in [S-1].
\end{align}
This would imply the pool size $S \leq 4\eps^{-1}\log T$. 

We observe that \cref{eq:potential-increase} is simply true whenever $L_{i_{\tau+1}, i_{\tau+1}} \geq L_{i_{\tau}, i_{\tau}} + \eps$ since $i_{\tau+1}$ enters the pool before $i_{\tau}$ by our assumption.
Now, it is safe to write $L_{i_{\tau+1}, i_{\tau+1}} = L_{i_\tau, i_{\tau}} + \eps - \alpha$, for some $\alpha \in (0,1)$ ($\alpha$ would not exceed $1$ as one can easily show $L_{i_\tau, i_{\tau}} + \eps < 1$, i.e., $i_{\tau}$ survives at $\widetilde{P}_t$). Then we have
\begin{align*}
    \Phi(\tau + 1) - \Phi(\tau) &=  2\log(|\Gamma_{t, i_{\tau+1}}|/|\Gamma_{t, i_{\tau}}|) + L_{i_{\tau+1}, i_{\tau+1}} - L_{i_{\tau}, i_{\tau}} \\
    &= 2\log \left(1 + \frac{\alpha}{1 - \alpha}\right) + \eps - \alpha\\
    &\geq  \min \left\{\frac{\alpha}{1 - \alpha}, 2\log 2 \right\} + \eps - \alpha \geq \eps,
\end{align*}
where the second step follows from   \cref{lem:loss-length}, the third step follows from $\log(1+x)\geq \frac{x}{2}$ whenever $x<1$ and the last step holds since $\alpha \in (0,1)$. We have proven \cref{eq:potential-increase} and completed the proof.    \end{proof} 

We can now wrap up with a memory bound.
\begin{proposition}[memory bound]
    \label{prop:memory}
    At any time during the execution of \textsc{Baseline}, the memory usage is at most  $O\left(\eps^{-2} \log^3 (nT)\right)$ bits.
\end{proposition}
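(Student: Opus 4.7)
The proof will be essentially a bookkeeping exercise building directly on the pool-size bound from \cref{lem:pool-size}. The plan is to enumerate every quantity that \textsc{Baseline} must retain in memory, observe that each occupies $O(\log(nT))$ bits (one word), and then multiply by the number of such quantities, which is controlled by $S = O(\eps^{-1} \log T)$.

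First, I would invoke \cref{lem:pool-size} to assert that at every epoch $t$ the pool $\widetilde{\mathcal P}_t$ (as well as the transient $\mathcal{P}_t$ obtained after adding $R_t$ but before pruning) has size at most $O(\eps^{-1}\log T)$. The additive contribution of $R_t$ is $\eps^{-2}$, but this is absorbed into the subsequent space estimate and we can simply work with a uniform bound $S = O(\eps^{-2} + \eps^{-1}\log T) = O(\eps^{-2}\log T)$ on the transient pool size in between epochs. Next I would enumerate the items the algorithm maintains:
\begin{enumerate}[(i)]
    \item the identities (indices in $[n]$) of the experts in the current pool, using $O(S\log n)$ bits;
    \item for each expert $i$ in the pool, its arrival epoch $\alpha(t,i)$ and own running average loss $L_{i,i}$, using $O(S\log(nT))$ bits;
    \item for each pair $(i,j)$ with $j$ older than $i$, the running average loss $\mathcal{L}_{\Gamma_{t,i}}(j)$ needed to evaluate the eviction rule \eqref{eqn:evict-rule}, using $O(S^2)$ words, i.e.\ $O(S^2 \log(nT))$ bits (as illustrated in \cref{fig:saving});
    \item the MWU weights over the pool for the current epoch, which can be represented by storing $\sum_{\tau} \ell_\tau(i)$ per expert in $O(\log(nT))$ bits, for a total of $O(S\log(nT))$ bits;
    \item global counters (current epoch, day within epoch, learning rate, etc.), using $O(\log(nT))$ bits.
\end{enumerate}
Adding these contributions, the dominant term is (iii), namely $O(S^2 \log(nT))$ bits.

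Plugging in $S = O(\eps^{-1}\log T)$ (the additive $\eps^{-2}$ from the freshly added $R_t$ only ever appears transiently and contributes $O(\eps^{-2}\log(nT))$ bits which is subsumed), we obtain a total of $O(\eps^{-2}\log^2 T \cdot \log(nT)) = O(\eps^{-2} \log^3(nT))$ bits, which matches the desired bound.

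The argument has no real technical obstacle: \cref{lem:pool-size} already did the heavy lifting. The only subtlety worth flagging is the transient blow-up when $R_t$ (of size $\eps^{-2}$) is first added at the start of an epoch, before the pruning step restores the bound. One must either (a) argue that the freshly added experts need only their arrival time and their own loss counter (so adding them costs $O(\eps^{-2}\log(nT))$ bits, well within the budget), or (b) fold this into the pool-size bound by applying \cref{lem:pool-size} to the pool as observed in between epochs and accounting for the temporary bulge separately. Either route yields the claimed $O(\eps^{-2}\log^3(nT))$ bound.
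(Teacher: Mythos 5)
Your proposal is correct and follows essentially the same route as the paper's proof: invoke the pool-size bound $S = O(\eps^{-1}\log T)$ from \cref{lem:pool-size}, charge $O(S^2)$ words of $O(\log (nT))$ bits each for the pairwise interval losses $\mathcal{L}_{\Gamma_{t,i}}(j)$ needed by the eviction rule, and account for the freshly sampled $\eps^{-2}$ experts separately with $O(\eps^{-2}\log (nT))$ extra bits. The additional bookkeeping items you list (expert identities, arrival times, MWU state, counters) are lower-order and consistent with the paper's accounting.
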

\begin{proof}
Fix an epoch $t$, as we observed earlier, for each expert $j \in \widetilde{\mathcal{P}}_{t-1}$, the algorithm only needs to keep track of their loss over the intervals $\Gamma_{t-1, i}$ for all  $i \in \mathcal{P}_{t-1}$ such that $i \prec j $. In particular, for each $j 
  \in \mathcal{P}_{t-1}$, the algorithm stores $\mathcal{L}_{\Gamma_{t-1,i}} (j)$ for all $i \prec j$. 
  This is sufficient for executing the eviction rule (\cref{eqn:evict-rule}). 
  Storing each $\mathcal{L}_{\Gamma_{t-1, i}} (j)$ takes $O(\log nT)$ bits.
  By \cref{lem:pool-size}, there are at most $S = \frac{4}{\eps} \log T$ experts in $\mathcal{P}_t$. 
  This leads to $S^2 \cdot O(\log nT)= O(\eps^{-2} \log^3 (nT))$ bits of memory usage.
  We sample $|R_t| = \eps^{-2}$ new experts at the beginning of $t$-th epoch and keep track of them within the epoch, these takes $O(\eps^{-2}\log nT)$ extra bits.
\end{proof}

\subsubsection{Regret bound}
\label{sec:baseline-regret}
We now prove a regret bound of {\sc Baseline}. On a high level, the MWU procedure only guarantees that \textsc{Baseline} is always competitive with the best expert {\em in the pool}. The key challenge here therefore is to argue  that our pool is competitive against even the best expert among $[n]$, which may lie outside the pool.
Indeed, next we will prove that the   the experts in the pool is nearly competitive against the best expert, except  in $\widetilde{O}(\eps^2 n)$ epochs. Formally, we show:

\begin{proposition}[regret bound]
    \label{prop:regret}
    Given the parameter $\eps \in (0,1/2)$ and positive integer $B \ll T$, \textsc{Baseline} (\cref{alg:baseline}) achieves a total regret of $O\left(\eps T + TB^{-1/2}\log^{1/2} (nT) + \eps^2 n B \log T\right)$ with probability at least $1 - 1/\poly(T)$.
\end{proposition}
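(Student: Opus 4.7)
I would decompose the total regret as
\begin{align*}
R(T) \;=\; \sum_{t=1}^{T/B} \bigl(L_t^{\text{alg}} - B\cdot\mathcal L_t(j(t))\bigr) \;+\; \sum_{t=1}^{T/B} B\cdot\bigl(\mathcal L_t(j(t)) - \mathcal L_t(i^*)\bigr),
\end{align*}
where $L_t^{\text{alg}}$ is the cumulative loss of the algorithm in epoch $t$ and $\{j(t) \in \mathcal P_t\}_t$ is a ``shadow'' sequence to be constructed. The first (intra-pool) term is controlled via \cref{lem:mwu}: since \cref{lem:pool-size} caps the pool size at $O(\eps^{-1}\log T)$ and MWU restarts on $\mathcal P_t$ with uniform weights for $B$ days, the per-epoch regret against any expert in $\mathcal P_t$ is $O(\sqrt{B\log(nT)})$ with probability $1-1/\poly(T)$; a union bound over the $T/B$ epochs yields $O(TB^{-1/2}\log^{1/2}(nT))$.

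The heart of the argument is constructing $j(t)$ and bounding the second term by $O(\eps T + \eps^2 nB\log T)$. I would partition $[1,T/B]$ into phases $[s_1,e_1],\ldots,[s_K,e_K]$ using a \emph{forced scenario}: at each phase start $s_k$, hypothetically insert $i^*$ into $\mathcal P_{s_k}$ (exempting it from Line~8 of \cref{alg:baseline}) and run the algorithm forward on the true loss sequence and samples. Either (i) the forced $i^*$ is evicted at some epoch $\tau_k \le T/B$ by an older expert $j_k$, in which case I set $e_k = \tau_k$ and $j(t) := j_k$ on $[s_k, e_k]$; or (ii) the forced $i^*$ survives to $T/B$ (the phase is \emph{safe}), and I distinguish: if $i^* \in R_{s_k}$ I declare the phase \emph{lucky}, take $j(t) = i^*$ on $[s_k, T/B]$, and terminate; otherwise the phase is \emph{unlucky}, I set $e_k = s_k$ (one epoch paying at most $B$), and continue. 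In case (i) the domination rule \eqref{eqn:evict-rule} requires the evictor to be older, so $j_k$'s trajectory is insensitive to the presence of the younger forced-$i^*$, implying $j_k \in \mathcal P_t$ for all $t \in [s_k,e_k]$; the rule also gives $\mathcal L_{[s_k,e_k]}(j_k) \le \mathcal L_{[s_k,e_k]}(i^*) + \eps$. Summing across phases, the normal and lucky phases together contribute at most $\eps T$ to the second term, while the $N_u$ unlucky phases contribute at most $N_u \cdot B$.

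The final step is to show $N_u = O(\eps^2 n\log T)$ with high probability. The key point is that at every safe phase the unlucky event is precisely $\{i^* \notin R_{s_k}\}$, which holds with probability $1 - p$ where $p = \eps^{-2}/n$; since $R_1, R_2, \ldots$ are drawn independently across epochs, $N_u$ is stochastically dominated by a geometric random variable with parameter $p$, giving $N_u = O(\eps^2 n\log T)$ with probability $1 - 1/\poly(T)$. Combining the three bounds delivers the claimed regret. The main obstacle I anticipate is measurability: safety at phase $k$ is a stopping-time--style event that can depend on $R_{s_k}$ through $\mathcal P_{s_k}$ and on future samples through the forced run, so decoupling it from $\{i^* \in R_{s_k}\}$ to justify the geometric tail bound takes care. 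A clean fix is to redefine safety relative to $\widetilde{\mathcal P}_{s_k - 1}$ (the post-eviction pool right before $R_{s_k}$ is drawn), which makes safety measurable with respect to $R_1, \ldots, R_{s_k-1}$ and future samples only and hence independent of $\{i^* \in R_{s_k}\}$, at the price of an additive adjustment to the phase-length bounds that can be absorbed into constants.
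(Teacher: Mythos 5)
Your construction is essentially the paper's own argument: the paper runs exactly this ``forced-scenario'' stochastic process (the conditional eviction time of $i^*$ is a function of $\widetilde{\mathcal P}_{\beta(\tau)-1}$ and the fixed loss sequence only, which is your measurability fix), uses the same intra-pool/pool-vs-$i^*$ decomposition, and bounds the unlucky epochs by the same geometric tail (\cref{lem:unluck}). However, there is one genuine gap, in your \emph{lucky} case. When the forced $i^*$ is safe and $i^* \in R_{s_k}$, you set $j(t)=i^*$ on $[s_k,T/B]$ and invoke the MWU bound against $j(t)$, which requires $i^*\in\mathcal P_t$ for all those epochs \emph{in the actual run}. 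But your forced scenario exempts $i^*$ from the best-of-$R$ pruning step of \cref{alg:baseline} (``remove all except the best expert from $R$''), while the actual run does not: if some other expert sampled in $R_{s_k}$ happens to beat $i^*$ within epoch $s_k$, the actual algorithm discards $i^*$ at the end of that epoch even though no older expert would ever dominate it under \eqref{eqn:evict-rule}. Safety of the forced run therefore does not imply survival of $i^*$ in the actual run, and the intra-pool MWU term is uncontrolled on all of $(s_k,T/B]$ --- a potentially linear chunk of the horizon. (Dropping the exemption instead would break your case~(i): removal by the best-of-$R$ rule is not an eviction by an older expert, carries no $\eps$-slack guarantee, and makes the forced eviction time depend on $R_{s_k}$, spoiling the independence you need for the geometric bound.)

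The paper closes exactly this hole with an extra case split (its Cases 2-1 vs.\ 2-2, see \cref{lem:cover}). In a safe phase, first look at whether $R_{\beta(\tau)}$ contains an expert that outperforms $i^*$ during that single epoch. If yes, that freshly sampled expert is in the pool for that epoch and covers $i^*$ with zero slack; charge only that one epoch and move to the next phase (no unlucky charge, no termination). Only when no sampled expert beats $i^*$ in the epoch do you inspect whether $i^*\in R_{\beta(\tau)}$: in that event $i^*$ wins the best-of-$R$ competition, hence survives the epoch, and being safe it survives forever, so terminating with $j(t)=i^*$ is now legitimate; otherwise the epoch is unlucky. The unlucky event becomes ``no sampled expert beats $i^*$ \emph{and} $i^*\notin R_{\beta(\tau)}$,'' and since conditioning on $R_{\beta(\tau)}$ avoiding the set of experts better than $i^*$ only increases the chance that $i^*$ is sampled, each such visit still terminates with probability at least $\eps^{-2}/n$, so your geometric tail bound and the final $O(\eps T + TB^{-1/2}\log^{1/2}(nT)+\eps^2 nB\log T)$ accounting go through unchanged.
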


Let $i^*$ denote the best expert in hindsight. Since the adversary is oblivious to the algorithm's decision, it suffices to fix a loss sequence $\ell_1,\ell_2, \ldots, \ell_T \in [0, 1]^n$ and prove the algorithm achieves low regret on it. 
Let $\xi_{t}$ denote the random bits used by {\sc Baseline} during the $t$-th epoch, which includes both the randomness of sampling $R_t$ and the random bits used by MWU within the epoch.

Initialize $\mathcal{B} = \emptyset $ and $i(t) = \mathsf{nil}$ for each $ t \in [T/B]$. We will build up  the set $\mathcal{B} \subseteq [T/B]$ over time. Intuitively, it   contains  ``unlucky'' epochs that we have no control over the regret. On the other hand, for any lucky epoch $t\in [T/B]\backslash \mathcal{B}$, we would associate the $t$-th epoch with an expert $i(t) \in [n]$ that   (1) is competitive with $i^{*}$, and (2) lies in the pool $\mathcal{P}_{t}$.
Formally, the value of $\mathcal{B}$ and $\{i(t)\}_{t\in [T/B]}$ are   assigned by the following stochastic process.

\paragraph{Stochastic process} Starting with $\beta(1) = 1$ and $\tau = 1$, define a (discrete) stochastic process by   realizing the randomness $\xi_1, \ldots, \xi_{T/B}$ epoch by epoch.
Suppose the process proceeds to step $\tau$ and the randomness $\xi_{1}, \ldots, \xi_{\beta(\tau) - 1}$ are realized up to the $(\beta(\tau) - 1)$-th epoch. 
Then the pool $\widetilde{\mathcal{P}}_{\beta(\tau) - 1}$ is also fixed by definition.
Let $t(i^{*}, \tau)$ denote the epoch when $i^{*}$ gets evicted, conditioned on $i^{*}\in R_{\beta(\tau)}$ due to the randomness $\xi_{\beta(\tau)}$ and that it survives the competition among $R_{\beta(\tau)}$ (Line 7 of  \cref{alg:baseline}). 
(If $i^{*}$ never gets evicted, then we set $t(i^{*}, \tau) = \infty$.)
The (conditional) eviction time $t(i^{*}, \tau)$ of expert $i^{*}$ is determined by  the fixed  pool $\widetilde{\mathcal{P}}_{\beta(\tau) - 1}$ and the loss sequence $\ell_1, \ldots, \ell_{T}$. 
In other words, we observe that $t(i^{*}, \tau) \in [T/B]$ is only a function of  $\ell_1, \ldots, \ell_{T}$ and $\widetilde{\mathcal{P}}_{\beta(\tau) - 1}$: it is independent of $\xi_{\beta(\tau) +1}, \ldots, \xi_{T/B}$ because $i^{*}$ enters at epoch $\beta(\tau)$ and can only be kicked out by experts joining before it, i.e., the experts in $\widetilde{\mathcal{P}}_{\beta(\tau) - 1}$; and it is independent of $\xi_{\beta(\tau)}$ because we already condition on the event of $i^{*}$ surviving among $R_{\beta(\tau)}$.

We now continue to define the stochastic process  and consider the following cases.
\begin{itemize}
    \item Case 1. Suppose $t(i^{*}, \tau) \neq \infty$, i.e., expert $i^{*}$ would be evicted at the end of epoch $t(i^{*}, \tau) \in [\beta(\tau), T/B]$. Suppose it is removed  by expert $i_{\tau}^*$. Then we set $\beta(\tau+1) = t(i^{*}, \tau)+1$ and assign $i(\beta(\tau)) = \cdots = i(\beta(\tau+1)-1) = i_{\tau}^*$. 
    We then realize the randomness $\xi_{\beta(\tau)}, \ldots, \xi_{\beta(\tau+1)-1}$ and proceed to step $\tau + 1$.
    \item Case 2. Suppose $t(i^{*}, \tau) = \infty$, i.e., expert $i^{*}$ would not be kicked out of the pool once it is sampled in $R_{\beta(\tau)}$ and survives the competition among $R_{\beta(\tau)}$. We then realize the randomness of $\xi_{\beta(\tau)}$ and further divide into two cases based on it.
    \begin{itemize}
        \item Case 2-1. If $R_{\beta(\tau)}$ contains an expert $i_{\tau}^* \in [n]$ that receives less loss than $i^{*}$ during epoch $\beta(\tau)$, then we set $\beta(\tau + 1) = \beta(\tau) + 1$ and assign $i(\beta(\tau)) = i_{\tau}^*$. We proceed to step $\tau+1$.
        \item Case 2-2. Suppose $R_{\beta(\tau)}$ does not contain any expert that gets less loss than $i^{*}$ during epoch $\beta(\tau)$, then 
        \begin{itemize}
            \item Case 2-2-1. If expert $i^{*}$ has been sampled, i.e., $i^{*} \in R_{\beta(\tau)}$, then assign $i(\beta(\tau)) = \cdots =i(T/B)= i^*$ and terminate the process. We note that by the definition of $t(i^{*}, \tau)$ and the condition, $i^{*}$ will survive till the end.
            \item Case 2-2-2.  If expert $i^{*}$ has not been sampled, i.e., $i^{*}\notin R_{\beta(\tau)}$, then add $\beta(\tau)$ into $\mathcal{B}$ and set $\beta(\tau + 1) = \beta(\tau) + 1$. We proceed to step $\tau + 1$.
        \end{itemize}
    \end{itemize}
\end{itemize}

Having defined the stochastic process, we proceed with our regret analysis.
The following two lemmas are critical to the proof. 
First, we show the size of $\mathcal{B}$ is small with high probability:
\begin{lemma}[unlucky epoch]
\label{lem:unluck}
    With probability at least $1 - 1/\poly(T)$, the stochastic process ends with $|\mathcal{B}| \leq O(\eps^{2}n\log T)$.
\end{lemma}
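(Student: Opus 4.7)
The plan is to observe that every epoch added to $\mathcal{B}$ corresponds exactly to a step of the stochastic process falling into Case 2-2-2, and to show that the process has a non-trivial chance of terminating (via Case 2-2-1) each time it visits Case 2-2. This yields a geometric-type tail bound on $|\mathcal{B}|$.

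First, I would isolate the Case 2-2 steps. Fix a step $\tau$ and condition on the entire history, i.e.\ the (fixed) loss sequence together with the realized randomness $\xi_1,\ldots,\xi_{\beta(\tau)-1}$. Under this conditioning, $\beta(\tau)$, the pool $\widetilde{\mathcal P}_{\beta(\tau)-1}$, and the set $G_\tau \subseteq [n]\setminus\{i^*\}$ of experts that strictly beat $i^*$ during epoch $\beta(\tau)$ are all determined. The only fresh randomness is $\xi_{\beta(\tau)}$, which draws $R_{\beta(\tau)}$ uniformly from $\binom{[n]}{\eps^{-2}}$. The key estimate I need is
\begin{align*}
\Pr\!\left[\text{Case 2-2-1 at step }\tau \,\bigm|\, \text{Case 2-2 at step }\tau,\ \text{history}\right] \;\geq\; \eps^{-2}/n.
\end{align*}
This follows because Case 2-2 is the event $R_{\beta(\tau)} \cap G_\tau = \emptyset$; conditional on it, $R_{\beta(\tau)}$ is uniform over $\binom{[n]\setminus G_\tau}{\eps^{-2}}$, and Case 2-2-1 additionally requires $i^* \in R_{\beta(\tau)}$, which has conditional probability $\eps^{-2}/(n-|G_\tau|) \geq \eps^{-2}/n$ since $i^*\in[n]\setminus G_\tau$.

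With the estimate in hand, I would index the Case 2-2 events in the order they occur as $Z_1, Z_2, \ldots$, each valued in $\{\text{2-2-1},\text{2-2-2}\}$. Because the process halts on the first $Z_i=\text{2-2-1}$, we have that $|\mathcal{B}|$ equals the number of $Z_i$'s equal to $\text{2-2-2}$ before that first termination (or simply the total count of $Z_i$'s if no termination ever occurs). Applying the chain rule step by step, with the key estimate invoked at each factor,
\begin{align*}
\Pr[\,|\mathcal{B}| \geq k\,] \;\leq\; \Pr[Z_1=\cdots=Z_k=\text{2-2-2}] \;\leq\; (1-\eps^{-2}/n)^{k} \;\leq\; e^{-k\eps^{-2}/n}.
\end{align*}
Choosing $k = C\eps^{2} n \log T$ for a sufficiently large absolute constant $C$ makes the right-hand side at most $T^{-C}$, which gives the claimed $1-1/\poly(T)$ probability bound.

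The main subtlety to watch is the conditioning structure: one must cleanly separate, at each step $\tau$, the deterministic quantities $(\beta(\tau), \widetilde{\mathcal P}_{\beta(\tau)-1}, G_\tau)$ from the independent fresh randomness $\xi_{\beta(\tau)}$, so that the per-step conditional lower bound $\eps^{-2}/n$ holds uniformly over all histories. Once that bookkeeping is clean the chain-rule calculation above is routine, and nothing else in the argument is delicate.
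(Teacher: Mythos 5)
Your proposal is correct and follows essentially the same route as the paper: bound $|\mathcal{B}|$ by the number of Case 2-2 visits, note that each such visit terminates via Case 2-2-1 with conditional probability at least $\eps^{-2}/n$ (uniformly over the realized history), and conclude with the geometric bound $(1-\eps^{-2}/n)^{C\eps^2 n\log T}\leq T^{-C}$. Your explicit treatment of the conditioning (that $R_{\beta(\tau)}$, conditioned on missing $G_\tau$, is uniform over $\eps^{-2}$-subsets of $[n]\setminus G_\tau$) is a slightly more careful justification of the same estimate the paper states directly.
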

\begin{proof}
    We count the total number of steps that the stochastic process come with Case 2-2. 
    Whenever the process falls into Case 2-2, i.e., at some step $\tau$, $R_{\beta(\tau)}$ does not contain any expert better than $i^{*}$ during epoch $\beta(\tau)$, we know that 
    \begin{align}
    \Pr\left[i^{*} \in R_{\beta(\tau)} | R_{\beta(\tau)} \text{ has no expert better than } i^{*} \text{ during epoch } \beta(\tau)\right] \geq \frac{1}{n}\cdot \eps^{-2}, \label{eq:terminate-prob}
    \end{align}
    since there are $\eps^{-2}$ experts   sampled without replacement from $[n]$.
    Consider the following two cases:
    (1) if $i^{*} \in R_{\beta(\tau)}$, then the process would terminate immediately and there will be no more Case 2-2 in the future. This situation happens with probability at least $\frac{1}{n}\cdot \eps^{-2}$ by~\cref{eq:terminate-prob};
and     (2) if $i^{*} \notin R_{\beta(\tau)}$, then the process would still continue.
    Since we do not augment $\mathcal{B}$ in Case 1 and Case 2-1, the size of $\mathcal{B}$ is bounded by total number of steps of Case 2-2, and we have,
    \[
    \Pr\left[|\mathcal{B}| \geq c \eps^2 n \log T\right] \leq \left(1 - 1/\eps^2n\right)^{c\eps^{2}n\log T} \leq T^{-c}.
    \]
    This concludes the proof.
\end{proof}

We then prove for epoch $t \in [T]$, the expert $i(t)$ resides in pool $\mathcal{P}_t$ and is competitive with $i^{*}$ (over certain time period).
\begin{lemma}[cover]
\label{lem:cover}
For any epoch $t \in [T/B] \backslash \mathcal{B}$, suppose $\beta(\tau) \leq t < \beta(\tau + 1)$, then we have
\begin{enumerate}[(i)]
    \item $i(\beta(\tau)) = \cdots = i(t) = \cdots = i(\beta(\tau+1) -1) = i_{\tau}^*  \neq \mathsf{nil}$;
    \item $i_{\tau}^* \in \mathcal{P}_\nu$ for any $\nu \in [\beta(\tau): \beta(\tau+1) -1]$;
    \item $\sum_{\nu=\beta(\tau)}^{\beta(\tau +1) -1}\mathcal{L}_{\nu}(i(t)) < \sum_{\nu=\beta(\tau)}^{\beta(\tau +1) -1}\mathcal{L}_{\nu}(i^*) + \eps (\beta(\tau+1) - \beta(\tau))$.
\end{enumerate} 
\end{lemma}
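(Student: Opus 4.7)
The plan is to prove the three claims by a case analysis on which branch of the stochastic process produced step $\tau$. Since $t \in [T/B] \setminus \mathcal{B}$, the epoch $\beta(\tau)$ cannot have been assigned in Case 2-2-2, as that is the only branch that enlarges $\mathcal{B}$. Hence exactly one of Case 1, Case 2-1, or Case 2-2-1 applies. Claim (i) is then essentially a definition-chase: each of these three branches explicitly sets $i(\nu) = i_\tau^*$ for all $\nu \in [\beta(\tau), \beta(\tau+1)-1]$ with $i_\tau^* \neq \mathsf{nil}$, so (i) follows immediately.

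For claim (ii) I would handle the easy cases first and then treat Case 1 separately. In Case 2-1 and Case 2-2-1 we have $i_\tau^* \in R_{\beta(\tau)} \subseteq \mathcal{P}_{\beta(\tau)}$; in 2-1 the interval $[\beta(\tau),\beta(\tau+1)-1]$ has length one, and in 2-2-1 the definition of the case guarantees $i^* = i_\tau^*$ survives in the pool forever. In Case 1, by definition $i_\tau^*$ is the expert evicting the hypothetical $i^* \in R_{\beta(\tau)}$ at the end of epoch $t(i^*,\tau) = \beta(\tau+1)-1$, so $i_\tau^* \in \mathcal{P}_{t(i^*,\tau)}$ and $i_\tau^*$ entered strictly before $i^*$. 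Since an expert never re-enters the pool once evicted, I would deduce $i_\tau^* \in \mathcal{P}_\nu$ for every $\nu$ between its entrance time and $t(i^*,\tau)$, covering the entire range in (ii).

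Claim (iii) again splits into three cases. In Case 2-2-1 the two sums are identical and the strict inequality reduces to $\epsilon(\beta(\tau+1)-\beta(\tau)) > 0$. In Case 2-1 the interval has length one and, by the defining condition of the case, $\mathcal{L}_{\beta(\tau)}(i_\tau^*) \leq \mathcal{L}_{\beta(\tau)}(i^*)$, which is even stronger than needed. The substance of (iii) is Case 1: I would apply the eviction rule~\eqref{eqn:evict-rule} used to remove $i^*$ at the end of epoch $t(i^*,\tau)$. Since $i_\tau^*$ is older than $i^*$ and evicts it, we have $\mathcal{L}_{\Gamma_{t(i^*,\tau),i^*}}(i^*) \geq \mathcal{L}_{\Gamma_{t(i^*,\tau),i^*}}(i_\tau^*) - \eps$. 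The interval $\Gamma_{t(i^*,\tau),i^*}$ is precisely $[\beta(\tau),\beta(\tau+1)-1]$, so multiplying by its length rearranges to exactly the bound in (iii).

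The one genuinely non-trivial point is a coupling subtlety inside Case 1: the eviction time $t(i^*,\tau)$ and the identifying expert $i_\tau^*$ are defined relative to the hypothetical scenario in which $i^*$ sits in $R_{\beta(\tau)}$ and survives the competition there, whereas we need statements about the actual algorithmic trajectory (where $i^*$ may not be sampled at all). I expect this to be the main obstacle in formalizing the proof. My plan to resolve it is to observe that the survival of any expert strictly older than $i^*$ is determined solely by comparisons against yet older experts via the eviction rule, none of which involves $i^*$; in particular, the trajectory of $i_\tau^*$ through $\mathcal{P}_{\beta(\tau)-1},\ldots,\mathcal{P}_{t(i^*,\tau)}$ is identical under the two scenarios. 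Once this invariance is stated, both (ii) and the validity of applying~\eqref{eqn:evict-rule} in the actual world follow immediately, and the remainder of the argument is the routine unpacking sketched above.
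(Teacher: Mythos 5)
Your proposal is correct and follows essentially the same route as the paper: a case analysis over Case 1, Case 2-1, and Case 2-2-1, with (iii) in Case 1 read off from the eviction rule \eqref{eqn:evict-rule}. The coupling subtlety you flag is resolved exactly as in the paper's proof, which notes that $i_\tau^*$ entered before epoch $\beta(\tau)$, so its eviction time is determined by $\xi_1,\ldots,\xi_{\beta(\tau)-1}$ and the fixed loss sequence (older experts can only be evicted by yet older ones), hence is no earlier than $\beta(\tau+1)-1$ in the actual run as well.
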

\begin{proof}
    The first claim is straightforward from the assignment process. Since $t \in [\beta(\tau): \beta(\tau+1) -1]$ and $t \notin \mathcal{B}$, we note in the $\beta(\tau)$-th epoch, the process falls into Case 1, Case 2-1 or Case 2-2-1. 
    
    In Case 1. If $i^{*}$ is sampled in the $R_{\beta(\tau)}$ and happens to survive the comparison among $R_t$, then $i^{*}$ wound be evicted at the end of $(\beta(\tau+1) - 1)$-th epoch when comparing with expert $i_\tau^{*}$. First of all, this indicates that $i_{\tau}^*$ enters the pool before epoch $\beta(\tau)$ (an expert can only be evicted by older expert), and therefore, the eviction time of $i_{\tau}^{*}$ is already determined given $\xi_1, \ldots, \xi_{\beta(\tau)-1}$ and $\ell_1, \ldots, \ell_{T/B}$ and it is no earlier than $\beta(\tau+1) - 1$ because otherwise, it could not kick $i^{*}$ out. The third claims follows directly from the eviction rule (see \cref{eqn:evict-rule}).

    In Case 2-1, we know that $\beta(\tau) = t$, $\beta(\tau+1) = t+1$ and $i_{\tau}^{*} \in R_t$ is an expert that has better performance than $i^*$. The second and last claims are then straightforward.
    
    Finally, in Case 2-2-1, we note that $i_{\tau}^{*} = i^*$ and $i^{*}$ survives till the end. The last two claims are straightforward and we finish the proof.
\end{proof}

Now we can finish the proof of \cref{prop:regret}.
\begin{proof}[Proof of   \cref{prop:regret}]
 Conditioned on the high probability event of  \cref{lem:unluck}, the total regret of \textsc{Baseline} is controlled as follows:
\begin{align*}
    &~ \sum_{t=1}^{T/B}\sum_{b=1}^{B}\ell_{(t-1)B + b}\left(i_{(t-1)B + b}\right) - \ell_{(t-1)B + b}(i^{*})\\
    = &~\sum_{t\in[T/B]\backslash \mathcal{B}}\sum_{b=1}^{B}\left(\ell_{(t-1)B + b}\left(i_{(t-1)B + b}\right) - \ell_{(t-1)B + b}(i^{*})\right) + \sum_{t\in\mathcal{B}}\sum_{b=1}^{B}\ell_{(t-1)B + b}\left(i_{(t-1)B + b}\right) - \ell_{(t-1)B + b}(i^{*}) \\
    \leq &~ \sum_{t\in[T/B]\backslash \mathcal{B}}\sum_{b=1}^{B}(\ell_{(t-1)B + b}(i_{(t-1)B + b}) - \ell_{(t-1)B + b}(i^{*})) + O(\eps^2 n \log T)\cdot B \\
    \leq &~ \sum_{t\in[T/B]\backslash \mathcal{B}}B(\mathcal{L}_{t}(i(t)) - \mathcal{L}_{t}(i^{*})) + \frac{T}{B} \cdot O\left(\sqrt{B\log (nT)}\right) + O\left(\eps^2 n B \log T\right)\\
    = &~ \sum_{\tau}\sum_{t \in [\beta(\tau): \beta(\tau + 1) -1], t\notin \mathcal{B}}B(\mathcal{L}_{t}(i(t)) - \mathcal{L}_{t}(i^{*})) + O\left(TB^{-1/2}\log^{1/2} (nT)\right) + O\left(\eps^2 n B \log T\right)\\
    \leq &~ \sum_{\tau} \eps B (\beta(\tau+1) - \beta(\tau)) + O\left(TB^{-1/2}\log^{1/2} (nT)\right) + O\left(\eps^2 n B \log T\right)\\
    = &~ O\left(\eps T + TB^{-1/2}\log^{1/2} (nT) + \eps^2 n B \log T\right).
\end{align*}
We split the regret based on whether $t$ belongs to $\mathcal{B}$ in the first step. The second step follows from $|\mathcal{B}|\leq O(\eps^2 n \log T)$ (\cref{lem:unluck}) and $\ell_{t} \in [0,1]^{n}$.  
The third step follows from the guarantee of MWU (\cref{lem:mwu}) and the fact that the expert $i(t)$ is contained in the pool $\mathcal{P}_t$ (second claim of   \cref{lem:cover}).
We split $[T/B]$ according to $\beta(\tau)$ in the fourth step and make use of the first claim of   \cref{lem:cover}. The fifth step follows from the third claim of   \cref{lem:cover}.
We conclude the regret analysis.
\end{proof}

Combining \cref{prop:memory} and \cref{prop:regret}, we conclude the proof of \cref{thm:weak-sublinear}.
Moreover, balancing the last two regret terms by taking $B = \left(T/\eps^2n\right)^{2/3}$, we   get:
\begin{corollary}
\label{cor:weak-sublinear}
Let $T, n$ be positive integer, $\eps\in (0,1/2)$, $T = \Omega(\eps^2 n)$, there exists an online learning algorithm that achieves $\widetilde{O}(\eps T + T^{2/3}(\eps^2n)^{1/3})$ regret with probability at least $1 - 1/\poly(T)$ and uses $O(\eps^{-2}\log^3 (nT))$ bits of memory.
\end{corollary}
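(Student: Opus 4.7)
The plan is to apply Theorem \ref{thm:weak-sublinear} directly and simply optimize the free parameter $B$. The theorem guarantees, for any positive integer $B \ll T$, a regret bound of the form
\[
O\!\left(\eps T + T B^{-1/2}\log^{1/2}(nT) + \eps^2 n B \log T\right),
\]
using $O(\eps^{-2}\log^3(nT))$ bits of memory \emph{independent} of $B$. The first summand is already of the desired form $\eps T$, so all that remains is to balance the other two terms as a function of $B$.

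Setting $T B^{-1/2}$ equal to $\eps^2 n B$ up to polylogarithmic factors gives $B^{3/2} \asymp T/(\eps^2 n)$, so I would take $B = \lceil (T/(\eps^2 n))^{2/3} \rceil$. Plugging this back in, each of the two balanced terms becomes $\widetilde O(T^{2/3}(\eps^2 n)^{1/3})$, which when combined with the untouched $\eps T$ term yields the announced regret $\widetilde O(\eps T + T^{2/3}(\eps^2 n)^{1/3})$.

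To justify the choice of $B$, I would verify two side conditions. First, we need $B \geq 1$ so that the epoch decomposition is well-defined; this holds exactly when $T \geq \eps^2 n$, which is precisely the stated hypothesis $T = \Omega(\eps^2 n)$. Second, we need $B \ll T$ for Theorem \ref{thm:weak-sublinear} to apply; this reduces to $(T/(\eps^2 n))^{2/3} \ll T$, i.e., $\eps^2 n \cdot T^{1/2} \gg 1$, which is automatic in any non-trivial regime (and can be absorbed into the $T = \Omega(\eps^2 n)$ assumption together with $T$ sufficiently large). The memory bound carries over verbatim since it has no $B$-dependence. There is no real obstacle here beyond bookkeeping --- all the technical content already lives in Theorem \ref{thm:weak-sublinear}, and the corollary is a one-line rescaling of $B$.
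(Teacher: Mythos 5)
Your proposal matches the paper's own derivation: the corollary is obtained from \cref{thm:weak-sublinear} precisely by choosing $B = (T/\eps^2 n)^{2/3}$ to balance the $TB^{-1/2}$ and $\eps^2 nB$ terms, with the memory bound unaffected since it does not depend on $B$. The side conditions you check ($B \geq 1$ via $T = \Omega(\eps^2 n)$, and $B \ll T$) are the right bookkeeping and consistent with the paper's implicit assumptions.
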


\section{Full algorithm and analysis} \label{sec:full}

Building upon \textsc{Baseline}, we can state our main result.

\begin{theorem}[Main algorithmic result]
\label{thm:sublinear}
Let $T, n$ be positive integers and $\delta \in (0, 1]$. There exists an online learning algorithm that achieves a total regret of $\widetilde{O}\left(n^2T^{\frac{2}{2 +\delta}}\right)$ with probability at least $1 - \poly(T)$ and uses $O\left(n^{\delta} \log^4(nT)\right)$ bits of memory.
\end{theorem}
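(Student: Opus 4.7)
}
My plan is to bootstrap the weakly sub-linear \textsc{Baseline} guarantee (\cref{thm:weak-sublinear} / \cref{cor:weak-sublinear}) through a width-reduction recursion, following the outline in \cref{sec:boost}. Denote by $\mathcal{A}_0$ an instantiation of \textsc{Baseline} with parameter $\eps_0$ and block length $B_0$, which produces at each round $t$ a (random) action and, by \cref{cor:weak-sublinear}, attains average regret at most $\eps_0$ (up to lower order terms) on any set of $n$ experts, using $\widetilde{O}(\eps_0^{-2})$ memory. The first step is to build a \emph{level-$1$} algorithm $\mathcal{A}_1$ as follows: for each raw expert $i\in[n]$, form a synthetic expert $e_i^{(1)}$ which is the \emph{better} of $i$ and $\mathcal{A}_0$, implemented by running a local two-expert MWU between $i$ and $\mathcal{A}_0$. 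Letting $\Delta_t$ denote the loss of $\mathcal{A}_0$ on day $t$, the loss of $e_i^{(1)}$ can be clipped to the interval $[\Delta_t - \eps_0,\Delta_t]$ without losing more than $\widetilde{O}(\sqrt{T})$ cumulative loss (this is where the truncation noted in the footnote of \cref{sec:boost} is used). Thus the synthetic experts $\{e_i^{(1)}\}$ have \emph{width} $\eps_0$, and running \textsc{Baseline} on top of them with the refined discretization parameter $\eps_1 = \eps_0^2$ (rescaling the losses by $1/\eps_0$) yields average regret $\widetilde{O}(\eps_0 \cdot \eps_1) = \widetilde{O}(\eps_0^{3})$ with respect to the best synthetic expert.

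Next I would iterate. At level $k$, assume inductively that $\mathcal{A}_{k-1}$ achieves average regret at most $\eps_{k-1}$ against the best of $n$ experts it is fed, using $\widetilde{O}(\eps_{k-1}^{-2})$ words of extra memory beyond what the lower levels use. Form width-$\eps_{k-1}$ synthetic experts $e_i^{(k)}$ by running a local two-expert MWU between raw expert $i$ and $\mathcal{A}_{k-1}$, and run \textsc{Baseline} on $\{e_i^{(k)}\}_{i\in [n]}$ with parameter $\eps_k = \eps_{k-1}^2$ on the rescaled losses. The regret bound from \cref{cor:weak-sublinear} applied at the rescaled scale becomes, up to polylog factors,
\begin{align*}
R_k \;\leq\; \eps_k \cdot T \;+\; T^{2/3}(\eps_k^2 n)^{1/3} \;+\; \eps_{k-1}\cdot T,
\end{align*}
where the last term absorbs the regret passed up from level $k-1$. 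Since $\eps_k=\eps_{k-1}^2$, the dominant regret after $K$ levels is $\eps_K T$ plus the $T^{2/3}(\eps_K^2 n)^{1/3}$ term. I would then pick $K\approx \log_2(1/\delta')$ so that $\eps_K$ is polynomially small in $n$, and balance the two terms by choosing $\eps_K \approx n^{1/(2+\delta)}T^{-\delta/(2+\delta)}$, which gives $\eps_K T \approx n^{1/(2+\delta)} T^{2/(2+\delta)}$. Summing over levels adds polylog factors and at most an extra $n^2$ multiplicative term coming from overheads in the width-reduction truncation and the lower-order baseline terms, matching the claimed $\widetilde{O}(n^2 T^{2/(2+\delta)})$ regret.

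For memory I would account level by level. Each level $k$ instantiates \textsc{Baseline} on $n$ experts with precision parameter $\eps_k$, taking $\widetilde{O}(\eps_k^{-2})$ words; in addition, each active synthetic expert $e_i^{(k)}$ in the pool at level $k$ carries its own two-expert MWU state, which is $O(1)$ words, and a pointer to $\mathcal{A}_{k-1}$. Crucially, since the $K$ algorithms $\mathcal{A}_0,\mathcal{A}_1,\dots,\mathcal{A}_{K-1}$ share a single copy (each level calls the next on demand, not one copy per synthetic expert), the total memory is $\sum_{k=0}^{K}\widetilde{O}(\eps_k^{-2}) = \widetilde{O}(\eps_K^{-2})$, dominated by the finest level. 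Plugging in $\eps_K\approx n^{-\delta/2}$ (the choice that produces the claimed trade-off) gives $O(n^{\delta}\log^4(nT))$ bits, as required.

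The main obstacle, and the step I would spend the most care on, is the \emph{width reduction and truncation} step used to build the synthetic experts $e_i^{(k)}$. The delicate point is that clipping the loss of $e_i^{(k)}$ into $[\Delta_t-\eps_{k-1},\Delta_t]$ distorts the comparison with the true best expert, and this distortion must be shown to cost only an additive $\widetilde{O}(\sqrt{T})$ per level rather than compounding multiplicatively through the recursion; this is precisely what makes the bootstrapped regret scale as $\eps_K T$ rather than blow up with $K$. A union bound over the $K=O(\log T)$ levels, together with the high-probability guarantee of \cref{cor:weak-sublinear}, will yield the $1-1/\poly(T)$ success probability. Finally, the play rule of the full algorithm is simply the top-level algorithm $\mathcal{A}_K$: sampling recursively through the hierarchy realizes a mixed strategy over $[n]$ with the stated regret and memory bounds.
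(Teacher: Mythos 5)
Your overall strategy is the same as the paper's (\cref{sec:full}): precondition each raw expert $i$ by a two-expert MWU against the lower-level algorithm, run the \textsc{Baseline}-style eviction on truncated relative losses, and iterate over $O(\log T/\log n)$ levels with shared copies of the lower levels (your memory accounting matches \cref{prop:memory-full} in spirit). However, there are two genuine gaps in the regret argument. First, your recursion $R_k \leq \eps_k T + T^{2/3}(\eps_k^2 n)^{1/3} + \eps_{k-1}T$, with the last term ``absorbing the regret passed up from level $k-1$,'' does not yield your conclusion: since $\eps_{k-1} \gg \eps_k$, unrolling that recursion leaves you stuck at $\Omega(\eps_0 T)$, not $\eps_K T$. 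The point of the construction (and of the induction in \cref{prop:regret-full}) is that the lower level's regret does \emph{not} enter additively at scale $\eps_{k-1}T$; rather, the level-$k$ loss decomposes as (loss of $\textsc{Baseline}_{+}(k-1)$) plus a truncated \emph{relative} loss of width $\approx \eps^{k-1}$, and the entire comparison with the best raw expert $i^{*}$ is carried out inside that narrow relative-loss space (\cref{lem:baseline-epoch}, \cref{lem:cover-full}), so each level contributes at a geometrically smaller scale. Your write-up asserts the conclusion of this mechanism without setting it up, and as stated the recursion contradicts it.

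Second, your width/clipping claim — that the daily loss of $e_i^{(k)}$ lies in $[\Delta_t-\eps_{k-1},\Delta_t]$ up to an additive $\widetilde{O}(\sqrt{T})$ — is not justified at the per-day granularity. Both the MWU guarantee inside \textsc{MergeExp} and the inductive regret guarantee of the level-$(k-1)$ algorithm are \emph{aggregate} statements; on a single day the merged expert can be far worse than $\mathcal{A}_{k-1}$, and the raw expert can be far worse than $\mathcal{A}_{k-1}$ as well. The paper makes the width bound true by changing the timescale: level $k$ treats $T_{k-1}$ consecutive days as one decision day, aligned with the restarts (episodes) of $\textsc{Baseline}_{+}(k-1)$, so that the induction hypothesis \cref{eq:induction-hypothesis} and the MWU bound $O(\sqrt{\log(nT)/T_{k-1}})$ give exactly the width $\eps^{k-1}\log^{2k-1}(nT)$ per decision day (\cref{lem:baseline+k}); the truncation then only needs to handle the one-sided overshoot. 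Your proposal has no such episode/decision-day alignment, so the analog of \cref{lem:baseline+k} — the lemma on which both the refined eviction threshold and the pool-size/memory bound at higher levels rest — is not actually established. Relatedly, your parameter schedule ($\eps_k=\eps_{k-1}^2$, and two inconsistent choices of $\eps_K$) differs from the paper's fixed $\eps=n^{-\delta/2}$ at every level, and the $n^2$ factor in the final bound is asserted rather than derived; in the paper it comes out of the induction $O(n^{k+1}\log^{2k}(nT))$ per episode together with rounding $T$ to $n(n/\eps)^{K+1}$.
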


\subsection{Full Algorithm}\label{sec:full-algo}

\paragraph{Parameters} Let $\eps = n^{-\delta/2}$ and $T = n(n/\eps)^{K}$. For simplicity, we assume $\eps^{-1}$, $\eps n$ and $K$ are positive integers for now. 
Define $T_k = n(n/\eps)^{k}$ for each $k \in [K]$, and the epoch length is fixed as $B = 1/\eps^2$.
\paragraph{Algorithm description}
The $\textsc{FullAlgo}$ (pseudocode in \cref{alg:full}) aggregates $\textsc{Baseline}_{+}$ (\cref{alg:baseline+})   by levels. 
We first describe the algorithm $\textsc{Baseline}_{+}$, which takes in a level parameter $k$.
 At the bottom level ($k = 1$), the algorithm repeatedly runs {\sc Baseline} for $T/T_1 = (n/\eps)^{K-1}$ episodes. 
Within each episode, the algorithm starts with a fresh run of {\sc Baseline} and continues for $T_1 = n^2/\eps$ days. The $T_1$ days are split into $\eps n^2$ epochs, and each epoch contains $B = 1/\eps^2$ days.
We will later see that this guarantees that each $n^2/\eps$ days, the algorithm gets a total regret of $O(n^2\log (nT))$ compared with the best expert.

$\textsc{Baseline}_{+}$ differs significantly from $\textsc{Baseline}$ starting from the second level ($k \geq 2$). 
Instead of potentially playing a different decision every day, 
$\textsc{Baseline}_{+}$ joins every $T_{k-1} = n(n/\eps)^{k-1}$ consecutive days as one {\em decision day} and plays the same decision on all of them. There are $T/T_{k}$ episodes and $n/\eps$ decision days within each episode. Again, they are divided into $\eps n$ epochs with $1/\eps^{2}$ days each epoch. The algorithm restarts every episode.

The key point is that instead of directly following the advice of expert $i$, the algorithm follows from the combination of expert $i$ and $\textsc{Baseline}_{+}(k-1)$.
In particular, $\textsc{MergeExp}$ (pseudocode in \cref{alg:merge}) takes an expert $i$ and $\textsc{Baseline}_{+}(k-1)$ and runs MWU over them. 
We take it as the new expert $e_{k, i}$ for level $k$. The advantage is that the loss of $e_{k, i}$ is roughly the minimum of $\textsc{Baseline}_{+}(k-1)$ and $i$ (by the regret guarantee of MWU) and thus has small width.
This motivates the modified eviction rule.

\paragraph{Eviction rule}
We reload the notations from  \cref{sec:baseline} and introduce a few more.
For any level $k \in [K]$, episode $r \in [T/T_{k}]$, epoch $t \in [\eps n]$, let $P_{k, r, t} \subseteq \{e_{k, i}\}_{i \in [n]}$ be the pool of experts at the beginning of $t$-th epoch (after adding $R_{k, r, t}$) and the pool remains unchanged during the $t$-th epoch.
Let $T_{k,r,t,b} = (r-1)T_{k} +(t-1)BT_{k-1} + (b-1)T_{k-1}$.
For each decision  day $b \in [B]$, we update the MWU and the cumulative loss according to the {\em truncated loss}
\begin{align}
\widehat{\mathcal{L}}_{k,r,t, b}(e_{k, i}) = \max\left\{\mathcal{L}_{k,r,t,b}(e_{k, i}) - \mathcal{L}_{k,r,t,b}\left(\textsc{Baseline}_{+}(k-1)\right),  - \eps^{k-1}\log^{2k-1}(nT) \right\}
\end{align}
where
\begin{align}
    \mathcal{L}_{k,r,t, b}(e_{k, i}) = \frac{1}{T_{k-1}}\sum_{\tau=1}^{T_{k-1}}\ell_{T_{k,r,t,b} + \tau}(e_{k, i})
\end{align} 
is the average loss on the $b$-th decision  day,
Note that without the truncation, $\widehat{\mathcal{L}}_{k,r,t,b}$ would simply be a shift of $\mathcal{L}_{k,r,t, b} \in [0, 1]^{n}$. Looking ahead, the truncation guarantees the width of $\widehat{\mathcal{L}}_{k,r,t, b}$ to be $2\eps^{k-1}\log^{2k-1}(nT)$, since it is possible that $e_{k, i}$ performs much better than expert $i$ and $\textsc{Baseline}_{+}(k-1)$.

Now we can state the eviction rule as follows. 
For experts in $R_{k,r,t}$, we just keep the best expert. 
For each remaining expert $e_{k, i}$ (include the one that survives in $R_{k,r,t}$), let $\Gamma_{k,r, t, i}$ be the period of time that $e_{k, i}$ resides in the pool.
For any time interval $I \subseteq \Gamma_{k,r,t, i}$, recall that $\mathcal{L}_I(e_{k, i}) = \frac{1}{|I|} \sum_{t\in I}\mathcal{L}_{t}(e_{k, i})$ is the average loss of expert $j$ over $I$.  
Let $\widehat{\mathcal{L}}_I(e_{k, i}) = \frac{1}{|I|} \sum_{t\in I}\widehat{\mathcal{L}}_{t}(e_{k, i})$ be the cumulative truncated loss defined similarly.
The algorithm compares $e_{k, i}$ with every other expert $e_{k, j}\in \mathcal{P}_{k,r,t}$, and the expert $e_{k, i}$ is evicted at the end of epoch $t$ if and only if 
\begin{enumerate}[(i)] 
    \item The expert $e_{k,j}$ entered the pool $\mathcal{P}_{k,r,t}$ before the expert $e_{k, i}$; {and} 
    \item The average loss $\widehat{\mathcal{L}}_{\Gamma_{k, r,t,i}}(e_{k, i})$ of $e_{k, i}$ over $\Gamma_{k,r,t,i}$ is at least that of expert $e_{k, j}$ up to an additive factor of  $\eps^{k}\log^{2k-1}(nT)$: 
    \begin{equation}\label{eqn:new-evict-rule}
        \widehat{\mathcal{L}}_{\Gamma_{k,r, t, i} }(e_{k, i}) \geq \widehat{\mathcal{L}}_{\Gamma_{k,r,t,i} }(e_{k, j}) - \eps^{k}\log^{2k-1}(nT).
    \end{equation}
\end{enumerate}

In summary, $\textsc{Baseline}_{+}(k)$ (for $k \in [K]$) differs from $\textsc{Baseline}$ in three ways:
\begin{itemize}
    \item $\textsc{Baseline}_{+}(k)$ restarts every $T_k$ days, and within each episode, it regards $T_{k-1}$ days as one decision day;
    \item $\textsc{Baseline}_{+}(k)$ follows   the decision of $\textsc{MergeExp}$ instead of directly using the original experts $[n]$, and crucially it considers the truncated loss for eviction and MWU update;
    \item The eviction threshold changes from $\eps$ to $\eps^{k}\log^{2k-1}(nT)$.
\end{itemize}

Finally, we note that \textsc{FullAlgo} outputs the decision of $\textsc{Baseline}_{+}(K)$.
The pseudocode of these procedures are given below.

\begin{algorithm}[H]
\caption{$\textsc{Baseline}_{+}$}\label{alg:baseline+}
    \SetAlgoLined
    \DontPrintSemicolon
    \KwInput{Parameter $k$}
    \For{each episode $r = 1,2, \ldots, T/T_{k}$}{
    \tcc{if $k=1$, then loop from $1$ to $\eps n^2$}
    \For{each epoch $t = 1,2,\ldots, \eps n$}{
    For all $i\in [n]$, let $e_{i,k} = \textsc{MergeExp}(k,i)$.\;
    Sample a random set $R$  of $\eps^{-2}$ experts without replacement from $\{e_{i, k}\}_{i \in [n]}$, add them to $\mathcal{P}$.\;
    \For{each decision day $b = 1,2, \ldots, B$}{ 
    Compute $p \propto \exp\left(-\eta \sum_{\tau=1}^{b-1} \widehat{\mathcal{L}}_{k,r,t, \tau}(e_{k, i})\right)$ for $e_{k, i} \in \mathcal{P}$\;
    Sample an expert $i_{k,r,t, b} \sim p$ and follow $e_{k, i_{k,r,t,b}}$ for $T_{k-1}$ days\;
    }
    Remove all except the best expert from $R$\;
    \For{every pair $\{e_{k, i},e_{k, j}\} \in \mathcal{P}$}{
            Remove $e_{k, i}$ from $\mathcal{P}$ if $e_{k, j}$ entered $\mathcal{P}$ before $e_{k, i}$ and condition \eqref{eqn:new-evict-rule} holds
    }
    }
    Clear the pool $\mathcal{P}$ and restart
    }
\end{algorithm}

\begin{algorithm}[H]
\caption{Merge expert ($\textsc{MergeExp}$)}\label{alg:merge}
    \SetAlgoLined
    \DontPrintSemicolon
    \KwInput{Parameter $k$, expert $i$}
    \KwOutput{Expert $e_{k, i}$}
    \For{each episode $r =1,2, \ldots, T/T_{k-1}$}{
    Initiate with uniform weight over expert $i$ and $\textsc{Baseline}_{+}(k-1)$\;
    \For{$t = 1,2,\ldots, T_{k-1}$}{
    Run MWU over expert $i$ and $\textsc{Baseline}_{+}(k-1)$, and play the decision
    }
    }
\end{algorithm}

\begin{algorithm}[H]
\caption{Full expert learning algorithm ($\textsc{FullAlgo}$)} 
\label{alg:full}
    \SetAlgoLined
    \LinesNumberedHidden
    \DontPrintSemicolon
    \KwInput{Parameter $T$, $\eps$, experts $[n]$}
    Maintain $\textsc{Baseline}_{+}(k)$ (for each $k \in [K]$) and play the decision of $\textsc{Baseline}_{+}(K)$
\end{algorithm}

\subsection{Analysis of \textsc{FullAlgo}}

We provide a formal analysis of $\textsc{FullAlgo}$ and prove its memory and regret guarantees.

\subsubsection{Regret analysis}
We start with the regret analysis first,  since the memory analysis depends on it.
Formally, we aim to show:

\begin{proposition}[regret bound]
    \label{prop:regret-full}
    For any level $k \in [K]$ and episode $r\in [T/T_{k}]$, the $\textsc{Baseline}_{+}(k)$ has a total regret of $O\left(n^{k+1}\log^{2k} (nT)\right)$ with probability at least $1 - 1/\poly(T)$.
\end{proposition}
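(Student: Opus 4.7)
I proceed by induction on $k$. The base case $k=1$ follows almost immediately from \cref{thm:weak-sublinear}: $\textsc{Baseline}_+(1)$ is a fresh run of \textsc{Baseline} every $T_1=n^2/\eps$ days with $B=1/\eps^{2}$ and parameter $\eps$, so the per-episode regret is $O(\eps T_1 + T_1 B^{-1/2}\log^{1/2}(nT) + \eps^2 nB\log T) = O(n^2\log^{1/2}(nT))$, which is absorbed into the target $O(n^2\log^2(nT))$.

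For the inductive step, fix a level-$k$ episode, let $i^*$ be the best expert over its $T_k$ days, and let $L(\text{alg})$, $L(e_{k,i^*})$, $L(i^*)$, $L(\text{base})$ denote the cumulative losses of the algorithm, the hypothetical super-expert $e_{k,i^*}$ produced by $\textsc{MergeExp}(k,i^*)$, the expert $i^*$ itself, and the running $\textsc{Baseline}_+(k-1)$, respectively. I would split the regret as
\[
L(\text{alg})-L(i^*) \;=\; [L(\text{alg})-L(e_{k,i^*})] \;+\; [L(e_{k,i^*})-L(i^*)].
\]
The second bracket reflects $n/\eps$ restarts of \textsc{MergeExp} on input $i^*$, each contributing MWU regret $O(\sqrt{T_{k-1}\log(nT)})$; summing over restarts gives $(n/\eps)\cdot O(\sqrt{T_{k-1}\log(nT)})$, which a short calculation shows is $O(n^{k+1}\log^{2k}(nT))$ whenever $\delta\le 1$.

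The first bracket is the regret of the pool-based level-$k$ procedure on the super-experts $\{e_{k,i}\}_{i\in[n]}$ with the truncated loss $\widehat{\mathcal{L}}$. I would re-run the three-step argument of \cref{sec:baseline}---pool-size bound (\cref{lem:pool-size}), unlucky-epoch count (\cref{lem:unluck}), and pool coverage (\cref{lem:cover})---with the rescaled parameters $T' = n/\eps$ ``decision days'', $B'=1/\eps^{2}$ decision days per epoch, $n$ experts, eviction threshold $\eps^k\log^{2k-1}(nT)$, and loss width $w = O(\eps^{k-1}\log^{2k-1}(nT))$ (the upper end follows from the MergeExp MWU bound $\mathcal{L}(e_{k,i}) - \mathcal{L}(\text{base}) \le O(\sqrt{\log(nT)/T_{k-1}}) \le w$). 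Both the intra-epoch MWU regret and the per-unlucky-epoch contribution scale linearly with $w$, so plugging the adapted parameters into the three-term bound of \cref{prop:regret} yields $\widehat{\mathcal{L}}$-regret $O(w \cdot n\log(nT)) = O(n\eps^{k-1}\log^{2k}(nT))$ per level-$k$ episode, which becomes $O(n^{k+1}\log^{2k}(nT))$ once multiplied by $T_{k-1}$ to convert back to real-day loss.

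The step I expect to be the main obstacle is translating this $\widehat{\mathcal{L}}$-regret against $e_{k,i^*}$ back into an $\mathcal{L}$-regret. Writing $\mathcal{L}(e_{k,i}) - \mathcal{L}(\text{base}) = \widehat{\mathcal{L}}(e_{k,i}) - \tau_i$ with truncation slack $\tau_i\ge 0$ and using $\tau_{i_b}\ge 0$, one obtains
\[
L(\text{alg})-L(e_{k,i^*})\;\le\; T_{k-1}\sum_{b}\bigl[\widehat{\mathcal{L}}(e_{k,i_b})-\widehat{\mathcal{L}}(e_{k,i^*})\bigr] \;+\; T_{k-1}\sum_{b}\tau_{i^*,b},
\]
and the delicate part is controlling $\sum_b \tau_{i^*,b}$. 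The truncation threshold $c := \eps^{k-1}\log^{2k-1}(nT)$ is tuned so that $T_{k-1}c = n^k\log^{2k-1}(nT)$ exceeds the inductive per-episode bound $R_{k-1}=O(n^k\log^{2k-2}(nT))$ by a $\log(nT)$ factor; I would combine this choice with the MergeExp MWU guarantee (which allows one to rewrite $T_{k-1}[\mathcal{L}(\text{base})_b-\mathcal{L}(e_{k,i^*})_b] = \sum_\tau q^{(b)}_\tau[\ell_\tau(\text{base})-\ell_\tau(i^*)]$ and then to contract $q^{(b)}_\tau \in [0,1]$ back to the signed baseline regret plus a MergeExp error term) to argue that the truncation slack contributes only a lower-order term, closing the bound at $O(n^{k+1}\log^{2k}(nT))$. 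A final union bound over the relevant high-probability events---Azuma--Hoeffding for all MWU instances, the level-$k$ analogue of \cref{lem:unluck}, and the inductive hypothesis on $\textsc{Baseline}_+(k-1)$---delivers the $1-1/\poly(T)$ success probability.
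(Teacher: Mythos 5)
Your overall skeleton matches the paper's: induction on $k$, the base case from \cref{thm:weak-sublinear}, and a re-run of the pool/eviction/coverage machinery of \cref{sec:baseline} at level $k$ with loss width $O(\eps^{k-1}\log^{2k-1}(nT))$ and eviction threshold $\eps^{k}\log^{2k-1}(nT)$; your accounting of the rescaled three-term bound and of the \textsc{MergeExp} regret term is also correct. The genuine gap is exactly at the step you flag yourself: controlling the truncation slack $\sum_b \tau_{i^*,b}$. Your proposed justification --- rewrite $T_{k-1}\bigl[\mathcal{L}_{k,r,t,b}(\textsc{Baseline}_{+}(k-1))-\mathcal{L}_{k,r,t,b}(e_{k,i^*})\bigr]=\sum_\tau q_\tau\bigl[\ell_\tau(\textsc{Baseline}_{+}(k-1))-\ell_\tau(i^*)\bigr]$ and then ``contract $q_\tau\in[0,1]$ back to the signed baseline regret'' --- is not valid as stated: the per-day differences are signed, and a $[0,1]$-weighted sum of signed terms is not bounded by the unweighted sum. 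The induction hypothesis only controls the \emph{signed} regret of $\textsc{Baseline}_{+}(k-1)$ over a full decision day; the positive part of the per-day differences within that block can a priori be of order $T_{k-1}$ (baseline far worse than $i^*$ on one half of the block, far better on the other), and \cref{lem:mwu} gives only an upper bound on MWU's loss, so nothing stated in the paper bounds how far \textsc{MergeExp} can outperform its baseline arm. Hence your argument, as written, does not rule out $\mathcal{L}_b(\textsc{Baseline}_{+}(k-1))-\mathcal{L}_b(e_{k,i^*})\gg \eps^{k-1}\log^{2k-1}(nT)$ on many decision days, i.e.\ a non-negligible slack.

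The paper avoids this issue by never using the true loss of $e_{k,i^*}$ as a comparator: its chain is algorithm $\to$ truncated-loss regret against the covering pool expert (\cref{lem:baseline-epoch}, \cref{lem:cover-full}) $\to$ eviction-rule comparison with $\widehat{\mathcal{L}}(e_{k,i^*})$ $\to$ the bound $\widehat{\mathcal{L}}_{k,r,t,b}(e_{k,i^*})+\mathcal{L}_{k,r,t,b}(\textsc{Baseline}_{+}(k-1))\le \mathcal{L}_{k,r,t,b}(i^*_{k,r})+\text{err}$ (\cref{eq:cover-full2}), where the truncation floor $-\eps^{k-1}\log^{2k-1}(nT)$ is harmless precisely because the induction hypothesis gives $\mathcal{L}_{k,r,t,b}(\textsc{Baseline}_{+}(k-1))-\mathcal{L}_{k,r,t,b}(i^*_{k,r})\le \eps^{k-1}\log^{2k-2}(nT)$, so the $\max$ never undercuts the right-hand side; the slack you worry about simply never appears. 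If you want to keep your decomposition through $L(e_{k,i^*})$, it can be rescued, but only with a fact beyond \cref{lem:mwu}: since \textsc{MergeExp} restarts with uniform weights at the start of every decision day, the mix-loss (log-sum-exp potential) lower bound for Hedge shows its expected block loss is at least $\min\{\mathcal{L}_b(i^*),\mathcal{L}_b(\textsc{Baseline}_{+}(k-1))\}$, whence $\mathcal{L}_b(\textsc{Baseline}_{+}(k-1))-\mathcal{L}_b(e_{k,i^*})\le \max\{0,\mathcal{L}_b(\textsc{Baseline}_{+}(k-1))-\mathcal{L}_b(i^*)\}+O(\sqrt{\log(nT)/T_{k-1}})$, which is below the truncation threshold with high probability by the induction hypothesis, so the slack vanishes. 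Either fix (the paper's comparison target, or this Hedge lower-bound property) is needed to close the step; without one of them the proposal is incomplete.
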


We prove the claim by  an induction on $k$. The case $k=1$ follows directly from \cref{prop:regret} by taking $T = T_1 = n^2/\eps$ and $B = 1/\eps^2$. 
Suppose the induction holds up to level $k - 1$ ($k \geq 2$), i.e., 
\begin{align}
  \mathcal{L}_{r, k, t, b}(\textsc{Baseline}_{+}(k-1)) - \mathcal{L}_{r, k, t, b}(i) \leq  \eps^{k-1}\log^{2k-2}(nT). \label{eq:induction-hypothesis}
\end{align}
We proceed for level $k$ and prove the claim for any episode $r \in [T/T_{k}]$.

We first state some basic properties on $\mathcal{L}_{k,r, t, b}(e_{k, i})$ and $\widehat{\mathcal{L}}_{k,r, t, b}(e_{k, i})$. The first claim states $e_{k, i}$ is relatively good on each decision day (since it runs MWU over expert $i$ and $\textsc{Baseline}_{+}(k-1)$), and the second claim states $\widehat{\mathcal{L}}_{k,r, t, b}(e_{k, i})$ has small width.

\begin{lemma}
\label{lem:baseline+k}
For epoch $t \in [\eps n]$ and decision day $b \in [B]$, with probability at least $1- 1/\poly(T)$, one has
\begin{itemize}
    \item $\mathcal{L}_{k,r, t, b}(e_{k, i}) \leq \min\left\{ \mathcal{L}_{k,r, t, b}(i), \mathcal{L}_{k,r, t, b}(\textsc{Baseline}_{+}(k-1)) \right\} + O\left(\sqrt{\log(nT) /T_{k-1}} \right)$; and
    \item $\widehat{\mathcal{L}}_{k,r, t, b}(e_{k, i} ) \in \left[-\eps^{k-1}\log^{2k-1} (nT), \eps^{k-1}\log^{2k-1} (nT)\right]$.
\end{itemize}
\end{lemma}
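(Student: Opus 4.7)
The plan is to apply the standard MWU guarantee (\cref{lem:mwu}) to the two–expert instance that $\textsc{MergeExp}(k,i)$ runs over the $T_{k-1}$ days making up a single decision day of $\textsc{Baseline}_{+}(k)$. Once the first bullet is established, the second bullet will follow from the definition of the truncated loss together with a short numerical check using $T_{k-1} = n(n/\eps)^{k-1}$.

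For the first bullet, I would fix a decision day indexed by $(k,r,t,b)$, corresponding to one full episode of $\textsc{MergeExp}(k,i)$ of length $T_{k-1}$. Inside this episode, $e_{k,i}$ plays MWU with uniform initial weights over the two ``experts'' $\{i, \textsc{Baseline}_{+}(k-1)\}$. Applying \cref{lem:mwu} with $\eta = \sqrt{\log 2 / T_{k-1}}$ and failure probability $\delta = 1/\poly(T)$ yields, with high probability,
\begin{equation*}
\sum_{\tau=1}^{T_{k-1}} \ell_{T_{k,r,t,b}+\tau}(e_{k,i}) \;\leq\; \min\!\left\{ \sum_{\tau} \ell_{T_{k,r,t,b}+\tau}(i),\; \sum_{\tau} \ell_{T_{k,r,t,b}+\tau}(\textsc{Baseline}_{+}(k-1))\right\} + O\!\left(\sqrt{T_{k-1}\log(nT)}\right).
\end{equation*}
Dividing both sides by $T_{k-1}$ gives exactly the first claim. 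Finally, a union bound over all $(k,r,t,b)$ tuples (at most $\poly(nT)$ many) preserves the high probability guarantee.

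For the second bullet, the lower bound $\widehat{\mathcal{L}}_{k,r,t,b}(e_{k,i}) \geq -\eps^{k-1}\log^{2k-1}(nT)$ is immediate from the $\max$ in the definition of the truncated loss. For the upper bound, the first bullet (applied with the $\textsc{Baseline}_{+}(k-1)$ comparator) gives
\begin{equation*}
\widehat{\mathcal{L}}_{k,r,t,b}(e_{k,i}) \;\leq\; \mathcal{L}_{k,r,t,b}(e_{k,i}) - \mathcal{L}_{k,r,t,b}(\textsc{Baseline}_{+}(k-1)) \;\leq\; O\!\left(\sqrt{\log(nT)/T_{k-1}}\right).
\end{equation*}
Plugging in $T_{k-1} = n(n/\eps)^{k-1} = n^{k}/\eps^{k-1}$, the right-hand side becomes $O\bigl(\eps^{(k-1)/2} \sqrt{\log(nT)}/n^{k/2}\bigr)$, which is easily dominated by $\eps^{k-1}\log^{2k-1}(nT)$ for $n$ sufficiently large (the extra $n^{-k/2}$ factor dwarfs any polylogarithmic slack).

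The only place that requires genuine care, rather than being a mechanical invocation of MWU, is confirming that the high-probability MWU bound can be made uniform across all $e_{k,i}$ and all decision days via a union bound, and that the resulting $\sqrt{\log(nT)/T_{k-1}}$ slack is absorbed by the much larger eviction threshold $\eps^{k-1}\log^{2k-1}(nT)$ used in the truncation. Given the generous gap between these two quantities, I do not expect a real obstacle here; the argument is essentially a one-step reduction to \cref{lem:mwu}.
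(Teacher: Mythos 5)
Your proof is correct, and the first bullet is handled exactly as in the paper: both you and the authors simply invoke the MWU guarantee (\cref{lem:mwu}) for the two-expert instance $\{i,\textsc{Baseline}_{+}(k-1)\}$ that $\textsc{MergeExp}(k,i)$ restarts on each block of $T_{k-1}$ days, divide by $T_{k-1}$, and union bound over the $\poly(nT)$ many decision days and experts. For the second bullet you deviate slightly, and in a way worth noting: you bound $\mathcal{L}_{k,r,t,b}(e_{k,i})-\mathcal{L}_{k,r,t,b}(\textsc{Baseline}_{+}(k-1))$ directly by the $O\bigl(\sqrt{\log(nT)/T_{k-1}}\bigr)$ term coming from the baseline comparator inside the min of the first bullet, whereas the paper routes through expert $i$, writing $\mathcal{L}(e_{k,i})-\mathcal{L}(\textsc{Baseline}_{+}(k-1))\le \mathcal{L}(i)-\mathcal{L}(\textsc{Baseline}_{+}(k-1))+O(\sqrt{\log(nT)/T_{k-1}})\le 2\eps^{k-1}\log^{2k-2}(nT)$ and citing the induction hypothesis \eqref{eq:induction-hypothesis}. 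Your route is cleaner on two counts: it gives a much smaller width bound ($O(\sqrt{\log(nT)/T_{k-1}})$ rather than $2\eps^{k-1}\log^{2k-2}(nT)$, both comfortably below the truncation level $\eps^{k-1}\log^{2k-1}(nT)$, as your numerical check confirms), and it does not need \eqref{eq:induction-hypothesis} at all for this step --- which is fortunate, since that hypothesis only controls $\mathcal{L}(\textsc{Baseline}_{+}(k-1))-\mathcal{L}(i)$ and not the reverse difference that the paper's displayed chain appears to require; the detour through $i$ is unnecessary precisely because the baseline itself sits inside the MWU comparison. Your closing remark about uniformity of the high-probability bound via a union bound is the right level of care and matches what the paper leaves implicit.
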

\begin{proof}
The first claim follows directly from the MWU guarantee of $\textsc{MergeExpert}$, in particular, with probability at least $1 - 1/\poly(T)$,
\begin{align}
\mathcal{L}_{k,r, t, b}(e_{k, i}) \leq &~ \min\left\{ \mathcal{L}_{k,r, t, b}(i), \mathcal{L}_{k,r, t, b}\left(\textsc{Baseline}_{+}(k-1)\right)\right\} + O\left(\sqrt{\log (nT) /T_{k-1}}\right).\label{eq:baseline-claim-1-1}
\end{align}
For the second claim, we have
\begin{align*}
    \mathcal{L}_{k,r, t, b}(e_{k, i}) - \mathcal{L}_{k,r, t, b}(\textsc{Baseline}_{+}(k-1)) \leq &~   \mathcal{L}_{k,r, t, b}(i) -\mathcal{L}_{k,r, t, b}(\textsc{Baseline}_{+}(k-1)) + O\left(\sqrt{\log(nT) /T_{k-1}}\right)\\
    \leq &~ 2\eps^{k-1}\log^{2k-2} (nT).
\end{align*}
where the first step follows from \cref{eq:baseline-claim-1-1}, the second step holds due to induction hypothesis (\cref{eq:induction-hypothesis}) and $T_{k-1} = n(n/\eps)^{k-1}$. Therefore,
\begin{align*}
    \widehat{\mathcal{L}}_{k,r, t, b}(e_{k, i}) = &~ \max\left\{\mathcal{L}_{k,r, t, b}(e_{k, i}) - \mathcal{L}_{k,r, t, b}(\textsc{Baseline}_{+}(k-1)), -\eps^{k-1}\log^{2k-1} (nT)\right\}\\
    \in &~ \left[-\eps^{k-1}\log^{2k-1} (nT)), \eps^{k-1}\log^{2k-1} (nT)\right].
\end{align*}
We finish the proof here.
\end{proof}

Next we show that 
even though we update MWU in $\textsc{Baseline}_{+}(k)$ using the truncated loss,  the regret with respect to the original expert $[n]$ can be still be bounded.
\begin{lemma}
\label{lem:baseline-epoch}
For any epoch $t\in [\eps n]$, suppose $e_{k, i} \in \mathcal{P}_{k,r,t}$, then with probability at least $1-1/\poly(T)$,
\[
\sum_{b=1}^{B}\mathcal{L}_{k, r, t, b}(e_{k, i_{k,r,t, b}}) \leq \sum_{b=1}^{B}\widehat{\mathcal{L}}_{k, r, t, b}(e_{k, i}) + \mathcal{L}_{k,r, t, b}(\textsc{Baseline}_{+}(k-1)) + \frac{1}{4}\eps^{k-2}\log^{2k} (nT).
\]
\end{lemma}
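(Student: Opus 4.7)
My plan is to run the MWU guarantee (\cref{lem:mwu}) inside the epoch on the \emph{truncated} losses $\widehat{\mathcal{L}}_{k,r,t,\cdot}$, and then convert the resulting bound back to the original (untruncated) losses using the pointwise inequality $\widehat{\mathcal{L}}_{k,r,t,b}(e) \ge \mathcal{L}_{k,r,t,b}(e) - \mathcal{L}_{k,r,t,b}(\textsc{Baseline}_{+}(k-1))$, which holds by the definition of the max in the truncation.

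First, by the second part of \cref{lem:baseline+k}, with high probability each truncated loss $\widehat{\mathcal{L}}_{k,r,t,b}(e_{k,j})$ for $e_{k,j} \in \mathcal{P}_{k,r,t}$ lies in the interval $[-w,w]$ with $w = \eps^{k-1}\log^{2k-1}(nT)$. The MWU in $\textsc{Baseline}_{+}(k)$ is updated with these truncated losses over $B=1/\eps^2$ decision days on a pool whose size is at most $S = \widetilde O(\eps^{-2})$ (by the analog of \cref{lem:pool-size}; the number of ``experts in pool'' is polynomially bounded in $n,T$ either way, which is all that matters for the $\log$ factor). Rescaling the losses to $[0,1]$, applying \cref{lem:mwu} with learning rate $\eta = \sqrt{\log(nT)/B}$, and rescaling back, we obtain with probability $1-1/\poly(T)$ that for every fixed $e_{k,i}\in \mathcal{P}_{k,r,t}$,
\[
\sum_{b=1}^{B} \widehat{\mathcal{L}}_{k,r,t,b}(e_{k,i_{k,r,t,b}}) \;\le\; \sum_{b=1}^{B}\widehat{\mathcal{L}}_{k,r,t,b}(e_{k,i}) \;+\; O\!\left(w\sqrt{B\log(nT)}\right).
\]
Plugging in $w$ and $B$ gives an additive term of order $\eps^{k-1}\log^{2k-1}(nT) \cdot \eps^{-1}\sqrt{\log(nT)} = O(\eps^{k-2}\log^{2k-\frac{1}{2}}(nT))$, which is at most $\tfrac{1}{4}\eps^{k-2}\log^{2k}(nT)$ for sufficiently large $nT$.

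Second, to translate the truncated-loss statement back to the original losses, I use that for each day $b$,
\[
\widehat{\mathcal{L}}_{k,r,t,b}(e_{k,i_{k,r,t,b}}) \;=\; \max\!\left\{\mathcal{L}_{k,r,t,b}(e_{k,i_{k,r,t,b}}) - \mathcal{L}_{k,r,t,b}(\textsc{Baseline}_{+}(k-1)),\; -\eps^{k-1}\log^{2k-1}(nT)\right\},
\]
hence in particular
\[
\mathcal{L}_{k,r,t,b}(e_{k,i_{k,r,t,b}}) \;\le\; \widehat{\mathcal{L}}_{k,r,t,b}(e_{k,i_{k,r,t,b}}) + \mathcal{L}_{k,r,t,b}(\textsc{Baseline}_{+}(k-1)).
\]
Summing over $b\in[B]$, substituting the MWU bound above, and absorbing the additive term into $\tfrac{1}{4}\eps^{k-2}\log^{2k}(nT)$ yields the claimed inequality.

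The main subtlety (rather than obstacle) is making sure the MWU regret is applied correctly to losses of \emph{width} $2w$ rather than in $[0,1]$: this requires the rescaling-and-relearning-rate step above, and the choice of $\eta$ must be the one appropriate to width $2w$ over $B$ steps so that the resulting additive regret is $O(w\sqrt{B\log(nT)})$. A secondary point is that the high-probability $O(\sqrt{B\log(nT)})$ sampling-deviation term from \cref{lem:mwu} also gets scaled by $w$, so a union bound over $t,b$ and over the two high-probability events (the width bound from \cref{lem:baseline+k} and the MWU concentration) still keeps the failure probability at $1/\poly(T)$.
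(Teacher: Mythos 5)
Your proposal is correct and follows essentially the same route as the paper's proof: apply the MWU regret bound within the epoch to the truncated losses (whose width $2\eps^{k-1}\log^{2k-1}(nT)$ comes from \cref{lem:baseline+k}, giving regret $\eps^{k-1}\log^{2k-1}(nT)\cdot O(\sqrt{\log(nT)}/\eps) \le \tfrac14\eps^{k-2}\log^{2k}(nT)$), then use $\widehat{\mathcal{L}}_{k,r,t,b}(e) \ge \mathcal{L}_{k,r,t,b}(e) - \mathcal{L}_{k,r,t,b}(\textsc{Baseline}_{+}(k-1))$ from the truncation's max to pass back to the untruncated losses. Your extra remarks on rescaling, the learning-rate choice, and the union bound are consistent elaborations of the same argument.
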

\begin{proof}
By   \cref{lem:baseline+k}, we note that $\widehat{\mathcal{L}}_{k, r, t}(e_{k, j}) \in [-\eps^{k-1}\log^{2k-1} (nT)), \eps^{k-1}\log^{2k-1} (nT)]$ for any $j \in [n]$. Recall that $\textsc{Baseline}_{+}(k)$ runs MWU in epoch $t$ with $B = 1/\eps^2$ decision days, hence, with probability at least $1 - \poly(T)$,
\begin{align}
    \sum_{b=1}^{B}\widehat{\mathcal{L}}_{k, r, t, b}(e_{k, i_{k,r,t, b}}) \leq &~ \sum_{b=1}^{B}\widehat{\mathcal{L}}_{k, r, t, b}(e_{k, i}) + \eps^{k-1}\log^{2k-1} (nT) \cdot O\left(\sqrt{\log(nT)}/\eps\right)\notag \\
    \leq &~ \sum_{b=1}^{B}\widehat{\mathcal{L}}_{k, r, t, b}(e_{k, i}) + \frac{1}{4}\eps^{k-2}\log^{2k} (nT).\label{eq:baseline-epoch1}
\end{align}
The LHS satisfies
\begin{align}
    \sum_{b=1}^{B}\widehat{\mathcal{L}}_{k, r, t, b}(e_{k, i_{k,r,t, b}}) = &~ \sum_{b=1}^{B}\max\left\{\mathcal{L}_{k,r, t, b}(e_{k, i_{k, r,t, b}}) - \mathcal{L}_{k,r, t, b}(\textsc{Baseline}_{+}(k-1)), -\eps^{k-1} \log^{2k-1} (nT)\right\} \notag \\
    \geq &~\mathcal{L}_{k,r, t, b}(e_{k, i_{k, r,t, b}}) - \mathcal{L}_{k,r, t, b}(\textsc{Baseline}_{+}(k-1)).\label{eq:baseline-epoch3}
\end{align}
Combining \cref{eq:baseline-epoch1} and \cref{eq:baseline-epoch3}, we conclude the proof.
\end{proof}

\begin{lemma}
\label{lem:baseline-epoch2}
For any $i \in [n]$ and epoch $t \in [\eps n]$, with probability at least $1-1/\poly(T)$,
\[
\sum_{b=1}^{B}\mathcal{L}_{k, r, t, b}(e_{k, i_{k,r,t, b}}) \leq \sum_{b=1}^{B}\mathcal{L}_{k, r, t, b}(i) +O\left(\eps^{k-3}\log^{2k-2}(nT)\right).
\]
\end{lemma}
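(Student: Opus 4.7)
My plan is to bypass any pool-level analysis entirely by observing that, within each invocation of $\textsc{MergeExp}(k,i)$, MWU is run between expert $i$ and $\textsc{Baseline}_{+}(k-1)$. Consequently, for whatever $i_{k,r,t,b}$ happens to be chosen from the pool, the loss of $e_{k,i_{k,r,t,b}}$ is automatically bounded by the loss of $\textsc{Baseline}_{+}(k-1)$ via \cref{lem:baseline+k}, and the induction hypothesis \eqref{eq:induction-hypothesis} in turn relates the baseline's loss to that of any target expert $i$. Chaining these two inequalities together, with the $B = \eps^{-2}$ scaling factor, will yield the claimed bound.

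Concretely, by the first bullet of \cref{lem:baseline+k}, for every decision day $b \in [B]$, with probability $1 - 1/\poly(T)$,
\[
\mathcal{L}_{k,r,t,b}(e_{k,i_{k,r,t,b}}) \leq \mathcal{L}_{k,r,t,b}(\textsc{Baseline}_{+}(k-1)) + O(\sqrt{\log(nT)/T_{k-1}}).
\]
Union-bounding over $b \in [B]$ and over the at-most-$n$ possible realizations of $i_{k,r,t,b}$, so that \cref{lem:baseline+k} holds simultaneously for every relevant $(b,i)$ pair, and then summing over $b$,
\[
\sum_{b=1}^{B} \mathcal{L}_{k,r,t,b}(e_{k,i_{k,r,t,b}}) \leq \sum_{b=1}^{B} \mathcal{L}_{k,r,t,b}(\textsc{Baseline}_{+}(k-1)) + B \cdot O(\sqrt{\log(nT)/T_{k-1}}).
\]
The induction hypothesis \eqref{eq:induction-hypothesis} then gives $\mathcal{L}_{k,r,t,b}(\textsc{Baseline}_{+}(k-1)) \leq \mathcal{L}_{k,r,t,b}(i) + \eps^{k-1}\log^{2k-2}(nT)$, so substituting and summing,
\[
\sum_{b=1}^{B} \mathcal{L}_{k,r,t,b}(e_{k,i_{k,r,t,b}}) \leq \sum_{b=1}^{B} \mathcal{L}_{k,r,t,b}(i) + B \eps^{k-1}\log^{2k-2}(nT) + B \cdot O(\sqrt{\log(nT)/T_{k-1}}).
\]
With $B = \eps^{-2}$ and $T_{k-1} = n(n/\eps)^{k-1}$, the first error term is $\eps^{k-3}\log^{2k-2}(nT)$, exactly matching the stated bound, while the second, $\eps^{-2}\sqrt{\log(nT)/(n(n/\eps)^{k-1})}$, is of strictly lower order for $k \geq 2$ under our parameter regime.

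The main subtlety (rather than obstacle) is the careful union bounding over the $n \cdot B$ instances of \cref{lem:baseline+k} together with the instances of the induction hypothesis invoked for each decision day; since $n B \leq \poly(T)$ and each event fails with probability at most $1/\poly(T)$, the total failure probability remains $1/\poly(T)$. No pool-level or cover-type argument in the spirit of \cref{lem:cover} is needed here, because the preconditioning performed inside each $\textsc{MergeExp}(k,\cdot)$ already guarantees proximity to $\textsc{Baseline}_{+}(k-1)$ regardless of which $e_{k,i}$ the meta-MWU on the pool ends up sampling.
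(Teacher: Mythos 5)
Your proposal is correct and follows essentially the same route as the paper: bound the per-decision-day loss of the selected merged expert by $\textsc{Baseline}_{+}(k-1)$ via the MWU guarantee inside $\textsc{MergeExp}$, apply the induction hypothesis \eqref{eq:induction-hypothesis} to compare the baseline to expert $i$, and sum over the $B=\eps^{-2}$ decision days, with the $O(\sqrt{\log(nT)/T_{k-1}})$ term absorbed as lower order. The explicit union bounding you mention is a minor bookkeeping detail the paper handles implicitly.
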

\begin{proof}
For any expert $e_{k, i}$, any decision day $b\in [B]$, since $\textsc{MergeExp}$ runs MWU over $\textsc{Baseline}_{+}(k-1)$ and expert $i$, then with probability at least $1-1/\poly(T)$,
\begin{align*}
\mathcal{L}_{k, r, t, b}(e_{k, i}) \leq &~ \mathcal{L}_{k, r, t, b}(\textsc{Baseline}_{+}(k-1)) + O\left(\sqrt{\log(nT)/T_{k-1}}\right) \\
\leq &~ \mathcal{L}_{k, r, t, b}(i) + \eps^{k-1}\log^{2k-2}(nT) + O\left(\sqrt{\log(nT)/T_{k-1}}\right) \\
\leq &~ \mathcal{L}_{k, r, t, b}(i) + O\left(\eps^{k-1}\log^{2k-2}(nT)\right).
\end{align*}
The first step follows from the guarantee of MWU, the second step follows from induction hypothesis (\cref{eq:induction-hypothesis}) and the last step follows from the choice of $T_{k-1}$. Summing over $b\in [B]$, we get the desired bound.
\end{proof}


Let $i_{k, r}^{*} \in [n]$ be the optimal expert in the $r$-th episode.
We use the same stochastic process of \cref{sec:baseline-regret} to define the unlucky epoch $\mathcal{B}_{k, r} \subseteq [\eps n]$ and assign $i(t) \in [n] \cup \{\mathsf{nil}\}$ ($t\in [\eps n]$). \
Note that we fix all the randomness used at level $1,2,\ldots, k-1$ and episode $1, 2, \ldots, r-1$ in advance, and so the stochastic process depends only on the randomness of $\textsc{Baseline}_{+}(k)$ inside episode $r$.

We can similarly bound the  number of unlucky epochs.
\begin{lemma}[unlucky epoch]
\label{lem:unluck-full}
With probability at least $1- 1/\poly(T)$, 
\begin{enumerate}[(i)]
    \item $|\mathcal{B}_{k,r}| \leq O(\eps^{2}n \log T)$ and 
    \item $\sum_{b=1}^{B}\mathcal{L}_{k, r, t, b}(e_{k, i_{k,r,t, b}}) - \sum_{b=1}^{B}\mathcal{L}_{k, r, t, b}( i_{k, r}^{*}) \leq O\left(\eps^{k-3}\log^{2k-2}(nT)\right)$ for any $t \in \mathcal{B}_{k,r}$.
\end{enumerate} 
\end{lemma}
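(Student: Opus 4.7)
The plan is to prove the two claims separately; claim (i) follows the same stochastic-process argument as \cref{lem:unluck}, while claim (ii) is essentially immediate from \cref{lem:baseline-epoch2}.

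For claim (i), I would port the proof of \cref{lem:unluck} to the level-$k$ setting. The set $\mathcal{B}_{k,r}$ is populated only when the stochastic process enters Case 2-2 within episode $r$ of level $k$. At each such step $\tau$, a fresh random set $R_{k,r,\beta(\tau)}$ of $\eps^{-2}$ experts is drawn uniformly without replacement from $[n]$, so the conditional probability that $i^*_{k,r} \in R_{k,r,\beta(\tau)}$ equals $\eps^{-2}/n$ regardless of the prior history. Whenever this happens, the process immediately terminates via Case 2-2-1 (since, by the Case 2-2 hypothesis, no sampled expert is better than $i^*_{k,r}$ in that epoch). Consequently, after $c\eps^2 n \log T$ rounds of Case 2-2 the probability that the process has not yet terminated is at most $(1 - \eps^{-2}/n)^{c\eps^2 n \log T} \leq T^{-c}$, which gives $|\mathcal{B}_{k,r}| \leq O(\eps^2 n \log T)$ with the desired high probability. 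A final union bound over $k \in [K]$ and $r \in [T/T_k]$ makes the statement hold simultaneously across all episodes and levels.

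For claim (ii), I observe that \cref{lem:baseline-epoch2} already delivers, for every epoch $t \in [\eps n]$ and every fixed expert $i \in [n]$,
\begin{align*}
\sum_{b=1}^{B} \mathcal{L}_{k,r,t,b}(e_{k, i_{k,r,t,b}}) \leq \sum_{b=1}^{B}\mathcal{L}_{k,r,t,b}(i) + O\!\left(\eps^{k-3}\log^{2k-2}(nT)\right),
\end{align*}
with probability at least $1-1/\poly(T)$. Plugging in $i = i^*_{k,r}$ yields claim (ii) directly; restricting to $t \in \mathcal{B}_{k,r}$ is just a matter of the surrounding regret bookkeeping (on lucky epochs one uses the sharper covering bound via \cref{lem:cover}, whereas on unlucky ones we fall back to this uniform per-epoch guarantee).

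The only subtlety I anticipate is the independence structure used in claim (i): one must verify that the conditional eviction time $t(i^*_{k,r},\tau)$ at level $k$ remains a function only of $\widetilde{\mathcal{P}}_{k,r,\beta(\tau)-1}$ and of the (oblivious) loss sequence, and is in particular independent of $\xi_{\beta(\tau)}$ conditioned on the ``$i^*_{k,r}$ survives $R_{k,r,\beta(\tau)}$'' event, and independent of all $\xi_{\beta(\tau)+1},\dots$. This follows exactly as in \cref{sec:baseline-regret}, because the level-$k$ eviction rule \eqref{eqn:new-evict-rule} still only allows an older expert to evict a younger one, so future randomness cannot affect when $i^*_{k,r}$ gets evicted. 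With that check in place, the rest of the argument is a direct translation of \cref{lem:unluck}, and no new ideas are required.
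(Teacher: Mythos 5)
Your proposal is correct and matches the paper's own (very terse) proof: claim (i) is the level-$k$ transplant of \cref{lem:unluck} via the same stochastic-process/termination argument, and claim (ii) is an immediate application of \cref{lem:baseline-epoch2} with $i = i^*_{k,r}$. Your extra verification of the independence structure of the conditional eviction time is exactly the point the paper implicitly relies on when reusing the stochastic process from \cref{sec:baseline-regret}, so no new ideas are needed.
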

\begin{proof}
The first claim follows directly from \cref{lem:unluck}, the second claim follows from \cref{lem:baseline-epoch2}.
\end{proof}

The covering property holds similarly:
\begin{lemma}[cover]
\label{lem:cover-full}
For any epoch $t \in [\eps n] \backslash \mathcal{B}_{k,r}$, suppose $\beta(\tau) \leq t < \beta(\tau + 1)$, then with probability at least $1-1/\poly(T)$, we have
\begin{enumerate}[(i)]
    \item $i(\beta(\tau)) = \cdots = i(t) = \cdots = i(\beta(\tau+1) -1) = i_{k, r, \tau}^*  \neq \mathsf{nil}$; 
    \item $i_{k, r, \tau}^* \in \mathcal{P}_{k, r, \nu}$ for any $\nu \in [\beta(\tau): \beta(\tau+1) -1]$; {and}
    \item 
    \begin{align*}
    &~\sum_{\nu=\beta(\tau)}^{\beta(\tau +1) -1}\sum_{b=1}^{B}\widehat{\mathcal{L}}_{k,r,t,b}(e_{k, i(t)}) + \mathcal{L}_{k,r, t, b}(\textsc{Baseline}_{+}(k-1))\\
    < &~ \sum_{\nu=\beta(\tau)}^{\beta(\tau +1) -1}\sum_{b=1}^{B}\mathcal{L}_{k,r,t,b}(i_{k,r}^{*}) + 2\eps^{k-2}\log^{2k-1} (nT) (\beta(\tau+1) - \beta(\tau)).
    \end{align*}
\end{enumerate} 
\end{lemma}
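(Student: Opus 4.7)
The plan is to mirror \cref{lem:cover}: for $t \in [\beta(\tau), \beta(\tau+1)-1]$ we case-split on which non-$\mathsf{nil}$ branch of the level-$k$ stochastic process applies---Case 1, Case 2-1, or Case 2-2-1. Claims (i) and (ii) are structural and follow essentially as in the baseline. Every non-$\mathsf{nil}$ branch hard-wires $i(\cdot) \equiv i_{k,r,\tau}^{*}$ throughout the block, giving (i). For (ii), in Case 1 the expert $i_{k,r,\tau}^{*}$ is older than $i_{k,r}^{*}$, so its survival across $[\beta(\tau), \beta(\tau+1)-1]$ is a function of $\widetilde{\mathcal{P}}_{k,r,\beta(\tau)-1}$ and the loss sequence alone; the hypothetical scenario that pins the eviction time of $i_{k,r}^{*}$ to $\beta(\tau+1)-1$ forces $i_{k,r,\tau}^{*}$ to lie in every $\mathcal{P}_{k,r,\nu}$ across this block. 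In Cases 2-1 and 2-2-1, $i_{k,r,\tau}^{*}$ enters at $\beta(\tau)$ and survives the $R$-pruning of Line~10, covering either the singleton $\{\beta(\tau)\}$ (in 2-1) or the tail $[\beta(\tau), \eps n]$ (in 2-2-1) for free.

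The content of the lemma is (iii), and it suffices to handle Case 1 since 2-1 and 2-2-1 are analogous and strictly easier. Set $\Gamma = [\beta(\tau), \beta(\tau+1)-1]$. Applying the eviction condition \eqref{eqn:new-evict-rule} to $i_{k,r}^{*}$ (evictee) and $i_{k,r,\tau}^{*}$ (evicter) on $\Gamma$, then multiplying the epoch-averaged form by $B \cdot |\Gamma|$ with $B = 1/\eps^2$, gives
\begin{align*}
    \sum_{\nu \in \Gamma}\sum_{b=1}^{B}\widehat{\mathcal{L}}_{k,r,\nu,b}(e_{k,i_{k,r,\tau}^{*}}) \;\leq\; \sum_{\nu \in \Gamma}\sum_{b=1}^{B}\widehat{\mathcal{L}}_{k,r,\nu,b}(e_{k,i_{k,r}^{*}}) + \eps^{k-2}\log^{2k-1}(nT) \cdot |\Gamma|.
\end{align*}
Adding $\sum_{\nu,b}\mathcal{L}_{k,r,\nu,b}(\textsc{Baseline}_{+}(k-1))$ to both sides produces the quantity bounded in (iii) on the left, while each right-hand summand unpacks via the definition of the truncation as
\begin{align*}
    \widehat{\mathcal{L}}_{k,r,\nu,b}(e_{k,i_{k,r}^{*}}) + \mathcal{L}_{k,r,\nu,b}(\textsc{Baseline}_{+}(k-1)) = \max\bigl\{\mathcal{L}_{k,r,\nu,b}(e_{k,i_{k,r}^{*}}),\; \mathcal{L}_{k,r,\nu,b}(\textsc{Baseline}_{+}(k-1)) - \eps^{k-1}\log^{2k-1}(nT)\bigr\}.
\end{align*}
The first branch is $\leq \mathcal{L}_{k,r,\nu,b}(i_{k,r}^{*}) + O(\sqrt{\log(nT)/T_{k-1}})$ by \cref{lem:baseline+k}, and the second is $\leq \mathcal{L}_{k,r,\nu,b}(i_{k,r}^{*}) + \eps^{k-1}\log^{2k-2}(nT) - \eps^{k-1}\log^{2k-1}(nT) \leq \mathcal{L}_{k,r,\nu,b}(i_{k,r}^{*})$ by the induction hypothesis \eqref{eq:induction-hypothesis}. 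Summing over $|\Gamma| \cdot B$ decision days, the $O(\sqrt{\log(nT)/T_{k-1}}) \cdot B \cdot |\Gamma|$ correction is absorbed into the leading $\eps^{k-2}\log^{2k-1}(nT) \cdot |\Gamma|$ term for $k \geq 2$ and $\eps = n^{-\delta/2}$ with $\delta \leq 1$, yielding (iii) up to the factor-of-$2$ slack.

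The main obstacle is that the truncated loss $\widehat{\mathcal{L}}$ is not a pure shift of $\mathcal{L}$, so the eviction inequality in $\widehat{\mathcal{L}}$-units does not pass directly to $\mathcal{L}$. The cure is the $\max$-decomposition above, whose truncated branch is controlled by the induction hypothesis on $\textsc{Baseline}_{+}(k-1)$: the hypothesis slack $\eps^{k-1}\log^{2k-2}(nT)$ is strictly smaller than the truncation threshold $\eps^{k-1}\log^{2k-1}(nT)$, so the branch collapses to $\mathcal{L}(i_{k,r}^{*})$. This single extra logarithm of safety margin---baked into the definition of the truncation---is exactly what makes the telescoping go through. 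Everything else is bookkeeping of the $B = 1/\eps^2$ conversion between per-epoch averages in \eqref{eqn:new-evict-rule} and per-decision-day sums, together with a union bound over the high-probability events in \cref{lem:baseline+k} and the MWU concentration.
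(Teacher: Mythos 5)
Your proposal is correct and follows essentially the same route as the paper: apply the adapted eviction/cover guarantee with threshold $\eps^{k}\log^{2k-1}(nT)$ to the truncated losses, then convert $\widehat{\mathcal{L}}(e_{k,i_{k,r}^{*}})$ back to $\mathcal{L}(i_{k,r}^{*})$ via the truncation definition, the first claim of \cref{lem:baseline+k}, and the induction hypothesis \eqref{eq:induction-hypothesis}, absorbing the $O(\sqrt{\log(nT)/T_{k-1}})\cdot B$ merge error by the choice of $T_{k-1}$ to land on the factor-of-$2$ slack. Your branch-by-branch handling of the $\max$ (after adding $\mathcal{L}_{k,r,t,b}(\textsc{Baseline}_{+}(k-1))$ to both sides) is just a slightly more explicit version of the paper's step that uses the induction hypothesis to dispose of the truncation branch, so there is no substantive difference.
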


\begin{proof}
The first two claims follow exactly from \cref{lem:cover}. For the last claim, first we have
\begin{align}
    \sum_{\nu=\beta(\tau)}^{\beta(\tau +1) -1}\sum_{b=1}^{B}\widehat{\mathcal{L}}_{k,r,t,b}(e_{k, i(t)}) < &~ \sum_{\nu=\beta(\tau)}^{\beta(\tau +1) -1}\sum_{b=1}^{B}\widehat{\mathcal{L}}_{k,r,t,b}(e_{k, i_{k,r}^{*}}) + \eps^{k}\log^{2k-1} (nT) B (\beta(\tau+1) - \beta(\tau))\notag \\
    = &~ \sum_{\nu=\beta(\tau)}^{\beta(\tau +1) -1}\sum_{b=1}^{B}\widehat{\mathcal{L}}_{k,r,t,b}(e_{k, i_{k,r}^{*}}) + \eps^{k-2}\log^{2k-1} (nT) (\beta(\tau+1) - \beta(\tau)), \label{eq:cover-full1}
\end{align}
where the first step follows from the third claim of \cref{lem:cover-full} by replacing $\eps$ with $\eps^{k}\log^{2k-1}(nT)$, and the second step comes from the choice of $B = 1/\eps^2$.

With probability at least $1- 1/\poly(T)$, we have
\begin{align}
    &~ \sum_{b=1}^{B}\widehat{\mathcal{L}}_{k, r, t, b}(e_{k, i_{k,r}^{*}})\notag\\
    = &~ \sum_{b=1}^{B}\max\{\mathcal{L}_{k,r, t, b}(e_{k, i_{k,r}^{*}}) - \mathcal{L}_{k,r, t, b}(\textsc{Baseline}_{+}(k-1)), -\eps^{k-1}\log^{2k-1}(nT)\}\notag\\
    \leq &~ \sum_{b=1}^{B}\max\left\{\mathcal{L}_{k,r,t, b}(i_{k,r}^{*}) - \mathcal{L}_{k,r, t, b}(\textsc{Baseline}_{+}(k-1)) + O\left(\sqrt{\log(nT)/T_{k-1}}\right), -\eps^{k-1}\log^{2k-1}(nT)\right\}\notag\\
    \leq &~ \sum_{b=1}^{B}\mathcal{L}_{k,r,t, b}(i_{k,r}^{*}) - \mathcal{L}_{k,r, t, b}(\textsc{Baseline}_{+}(k-1)) + O(\sqrt{\log (nT)/T_{k-1}})\cdot (1/\eps^2)\notag \\
    \leq &~ \sum_{b=1}^{B}\mathcal{L}_{k,r,t, b}(i_{k,r}^{*}) - \mathcal{L}_{k,r, t, b}(\textsc{Baseline}_{+}(k-1)) + \eps^{k-2}\log^{2k-1} (nT).\label{eq:cover-full2}
\end{align}
The first step follows from the definition, the second step follows from the first claim in \cref{lem:baseline+k}, the third step holds due to the induction hypothesis (\cref{eq:induction-hypothesis}), the last step follows from the choice of $T_{k-1}$.

Combining \cref{eq:cover-full1} and \cref{eq:cover-full2}, we complete the proof.
\end{proof}

We can now wrap up the proof of \cref{prop:regret-full}:

\begin{proof}[Proof of   \cref{prop:regret-full}]

With probability at least $1-1/\poly(T)$, we have
\begin{align*}
    &~ \sum_{t=1}^{\eps n}\sum_{b=1}^{B}\mathcal{L}_{k,r,t, b}\left(e_{k, i_{k,r,t,b}}\right) - \mathcal{L}_{k,r,t, b}(i_{k,r}^{*})\\
    = &~\sum_{t\in[\eps n]\backslash \mathcal{B}_{k,r}}\sum_{b=1}^{B}\left(\mathcal{L}_{k,r,t, b}\left(e_{k, i_{k,r,t,b}}\right) - \mathcal{L}_{k,r,t, b}(i_{k,r}^{*})\right) + \sum_{t\in \mathcal{B}_{k,r}}\sum_{b=1}^{B}\left(\mathcal{L}_{k,r,t, b}\left(e_{k, i_{k,r,t,b}}\right) - \mathcal{L}_{k,r,t, b}(i_{k,r}^{*})\right) \\
    \leq &~ \sum_{t\in[\eps n]\backslash \mathcal{B}_{k,r}}\sum_{b=1}^{B}\left(\mathcal{L}_{k,r,t, b}\left(e_{k,i_{k,r,t,b}}\right) - \mathcal{L}_{k,r,t, b}(i_{k,r}^{*})\right) + O(\eps^2 n\log T) \cdot O(\eps^{k-3}\log^{2k-2} (nT))\\
    \leq &~ \sum_{t\in[\eps n]\backslash \mathcal{B}_{k,r}}\sum_{b=1}^{B}\left(\widehat{\mathcal{L}}_{k,r,t, b}(e_{k, i(t)}) + \mathcal{L}_{k,r,t, b}(\textsc{Baseline}_{+}(k-1)) - \mathcal{L}_{k,r,t, b}(i_{k,r}^{*})\right)\\
    &~ + \frac{1}{4}\eps^{k-2}\log^{2k} (nT)\cdot \eps n + O(\eps^{k-1} n \log^{2k-1} (nT))\\
    = &~ \sum_{\tau}\sum_{t \in [\beta(\tau): \beta(\tau + 1) -1], t\notin \mathcal{B}}\sum_{b=1}^{B}\left(\widehat{\mathcal{L}}_{k,r,t, b}(e_{k, i(t)}) + \mathcal{L}_{k,r,t, b}(\textsc{Baseline}_{+}(k-1)) - \mathcal{L}_{k,r,t, b}(i_{k,r}^{*})\right)\\
    &~ +\frac{1}{2}\eps^{k-1}n\log^{2k}(nT) \\
    \leq &~ \sum_{\tau} 2\eps^{k-2} \log^{2k-1}(nT) (\beta(\tau+1) - \beta(\tau)) + \frac{1}{2}\eps^{k-1}n\log^{2k}(nT) \leq \eps^{k-1}n\log^{2k} (nT).
\end{align*}
The second step follows from \cref{lem:unluck-full}, the third step is via \cref{lem:baseline-epoch} and the fact that $i(t) \in \mathcal{P}_{k, r, t}$ for any $t \in [\eps n]\backslash \mathcal{B}_{k,r}$ (see the second claim of \cref{lem:cover-full}), the fourth step follows from \cref{lem:cover-full}, and the fifth step   from the third claim of \cref{lem:cover-full}.

Hence, the average regret of the $r$-th episode equals $\eps^{k-1}n\log^{2k} (nT)/(\eps n B) = \eps^{k}n\log^{2k} (nT)$. We finish the induction and complete the proof here.
\end{proof}

\subsubsection{Memory}

We bound the memory requirement of $\textsc{Baseline}_{+}(k)$ (for each $k \in [K]$).
The proof essentially inherits from \cref{sec:memory-baseline}, with the key observation that $\widehat{\mathcal{L}}_{k,r,t, b} \in [-\eps^{k-1}\log^{2k-1}(nT), \eps^{k-1}\log^{2k-1}(nT)]$ instead of $[0, 1]^{n}$. 
This allows one to perform a fine-grained division like \cref{eqn:new-evict-rule}. In the remaining of this section, we always condition on the high probability event of the previous section.

The following lemma is similar to \cref{lem:loss-length}.
\begin{lemma}\label{lem:loss-length-full}
    For any level $k \in [2:K]$, episode $r\in T_{k}$ and epoch $t \in [\eps n]$, suppose experts $e_{k, i}, e_{k, j}\in \widetilde{\mathcal{P}}_{k, r, t}$ and $e_{k, j}\succ e_{k, i}$.  Let $\alpha \in (0, 1)$, then at least one of the following must hold:
    \begin{enumerate}[(i)]
        \item $\widehat{\mathcal{L}}_{j,j} \geq  \widehat{\mathcal{L}}_{i,i} + 2\eps^{k-1}\log^{2k-1}(nT)(\eps/2 - \alpha)$;
        \item $|\Gamma_{k, r, t, j}|  \geq \left(1 + \frac{\alpha}{1 - \alpha}\right)|\Gamma_{k,r,t,i}|$.
    \end{enumerate}
    Here $\widehat{\mathcal{L}}_{i,i} = \widehat{\mathcal{L}}_{\Gamma_{k, r, t,i} }(e_{k, i}) $, $\widehat{\mathcal{L}}_{i,j} =\mathcal{L}_{\Gamma_{k,r,t, i} }(e_{k, j}) $ and accordingly $\widehat{\mathcal{L}}_{j,j} =\mathcal{L}_{\Gamma_{k, r, t, j} }(e_{k, j})$
\end{lemma}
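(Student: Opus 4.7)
The goal is to mirror the proof of \cref{lem:loss-length}, accounting for two changes: (a) the eviction rule \eqref{eqn:new-evict-rule} uses threshold $\epsilon^k\log^{2k-1}(nT)$ instead of $\epsilon$, and (b) the truncated loss $\widehat{\mathcal L}$ lives in $[-W, W]$ with $W := \epsilon^{k-1}\log^{2k-1}(nT)$ by \cref{lem:baseline+k}, rather than in $[0,1]$. The cleanest way to reuse the earlier calculation is to work with the shifted loss $\widetilde L := \widehat{\mathcal L} + W \in [0, 2W]$. This shift is invariant under the pairwise comparison in \eqref{eqn:new-evict-rule}, and it renders the contribution from the ``extra'' days nonnegative, so that the base-case decomposition applies verbatim.

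\textbf{Step 1 (failure of the eviction rule).} Since $e_{k,i}$ survives pruning while $e_{k,j}\succ e_{k,i}$ still sits in the pool, \eqref{eqn:new-evict-rule} must fail for the pair $(e_{k,i}, e_{k,j})$. Adding $W$ to both sides yields
\[
    \widetilde L_{i,j} \;>\; \widetilde L_{i,i} + \epsilon W.
\]

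\textbf{Step 2 (decompose $\widetilde L_{j,j}$).} Set $T_2 := |\Gamma_{k,r,t,i}|$, $T_1 := |\Gamma_{k,r,t,j}| - T_2 > 0$, and let $\widetilde L_1$ denote the average shifted truncated loss of $e_{k,j}$ on $\Gamma_{k,r,t,j}\setminus\Gamma_{k,r,t,i}$. Then
\[
    \widetilde L_{j,j} \;=\; \frac{\widetilde L_1\, T_1 + \widetilde L_{i,j}\, T_2}{T_1 + T_2},
\]
with $\widetilde L_1 \geq 0$ and $\widetilde L_{i,j} \leq 2W$ by \cref{lem:baseline+k}.

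\textbf{Step 3 (negate (i), conclude (ii)).} Suppose condition (i) fails, i.e.\ $\widetilde L_{j,j} < \widetilde L_{i,i} + \epsilon W - 2\alpha W$. Dropping $\widetilde L_1 T_1 \geq 0$ from the numerator in Step~2 and then applying Step~1 to replace $\widetilde L_{i,i}$ by $\widetilde L_{i,j} - \epsilon W$ gives
\[
    \frac{\widetilde L_{i,j}\, T_2}{T_1 + T_2} \;<\; \widetilde L_{i,i} + \epsilon W - 2\alpha W \;<\; \widetilde L_{i,j} - 2\alpha W.
\]
Rearranging and using $\widetilde L_{i,j} \leq 2W$ in the denominator yields
\[
    T_1 \;>\; \frac{2\alpha W}{\widetilde L_{i,j} - 2\alpha W}\, T_2 \;\geq\; \frac{\alpha}{1-\alpha}\, T_2,
\]
which is exactly condition (ii).

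The only non-routine bit is the bookkeeping in Step~3: one has to keep the constants $W$, $\epsilon W$, and $2\alpha W$ straight so that the threshold in \eqref{eqn:new-evict-rule} and the width parameter from \cref{lem:baseline+k} line up to produce the factor $\alpha/(1-\alpha)$. Once the shift $\widehat{\mathcal L}\mapsto \widetilde L$ is in place, the chain of inequalities is structurally identical to that of \cref{lem:loss-length}, and no new ideas are needed.
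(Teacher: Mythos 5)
Your proof is correct and follows essentially the same route as the paper: the paper simply rescales the truncated loss by $2\eps^{k-1}\log^{2k-1}(nT)$ so that it lies in $[-1/2,1/2]$, observes that the new eviction threshold becomes $\eps/2$, and invokes \cref{lem:loss-length} as a black box, which is exactly the shift-and-reuse argument you carry out explicitly. Your version just unfolds that reduction (and in doing so handles the shift needed to make the nonnegativity step of \cref{lem:loss-length} apply, which the paper glosses over with ``essentially reduces''), so no new ideas and no gaps.
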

\begin{proof}
Normalizing $\widehat{\mathcal{L}}_{k,r,t, b}$ by a factor of  $2\eps^{k-1}\log^{2k-1}(nT)$, by \cref{lem:baseline+k}, we know that the truncated loss $\widehat{\mathcal{L}}_{k,r,t,b}/2\eps^{k-1}\log^{2k-1}(nT) \in [-1/2,1/2]^n$. The new eviction rule (\cref{eqn:new-evict-rule})   reduces to the old one (\cref{eqn:evict-rule}) with $\eps$ replaced by $\eps/2$. Then we can apply \cref{lem:loss-length} and get the desired.
\end{proof}

Similar to \cref{lem:pool-size},  it follows that  the pool size is small. The proof is analogous to \cref{lem:pool-size}, by applying the same potential function using the conditions from \cref{lem:loss-length-full}.
\begin{lemma}[pool size]
\label{lem:pool-size-full}
   For any level $k \in [K]$, episode $r\in T_{k}$ and epoch $t \in [\eps n]$, the size of the pool $\widetilde{P}_{k,r, t}$ is at most $S_{k} =  O(\eps^{-1}\log T)$.
\end{lemma}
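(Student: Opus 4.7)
The plan is to mirror the potential function argument of \cref{lem:pool-size}, reducing the level-$k$ eviction rule to the base case via the natural normalization of $\widehat{\mathcal L}$. For level $k=1$, the procedure $\textsc{Baseline}_+(1)$ coincides with $\textsc{Baseline}$ (with parameters $T=T_1$, $B=1/\eps^2$, threshold $\eps$), so $S_1 \leq O(\eps^{-1}\log T)$ follows directly from \cref{lem:pool-size}. For levels $k \geq 2$, fix $k,r,t$ and sort the surviving experts in $\widetilde{\mathcal P}_{k,r,t}$ by ascending entry time as $i_1 \prec i_2 \prec \cdots \prec i_{S_k}$, so that $i_{\tau+1}$ entered the pool earlier than $i_\tau$ and $|\Gamma_{k,r,t,i_{\tau+1}}| \geq |\Gamma_{k,r,t,i_\tau}|$.

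Since by \cref{lem:baseline+k} the truncated loss $\widehat{\mathcal L}$ lies in the symmetric band of width $2\eps^{k-1}\log^{2k-1}(nT)$ and the eviction threshold $\eps^k \log^{2k-1}(nT)$ is exactly $\eps/2$ times this width, I would define the normalized potential
$$
\Phi(\tau) \;=\; 2\log|\Gamma_{k,r,t,i_\tau}| \;+\; \frac{\widehat{\mathcal L}_{i_\tau, i_\tau}}{2\eps^{k-1}\log^{2k-1}(nT)}, \qquad \tau \in [S_k].
$$
Then $\Phi(1) \geq -1/2$ and $\Phi(S_k) \leq 2\log T + 1/2$, so bounding $S_k$ reduces to showing that $\Phi$ increases by at least $\eps/2$ at each step. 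If the normalized loss difference between $i_{\tau+1}$ and $i_\tau$ is already $\geq \eps/2$, this is immediate since the length term is non-negative. Otherwise I write the deficit as $\eps/2 - \alpha$ for some $\alpha \in (0,1)$ and invoke \cref{lem:loss-length-full} (with parameter $\alpha$) to conclude $|\Gamma_{k,r,t,i_{\tau+1}}|/|\Gamma_{k,r,t,i_\tau}| \geq 1 + \alpha/(1-\alpha)$, yielding
$$
\Phi(\tau+1) - \Phi(\tau) \;\geq\; 2\log\!\Bigl(1 + \tfrac{\alpha}{1-\alpha}\Bigr) + \Bigl(\tfrac{\eps}{2} - \alpha\Bigr) \;\geq\; \min\!\Bigl\{\tfrac{\alpha}{1-\alpha},\, 2\log 2\Bigr\} + \tfrac{\eps}{2} - \alpha \;\geq\; \tfrac{\eps}{2},
$$
using $\log(1+x) \geq x/2$ for $x \in [0,1]$. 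Dividing the total swing $\Phi(S_k) - \Phi(1) = O(\log T)$ by $\eps/2$ then gives $S_k = O(\eps^{-1}\log T)$.

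The only real obstacle is the boundary check needed to legitimately write the deficit as $\eps/2-\alpha$ with $\alpha \in (0,1)$: this requires that $\widehat{\mathcal L}_{i_\tau,i_\tau}$ is never so close to the positive truncation boundary that no admissible $\alpha$ exists. This mirrors the easy observation in \cref{lem:pool-size} that $L_{i_\tau,i_\tau}+\eps < 1$ for any surviving expert, and here follows from the fact that $i_\tau$ itself survived the eviction rule and therefore was outperformed by any older expert in its window, keeping the normalized value safely below $1/2$. Beyond this bookkeeping, the proof is essentially a transcription of \cref{lem:pool-size} through the normalization, so I do not anticipate further technical difficulty.
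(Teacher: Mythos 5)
Your proposal is correct and follows essentially the same route as the paper, which proves \cref{lem:pool-size-full} exactly by re-running the potential-function argument of \cref{lem:pool-size} with the loss normalized by $2\eps^{k-1}\log^{2k-1}(nT)$ and the dichotomy supplied by \cref{lem:loss-length-full}, so that the threshold becomes $\eps/2$ and the pool size is $O(\eps^{-1}\log T)$. One small wording slip: in your boundary check the surviving expert $i_\tau$ \emph{outperforms} every older expert on its own window (its truncated average loss is smaller by the eviction margin), which is precisely what keeps its normalized value below $1/2-\eps/2$ and makes $\alpha<1$ admissible, mirroring the observation $L_{i_\tau,i_\tau}+\eps<1$ in \cref{lem:pool-size}.
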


Now we can wrap up the memory requirement.
\begin{proposition}[memory bound]
    \label{prop:memory-full}
    At any time during the execution of \textsc{FullAlgo}, the memory usage is at most  $O\left(\frac{1}{\eps^2} \log^4 (nT)\right)$ bits.
\end{proposition}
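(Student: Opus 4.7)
The plan is to lift the single-level analysis of \cref{prop:memory} to the $K$-level hierarchy, using \cref{lem:pool-size-full} as the key structural input. I expect the argument to be essentially bookkeeping: control the memory used at each level, count the number of levels, and verify that nothing blows up across the recursion.

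First, I would analyze the memory at a fixed level $k$. By \cref{lem:pool-size-full}, every pool $\widetilde{\mathcal{P}}_{k,r,t}$ has size at most $S_k = O(\eps^{-1}\log T)$. Exactly as in the proof of \cref{prop:memory}, for each expert $e_{k,j}\in \mathcal{P}_{k,r,t}$ the algorithm only needs to store $\widehat{\mathcal{L}}_{\Gamma_{k,r,t,i}}(e_{k,j})$ for each $e_{k,i}\prec e_{k,j}$, which is sufficient for executing the new eviction rule \eqref{eqn:new-evict-rule}. Each such stored value takes $O(\log(nT))$ bits (the truncated loss has the same bit-precision as the untruncated one, since it is obtained by a single $\max$ against a fixed threshold). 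Together with the MWU weights over the pool, the cache $R_{k,r,t}$ of size $\eps^{-2}$, and the running average-loss counters on the current epoch, level $k$ costs $O(S_k^2 \log(nT))+O(\eps^{-2}\log(nT)) = O(\eps^{-2}\log^{3}(nT))$ bits.

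Next, I would explain why the recursive structure does not multiply this by a large factor. Although $\textsc{Baseline}_{+}(k)$ invokes $\textsc{MergeExp}(k,i)$ for each pool member, every such invocation runs MWU over only two experts---the raw expert $i$ and the single running copy of $\textsc{Baseline}_{+}(k-1)$. Thus for each $e_{k,i}$ in the pool one needs only the two scalar MWU weights, contributing an additive $O(S_k \log T) = O(\eps^{-1}\log^{2} T)$ bits per level, which is absorbed. Crucially, the lower-level instance $\textsc{Baseline}_{+}(k-1)$ is shared among all $\textsc{MergeExp}(k,\cdot)$ at level $k$, because $\textsc{MergeExp}$ only consumes its output decisions, not its internal state; this avoids the exponential blow-up that would otherwise come from a naive recursion.

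Finally, I would sum over levels. The number of levels satisfies $K=\log(T/n)/\log(n/\eps)= O(\log T)$ (recalling $T=n(n/\eps)^{K}$ and $n/\eps\geq 2$), so the total memory is $K\cdot O(\eps^{-2}\log^{3}(nT)) = O(\eps^{-2}\log^{4}(nT))$ bits, matching the claim. The only delicate spot I foresee is justifying the single-shared-instance claim cleanly, since a careless reading of \cref{alg:merge} might suggest one copy of $\textsc{Baseline}_{+}(k-1)$ per pool member; once that is explicitly noted, the rest is a direct level-wise application of the \cref{prop:memory} argument combined with \cref{lem:pool-size-full}.
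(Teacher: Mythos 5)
Your proposal is correct and follows essentially the same route as the paper's proof: bound each level's cost by $O(\eps^{-2}\log^3(nT))$ bits via \cref{lem:pool-size-full} and the \cref{prop:memory} bookkeeping, multiply by the $K=O(\log T)$ levels, and observe that $\textsc{MergeExp}$ adds no significant overhead because the single maintained copy of $\textsc{Baseline}_{+}(k-1)$ is shared and MWU over two experts needs only constant extra state. Your explicit accounting of the per-pool-member $\textsc{MergeExp}$ weights is slightly more careful than the paper's remark that this "does not take extra memory," but the argument is the same.
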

\begin{proof}
At any time step, \textsc{FullAlgo} maintains $\textsc{Baseline}_{+}(k)$ for each $k \in K$ and experts $\{e_{i, k}\}_{i \in [n], k \in [K]}$. For $\textsc{Baseline}_{+}(k)$, by \cref{lem:pool-size-full}, the size of the pool never exceeds $O(\eps^{-1}\log T)$ and therefore it takes $O(\eps^{-2}\log^3 nT)\cdot K = O(\eps^{-2}\log^4 nT)$ bits of memory in total. Note that $\textsc{Baseline}_{+}(k)$ tracks $\{e_{i, k}\}_{i \in [n]}$ instead of the original expert. This does not take extra memory, since we always maintain $\textsc{Baseline}_{+}(k-1)$ and perform MWU on $\textsc{Baseline}_{+}(k-1)$ and expert $i$ does not take extra memory.
\end{proof}

Combining \cref{prop:regret-full} and \cref{prop:memory-full}, we can prove \cref{thm:sublinear}.

\begin{proof}[Proof of \cref{thm:sublinear}]
Taking $\eps^{-1} = n^{-\delta/2}$, by \cref{prop:regret-full}, the memory never exceeds $O(n^{\delta} \log^4(nT))$.
For regret analysis, suppose $n(n/\eps)^{K}\leq T < n(n/\eps)^{K+1}$ for some integer $K\geq 0$.
The \cref{prop:regret-full} states that within $n(n/\eps)^{K+1}$ days, the total regret is at most $O\left(n^{K + 2}\log^{2K+2}(nT)\right) \leq \widetilde{O}\left(n^2 T^{\frac{2}{2+\delta}}\right)$. We conclude the proof here.
\end{proof}

\begin{remark}
Our results extend easily to the case that $T$ is unknown in advance: One can apply the common doubling trick and obtain the same result.
\end{remark}

\section{Lower bound against adaptive adversary}
\label{sec:lower}

We prove no algorithm can achieve sub-linear regret using sub-linear space when facing an adaptive adversary (see  \cref{def:adaptive}).
\begin{theorem}[Lower bound against adaptive adversary]
\label{thm:lower}
Let $n, T > 0$ be sufficiently large, $0 < \eps < 1/40$. Any algorithm that achieves $O(\eps T)$ regret against an adaptive adversary requires at least $\Omega(\min\{\eps^{-1}\log_2 n, n\})$ bits of memory.
\end{theorem}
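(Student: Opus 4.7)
The plan is to exploit the connection between no-regret dynamics and approximate Nash equilibria in zero-sum games, combined with a pigeonhole argument over a family of games with pairwise far-apart equilibrium strategies. The key insight is that on typical rounds, any low-regret algorithm must play a strategy close to uniform on a hidden subset $S$, so its memory state at such a round essentially identifies $S$.

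First, I would set up the reduction from regret to approximate minmax. Given an algorithm $\mathcal{A}$ with regret $\eps T$ against an adaptive adversary using $M$ bits of memory, and any payoff matrix $A \in [0,1]^{n \times n}$, run $\mathcal{A}$ where the adversary plays the best response $q_t \in \arg\max_{q \in \Delta_n} p_t^\top A q$ at each round (legal under \cref{def:adaptive}) and reveals $\ell_t = A q_t$. By the standard Freund-Schapire argument combined with \cref{lem:minmax-thm}, $\bar{p} = \frac{1}{T}\sum_t p_t$ is an $\eps$-approximate minmax strategy. The per-round version additionally shows that the average instantaneous deviation $\frac{1}{T}\sum_t (\max_q p_t^\top A q - v^*(A))$ is at most $\eps$.

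Next, I would construct the hard family. Set $k = \min\{\lfloor 1/(c\eps) \rfloor, n/2\}$ for a sufficiently large constant $c$, and fix a matrix $B \in \{0, 1\}^{k \times k}$ with all row and column sums equal to $k/2$---a generalized matching-pennies game---whose unique minmax strategy is uniform and whose value is $1/2$. For each $S \in \binom{[n]}{k}$, define $G_S \in [0, 1]^{n \times n}$ by $G_S[i,j] = B[\pi_S(i), \pi_S(j)]$ for $i, j \in S$ under a canonical enumeration $\pi_S$, $G_S[i, j] = 1$ when $i \notin S$, and $G_S[i, j] = 0$ when $i \in S$ and $j \notin S$. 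A direct calculation shows the unique minmax strategy of $G_S$ is uniform on $S$ (call it $u_S$) with value $1/2$, and the deviation $\max_q p^\top G_S q - 1/2 \geq c' \|p - u_S\|_1$ for a constant $c' > 0$. Hence if the deviation is $O(\eps)$, then $\|p - u_S\|_1 < 1/(2k)$, so the $k$ largest entries of $p$ are exactly $S$.

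Finally, the counting argument. Sample $S$ uniformly from $\binom{[n]}{k}$ and run $\mathcal{A}$ against the best-response adversary in $G_S$. Fix a random seed $r^*$ for which $\mathcal{A}$ achieves regret $\eps T$ on at least $2/3$ of the $S$'s, making it deterministic; this exists by averaging. For each such $S$, the average per-round deviation is $O(\eps)$, so by Markov at least $3T/4$ rounds are \emph{typical} in the sense that their deviation is $O(\eps)$ and the top-$k$ of $p_t^S$ equals $S$. A double counting over $(S, t)$ pairs produces a single round $t^*$ typical for at least $\Omega(\binom{n}{k})$ of the $S$'s. At round $t^*$, since $r^*$ and $t^*$ are fixed, $p_{t^*}^S$ is a deterministic function of the memory state $M_{t^*}^S$ alone, so the top-$k$ of $p_{t^*}^S$ identifies $S$. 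Thus the map $S \mapsto M_{t^*}^S$ is injective on $\Omega(\binom{n}{k})$ values, forcing $2^M \geq \Omega(\binom{n}{k})$ and $M = \Omega(\log_2 \binom{n}{k}) = \Omega(\min\{\eps^{-1} \log_2 n, n\})$. The main obstacle I anticipate is verifying the linear deviation bound $\max_q p^\top G_S q - 1/2 \geq c'\|p - u_S\|_1$ and calibrating $c$ in $k = \lfloor 1/(c\eps)\rfloor$ so that typical rounds recover $S$ via top-$k$; both reduce to elementary calculations on the matching-pennies construction.
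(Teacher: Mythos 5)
Your overall architecture matches the paper's: run the algorithm against a best-response adversary, use the minmax theorem to conclude the average loss must approach the game value, embed a matching-pennies game on a hidden set $S$ of size $k\approx 1/\eps$, and count memory states. However, the step your whole argument hinges on has a genuine gap. The claimed bound $\max_q p^{\top}G_S q - 1/2 \geq c'\lVert p - u_S\rVert_1$ with a constant $c'$ independent of $k$ is false for every fixed $k\times k$ matrix $B$ with entries in $[0,1]$: by a discrepancy (Spencer-type) argument there is a mean-zero perturbation $\delta$ with entries $\pm 1/k$, so $\lVert\delta\rVert_1=1$ and $u_S+\delta\in\Delta_n$, for which $\max_j \delta^{\top}B e_j = O(1/\sqrt{k})$, forcing $c' = O(1/\sqrt{k})\to 0$. (For an identity-type $B$, the situation is even worse: the uniform distribution on $S\setminus\{i_0\}$ has deviation only $\Theta(1/k^2)\ll\eps$, yet its support is not $S$.) So the inference ``deviation $O(\eps)$ implies $\lVert p-u_S\rVert_1 < 1/(2k)$, hence the top-$k$ of $p_{t^*}$ equals $S$'' does not follow, and without exact recovery of $S$ the map $S\mapsto M_{t^*}^S$ need not be injective, which is precisely what your pigeonhole step requires. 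Whether exact recovery at scale $\eps\sim 1/k$ can be salvaged for a carefully chosen balanced $B$ is unclear; it would at minimum require design-type conditions on pairwise row overlaps and a case analysis over all ways the top-$k$ can differ from $S$, none of which is the ``elementary calculation'' you defer.

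The paper avoids this trap by never asking a memory state to identify $S$. Its \cref{lem:game-loss} proves only an approximate covering bound: any single strategy $p$ can have worst-case loss below $2/k$ (a constant-factor slack above the value $1/k$) for at most $\binom{n}{3k/4}$ sets $S$, because all coordinates of $p$ with mass at least $1/(2k)$ must lie inside $S$ and there must be at least $k/4$ of them. A union bound over the $2^{M}$ memory states then shows that for most games in the family no reachable state is ever $2/k$-good, so the per-round loss is bounded away from the value at every round, contradicting the regret guarantee; no special round $t^*$, seed-fixing, or injectivity is needed. If you want to keep your route, you would need to replace your linear deviation lemma with a covering statement of this approximate kind (bounding, for each state, how many sets $S$ it can serve well) rather than exact identification.
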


Our lower bound construction utilizes the well-established connection of no-regret learning and zero-sum games. We first construct a family of zero-sum games whose equilibria are far apart (\cref{lem:minmax} and  \cref{lem:game-loss}). We then prove it serves as a hard distribution for the online learning task. Throughout the proof, we assume $\eps^{-1} \leq \frac{n}{20\log_2 n}$ and $k = 1/(2\eps)$ is an integer.

\paragraph{Hard distribution} We construct a family of zero-sum games. 
For any set $S \subseteq [n]$ of size $k$, the game matrix $A_{S} \in [0,4]^{n \times n}$ determines the loss of the first player (Alice):
\begin{align*}
    A_{S}[i, j] = \left\{
    \begin{matrix}
    4 & i \notin S\\
    0 & i \in S, i \neq j\\
    1 & i \in S, i = j.
    \end{matrix}
    \right.
\end{align*}
To summarize, Alice receives loss $4$ if she plays any action outside of the support of $S$, and they are strictly dominated by actions in $S$. For any action pair $(i, j) \in S \times S$, the game is constructed as a generalized matching penny game: Alice receives loss $1$ if her action is matched by Bob, and she receives loss $0$ otherwise.
The (hard) distribution $\D_{k}$ is defined as the uniform distribution over the above family of zero-sum games $\{A_{S}\}_{S \subseteq [n], |S| = k}$.

We first make a simple observation about the equilibrium strategy.
\begin{lemma}\label{lem:minmax}
For any $S \subseteq [n]$ of size $k$, the minmax value of game $A_S$ equals $1/k$. Furthermore, Alice's equilibrium strategy is $\frac{1}{k}\cdot \mathbf{1}_{S}$, where $\mathbf{1}_{S}$ is the indicator vector whose $i$-th entry equals $\mathbf{1}\{i \in S\}$.
\end{lemma}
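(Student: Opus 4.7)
My plan is to directly exhibit the equilibrium strategies for both players and verify their guarantees, then invoke the minmax theorem (\cref{lem:minmax-thm}) to conclude. First I would establish the upper bound $\min_x \max_y x^\top A_S y \leq 1/k$ by analyzing $x^* = \frac{1}{k}\mathbf{1}_S$. For any column $j$, I compute $(x^*)^\top A_S e_j = \sum_{i \in S} \frac{1}{k} A_S[i,j]$: if $j \in S$ this equals $\frac{1}{k}$ because only the diagonal entry $i=j$ contributes (the other $i \in S$ with $i \neq j$ have $A_S[i,j]=0$); and if $j \notin S$ it is $0$ for the same reason. Hence Bob's best response to $x^*$ leaves Alice with loss exactly $1/k$.

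Next I would prove the matching maxmin bound by having Bob play $y^* = \frac{1}{k}\mathbf{1}_S$. Against a pure Alice action $e_i$ with $i \notin S$, the loss is $\frac{1}{k}\sum_{j \in S} 4 = 4$ since every entry in row $i$ equals $4$; against $e_i$ with $i \in S$, the loss is $\frac{1}{k}$ by the same calculation as above. Thus Alice's best response against $y^*$ incurs loss $\min\{4, 1/k\} = 1/k$, giving $\max_y \min_x x^\top A_S y \geq 1/k$. Combined with the upper bound and the minmax theorem, the value of $A_S$ equals $1/k$, and $x^*$ is an equilibrium strategy for Alice.

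For uniqueness of Alice's equilibrium strategy, suppose $x \in \Delta_n$ attains $\max_j x^\top A_S e_j \leq 1/k$. Writing $a := \sum_{i \in S} x_i$ and evaluating against $e_j$ for $j \in S$ gives $x^\top A_S e_j = 4(1-a) + x_j \leq 1/k$. Summing this inequality over the $k$ columns $j \in S$ yields $4k(1-a) + a \leq 1$, which rearranges to $(4k-1)(1-a) \leq 0$ and hence $a = 1$, so $x$ must be supported on $S$. Plugging $a=1$ back in gives $x_j \leq 1/k$ for every $j \in S$, and since these $k$ nonnegative weights sum to $1$ they must all equal $1/k$. I do not expect a real obstacle: the only subtle point is leveraging the $4$-penalty on $[n]\setminus S$ to force the support into $S$, which drops out cleanly from the column-sum inequality.
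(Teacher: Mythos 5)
Your proof is correct and follows essentially the same route as the paper, which simply asserts that $\left(\tfrac{1}{k}\mathbf{1}_S, \tfrac{1}{k}\mathbf{1}_S\right)$ is the (unique) equilibrium and reads off the value $1/k$; you carry out the direct verification of both the minmax and maxmin bounds and invoke \cref{lem:minmax-thm}. Your uniqueness argument via the column-sum inequality $(4k-1)(1-a)\leq 0$ is a clean way to make precise the "easy to verify" step the paper leaves implicit.
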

\begin{proof}
It is easy to verify that $(\mathbf{1}_{S}, \mathbf{1}_{S})$ is the unique equilibrium of the game, and by definition, Alice receives $\frac{1}{k}$ loss in the equilibrium.
\end{proof}

We then prove that no single strategy can (approximately) cover the minmax strategy of a large number of games. 
For any strategy $p \in \Delta_n$ and game $A_S$, define $\ell(p, S)$ as the worst case loss received by Alice when playing $p$ in game $A_S$, i.e., $\ell(p, S)= \max_{i\in [n]}p^{\top}A_S \mathbf{1}_{i}$.
\begin{lemma}\label{lem:game-loss}
For any fixed strategy $p \in \Delta_{n}$, there are at most $\binom{n}{3k/4}$ number of sets $S \subseteq [n]$ ($|S| = k$) such that $\ell(p, S) < 2/k$.
\end{lemma}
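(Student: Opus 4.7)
The plan is to first compute $\ell(p,S)$ explicitly, then characterize the family $N(p) := \{S : |S| = k,\; \ell(p,S) < 2/k\}$ by two simple conditions on $p$ and $S$, and finally bound $|N(p)|$ by an intersection-based counting argument.

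First I would read off $\ell(p,S) = \max_{j \in [n]} p^\top A_S \mathbf{1}_j$ from the definition of $A_S$. For $j \notin S$, the $j$-th column of $A_S$ equals $4$ on $[n]\setminus S$ and $0$ on $S$, so $p^\top A_S \mathbf{1}_j = 4\,p([n]\setminus S)$, where I write $p(T) := \sum_{i \in T} p_i$. For $j \in S$, the same column carries an extra $1$ at row $j$, giving $p^\top A_S \mathbf{1}_j = 4\,p([n]\setminus S) + p_j$. Taking the maximum over $j$,
\[
\ell(p,S) \;=\; 4\,p([n]\setminus S) \;+\; \max_{j \in S} p_j.
\]
Hence $\ell(p,S) < 2/k$ is equivalent to the two inequalities (a) $p(S) > 1 - 1/(2k)$ (the bulk of $p$'s mass sits in $S$) and (b) $\max_{j \in S} p_j < 2/k$ (the entries of $p$ indexed by $S$ are all light).

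Next I would show that any two sets in $N(p)$ must overlap substantially. For $S_1, S_2 \in N(p)$, inclusion-exclusion gives
\[
p(S_1 \cap S_2) \;\geq\; p(S_1) + p(S_2) - 1 \;>\; 1 - 1/k,
\]
while condition (b) gives $p_i < 2/k$ for every $i \in S_1$. Combining these, $|S_1 \cap S_2| \cdot (2/k) > 1 - 1/k$, so $|S_1 \cap S_2| > (k-1)/2$, i.e., $|S_1 \cap S_2| \geq k/2$ (absorbing parity of $k$; recall $k = 1/(2\eps)$ is an integer by the assumption at the start of the section).

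Finally, fixing an arbitrary $S_0 \in N(p)$, every $S \in N(p)$ has $|S \setminus S_0| \leq k/2$, and is determined by the pair $(S \cap S_0, S \setminus S_0)$. Stratifying by $j := |S \setminus S_0|$,
\[
|N(p)| \;\leq\; \sum_{j=0}^{k/2} \binom{k}{j}\binom{n-k}{j}.
\]
The remaining task is to verify $\sum_{j=0}^{k/2} \binom{k}{j}\binom{n-k}{j} \leq \binom{n}{3k/4}$. I would handle this with routine estimates: the dominant summand $\binom{k}{k/2}\binom{n-k}{k/2}$ is at most $2^{k}(2en/k)^{k/2}$, whereas $\binom{n}{3k/4} \geq (4n/(3k))^{3k/4}$, so the ratio behaves like $(n/k)^{k/4}$ up to a $C^{k}$ factor, which is comfortably large under the standing hypothesis $\eps^{-1} \leq n/(20\log_2 n)$ (equivalently $n/k \geq 40\log_2 n$).

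The matrix computation and the intersection bound are both one-liners; the main obstacle I anticipate is the final binomial comparison, where one must invoke the regime $n \gg k \log n$ carefully and attend to parity of $k$. A slightly tighter Vandermonde-style bound could remove the $2^{k}$ factor, but is not needed for the stated conclusion.
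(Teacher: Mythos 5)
Your proof is correct, but it follows a different route from the paper's. The paper sorts the coordinates of $p$, defines the ``heavy'' set $I=\{i: p_i\ge 1/2k\}$, and argues that (unless no qualifying $S$ exists at all) every $S$ with $\ell(p,S)<2/k$ must contain $I$ and that $|I|\ge k/4$; the count is then a single binomial coefficient $\binom{n-|I|}{k-|I|}\le\binom{n}{3k/4}$, with no asymptotic estimates needed. You instead compute $\ell(p,S)=4\,p([n]\setminus S)+\max_{j\in S}p_j$ in closed form, deduce that any two qualifying sets satisfy $|S_1\cap S_2|\ge k/2$ (an intersecting-family argument: all qualifying sets concentrate the mass of $p$ and consist of light coordinates), and count sets within Hamming distance $k/2$ of a fixed qualifying $S_0$, which gives $\sum_{j\le k/2}\binom{k}{j}\binom{n-k}{j}$ and then requires the comparison with $\binom{n}{3k/4}$. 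Two remarks. First, your claim that $\ell(p,S)<2/k$ is \emph{equivalent} to conditions (a) and (b) is not quite right (the converse only yields $\ell(p,S)<4/k$), but only the forward implication is used, so nothing breaks. Second, your final binomial inequality is where your argument genuinely leans on the regime: it needs $n/k$ to exceed a moderate absolute constant (absorbing the $C^{k}$ factor), which indeed follows from the standing hypothesis $k\le n/(40\log_2 n)$, $\eps<1/40$ and $n$ sufficiently large, so the step goes through—but it makes the lemma slightly less self-contained than the paper's version, whose common-core counting gives $\binom{n}{3k/4}$ essentially for free. What your approach buys is an explicit formula for $\ell(p,S)$ and a clean structural statement (large pairwise intersections) that is arguably more transparent than the case analysis; what the paper's buys is a tighter, assumption-free count.
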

\begin{proof}
Without loss of generality, we assume $p_1 \geq p_2 \geq \cdots \geq p_{n}$. 

\textbf{Case 1.} Suppose $p_1 \geq 2/k$. Then for any set $S \subseteq [n]$ ($|S| = k$), there are two cases: either (1) if $1 \in S$, then Bob plays action $1$ and Alice receives at least $p_1 \geq 2/k$ loss; or (2) if $1 \notin S$, then Alice receives at least $4 p_1 \geq 8/k$ loss. Therefore, in this case, there is no $S$ satisfies $\ell(p, S) \leq 2/k$.

\textbf{Case 2.} Suppose $p_1 < 2/k$. Let $i^{*} \in [n]$ be the largest integer such that $p_{i^{*}} \geq 1/2k$ and let $I = [i^{*}]$. Note if $p_1 < 1/2k$,   we simply take $I = \emptyset$. 

For any $S$ satisfies $\ell(p, S) < 2/k$, we first prove $I \subseteq S$. Otherwise, suppose that there exists an index $i \in I$ such that $i \notin S$, then by having Bob play action $i$, Alice receives at least $4p_i \geq 2/k$ loss.  

\textbf{Case 2-1.} Suppose $|I| < k/4$, then we claim that there is no set $S$ satisfies $\ell(p, S) < 2/k$. To see this, for any set $S\subseteq [n]$ of size $k$, we note that 
\[
\sum_{i \in S} p_i = \sum_{i \in I}p_i + \sum_{i \in S\backslash I}p_i \leq \frac{k}{4}\cdot \frac{2}{k} + \frac{3k}{4} \cdot\frac{1}{2k}  = \frac{7}{8} ,
\]
where the second step holds as $|I| < k/4$, and for $i \in I, p_i < p_1 = 2/k$, for $i \in S\backslash I$, $p_i < 1/2k$. Hence, we have $\sum_{i \notin S} p_i \geq 1/8$ and Alice receives at least $1/8 \cdot 4 = 1/2$ loss.

\textbf{Case 2-2.} Suppose $|I| > 1/4k$. We have already proved $I \subseteq S$, and therefore, there are at most $3k/4$ indices in $S\backslash I$ and they can be chosen from $[n]\backslash I$, which can be upper bounded by $\binom{n}{3k/4}$.

We conclude the lemma here.
\end{proof}

Now we can prove the lower bound against adaptive adversary. The high-level idea is that one can use a lower memory no-regret algorithm to approximate the minmax value of a zero-sum game (where the opponent always plays the best response). Meanwhile, by a counting argument, the number of different zero sum games constructed above are too large and can not be ``covered'' by a low memory algorithm.

\begin{proof}[Proof of   \cref{thm:lower}]
Let $\ALG$ be any algorithm with asymptotic regret of $O(\eps T)$ when facing a strong adaptive adversary (\cref{def:adaptive}). Suppose $\ALG$ uses $M$ bits of memory and we shall prove $M \geq \Omega\left(\eps^{-1}\log_2 n\right)$ (note we assume $\eps^{-1} \leq \frac{n}{\log_2 n}$ at the very beginning).
Consider the following scenario.

\begin{itemize}
    \item A game  $A_{S}$  is sampled from the distribution $\D_{k}$.
    \item Alice and Bob play the game $A_S$ repeatedly for $T$ rounds. In the $t$-th round ($t\in [T]$)
    \begin{itemize}
        \item Alice consults with $\ALG$ and commits a distribution $p_t \in \Delta_{n}$ over her $n$ actions;
        \item Bob receives $p_t$ and best responds to Alice with $y_t = \arg\max_{i \in [n]} \left(p_t^{\top}A\right)_{i}$.
    \end{itemize}
\end{itemize}

Slightly abuse of notation, we would also use $y_t$ as an indicator vector.
At the end, Alice would achieve $\eps$-approximate to its minmax value:
\begin{align}
    \frac{1}{T}\sum_{t=1}^{T} \langle p_t, A_S y_t\rangle \leq  \frac{1}{T}\min_{i^*}\sum_{t=1}^{T} \langle i^*, A_S y_t\rangle + \eps
    \leq \frac{1}{k} + \eps \leq \frac{3}{2k}. \label{eq:minmax}
\end{align}
The first step follows from the regret guarantee of $\ALG$, the second step follows from the minmax Theorem (\cref{lem:minmax-thm}) and the minmax value of $A_S$ always equals to $1/k$ (\cref{lem:minmax}), the last step follows from the choice of $k$.

We prove by contradiction and assume that $\ALG$ uses at most $M = \frac{1}{10}\eps^{-1}\log_2 n$ bits of memory.
We aim to prove that Alice has loss at least $3/2k$ for every iteration with high probability (over the choice of $S$). The proof is via the following counting argument. 

As the algorithm uses at most $\frac{1}{10}\eps^{-1}\log_2 n$ bits of memory, there are $2^{M} = n^{k/5}$ possible memory states $X$ in total. 
For each state $x \in X$, suppose that the algorithm outputs a strategy $p_x \in \Delta_n$ (note $p_x$ could be a random variable). Define 
\begin{align*}
T_x := \left\{S: \Pr[\ell(p_x, S) < 2/k] \geq 0.1, S\in [n], |S| = k       \right\}.
\end{align*}

By  \cref{lem:game-loss}, we know that $|T_x| \leq 10\cdot \binom{n}{3k/4}$. Taking an union over all states $x\in X$, one has
\[
\left|\bigcup_{x\in X} T_x\right| \leq n^{k/5} \cdot 10\binom{n}{3k/4} \leq \frac{1}{100}\binom{n}{k}.
\]
The last step follows from the choice of parameters.

Hence, we conclude that with probability at least $0.99$, the nature draws a game $A_S$ such that $S \notin \bigcup_{x\in X} T_x$.
This means that in each round of the game, the algorithm receives at least $\frac{1}{10}\frac{1}{k} + \frac{9}{10}\frac{2}{k} = \frac{19}{10k}$ loss in expectation since Bob always plays best response.
 This contradicts with \cref{eq:minmax} and we conclude the proof.
\end{proof}
\section{Conclusion} \label{sec:con}
In this paper, we provide the first sub-linear space online learning algorithm that achieves sub-linear regret when facing an oblivious adversary. 
A separation has also been established between oblivious and strong adaptive adversaries, where a linear memory lower bound is shown to be necessary to achieve sub-linear regret.

Our work opens up a variety of exciting future research directions:
\begin{enumerate}[(i)]
    \item First, an immediately interesting question is to close the gap between the space upper and lower bound.
    \item Second, the adaptive adversary model considered in the paper is relatively strong for some applications, where the adversary only sees the prior decisions but not the mixed strategy of current round. A natural open question is to investigate this model, often called black-box adversary in the adversarially robust streaming literature.
    \item Third, other problem-specific notions of  regret have been studied under different settings, such as dynamic environments   \cite{foster2015adaptive} and game theory \cite{blum2007external,chen2020hedging,anagnostides2022near}. A natural follow-up question is whether sub-linear space is achievable there.
    \item Finally, the MWU algorithm has numerous applications in game theory  \cite{freund1999adaptive} and machine learning (including boosting \cite{freund1997decision} and reinforcement learning \cite{daskalakis2022complexity}).
Our paper opens up opportunities of deriving sub-linear space algorithms for these applications. 
\end{enumerate}

\section*{Acknowledgement}
B.P. and F.Z wish to thank David Woodruff and Samson Zhou for insightful discussions over the project and thank Xi Chen, Jelani Nelson, Christos Papadimitriou, Aviad Rubinstein for helpful comments on early drafts of this paper. 

Fred Zhang is supported by ONR DORECG award N00014-17-1-2127.
Binghui Peng is supported by NSF CCF-1909756, CCF2007443, CCF-2134105, CCF-1703925, IIS-1838154, CCF-2106429 and CCF-2107187.

\clearpage
\bibliographystyle{alpha}
\bibliography{bib}
\end{document}